\definecolor{dark-gray}{gray}{0.20}
\definecolor{gray}{gray}{0.30}
\definecolor{light-gray}{gray}{0.80}
\definecolor{dark-red}{rgb}{0.7,0,0}
\definecolor{dark-green}{rgb}{0.1,0.4,0}
\definecolor{dark-blue}{rgb}{0.3,0.3,0.7}
\definecolor{light-blue}{rgb}{0.8,0.8,1}
\definecolor{blue}{rgb}{0,0,1}
\definecolor{red}{rgb}{1,0,0}
\definecolor{green}{rgb}{0,1,0}
\def\cD{{\cal D}}
\def\cF{{\cal F}}
\def\cM{{\cal M}}
\def\cN{{\cal N}}
\def\cO{{\cal O}}
\def\cV{{\cal V}}
\newtheorem{theorem}{Theorem}[section]
\newtheorem{lemma}[theorem]{Lemma}
\theoremstyle{definition}
\newtheorem{definition}{Definition}[section]
\theoremstyle{remark}
\newtheorem*{remark}{Remark}
\newcommand{\be}{\begin{equation}}
\newcommand{\ee}{\end{equation}}
\newcommand{\ba}{\begin{aligned}}
\newcommand{\ea}{\end{aligned}}
\newcommand{\bea}{\begin{eqnarray}}
\newcommand{\eea}{\end{eqnarray}}
\newcommand{\mathd}{\mathrm{d}}
\newcommand{\mathe}{\mathrm{e}}
\newcommand{\mathi}{\mathrm{i}}
\newcommand{\vol}{\mathrm{vol}}
\newcommand{\e}{\epsilon}
\newcommand{\lam}{\lambda}
\newcommand{\BV}{\mathbb{V}}
\newcommand{\BP}{\mathbb{P}}
\newcommand{\BC}{\mathbb{C}}
\newcommand{\BR}{\mathbb{R}}
\newcommand{\BZ}{\mathbb{Z}}
\newcommand{\WPL}{\mathbb{WP}^1_{a,b}}
\newcommand{\VVV}{\mathrm{vol}}
\newcommand{\NN}{\bar{\nu}}
\newcommand{\AdS}{\mathrm{AdS}}
\title{Constant maps in equivariant topological strings \\ and geometric modeling of fluxes}
\author[a]{Luca Cassia}
\author[b,c]{and Kiril Hristov\note[$\dagger$]{Corresponding author.}}
\affiliation[a]{School of Mathematics and Statistics, The University of Melbourne Parkville,\\ Melbourne, VIC 3010, Australia}
\affiliation[b]{Faculty of Physics, Sofia University ``St.\ Kliment Ohridski'',\\ J. Bourchier Blvd. 5, 1164 Sofia, Bulgaria}
\affiliation[c]{INRNE, Bulgarian Academy of Sciences, Tsarigradsko Chaussee 72, 1784 Sofia, Bulgaria}
\emailAdd{khristov@phys.uni-sofia.bg}
\abstract{
\noindent We study the equivariant generalization of topological strings on toric manifolds, focusing in particular on defining the contributions of constant maps in the genus expansion of the partition function. This approach regularizes the integration over non-compact Calabi-Yau spaces, producing finite results at each order in the expansion, as illustrated by a broad set of explicit examples. Our investigation highlights the geometric modeling of flux compactifications and clarifies the link between the effective supergravity framework and the equivariant topological string formalism, building on recent developments by Martelli and Zaffaroni. We conclude that the connection between topological string theory and supergravity/field theory involves switching between geometric moduli and fluxes, shedding light on the role of ensemble averages in string theory. We propose an \emph{exact non-perturbative} holographic match with the corresponding M2-brane partition functions, which we test perturbatively at all orders in the gauge group rank $N$ in a companion paper. A special case of our proposal for vanishing flux reformulates the Ooguri--Strominger--Vafa conjecture within the equivariant topological string framework.
}
\date{\today}
\begin{document}
\maketitle

\section{Introduction}
Since its emergence as a theory of quantum gravity, string theory has reached significant milestones, particularly in the precise microscopic counting of black hole entropy \cite{Strominger:1996sh, Maldacena:1997de} and the development of the holographic correspondence \cite{Maldacena:1997re}. The field of holography has expanded dramatically, producing (among much else) numerous formal results that focus on the exact computation of field theory partition functions via supersymmetric localization, \cite{Nekrasov:2002qd, Pestun:2007rz, Pestun:2016zxk}, often linked to branes wrapping cycles on curved manifolds. In the bulk, effective supergravity analysis frequently takes center stage \cite{Gauntlett:2003di, Grana:2005jc}, but finite $N$ calculations in field theory are driving us beyond the leading two-derivative approximation. These advancements owe much to the mathematical developments of Duistermaat--Heckman \cite{Duistermaat:1982vw} and
Atiyah--Bott--Berline--Vergne (ABBV) \cite{berline1982classes, Atiyah:1984px} in equivariant localization. Parallel to holography, but evolving somewhat independently due to the direct relation to formal mathematics, the pursuit of exact quantum calculations has fueled major progress in topological string theory, \cite{Witten:1988xj, Dijkgraaf:1990dj, Witten:1992fb, Bershadsky:1993cx}. The development of the topological vertex, \cite{Aganagic:2003db,Aganagic:2003qj}, and its refinement on toric Calabi--Yau manifolds, \cite{Iqbal:2007ii,Huang:2010kf,Krefl:2010fm}, have yielded a solution for all string amplitudes. This was eventually applied holographically, resulting in impressive non-perturbative matches with Chern--Simons-matter theories, \cite{Aharony:2008ug,Kapustin:2009kz,Drukker:2011zy,Hatsuda:2013oxa}. Most recently, lower-dimensional results have sparked interest in the non-perturbative nature of holography and quantum gravity, \cite{Saad:2019lba,Penington:2019npb,Almheiri:2019psf,Penington:2019kki,Almheiri:2019qdq}, and have put forward the question of ensemble averages. The present work, along with a companion paper, \cite{Cassia:2025jkr}, addresses all the aforementioned topics in a novel manner, aiming to open a new avenue for quantum gravity calculations within string and M-theory.

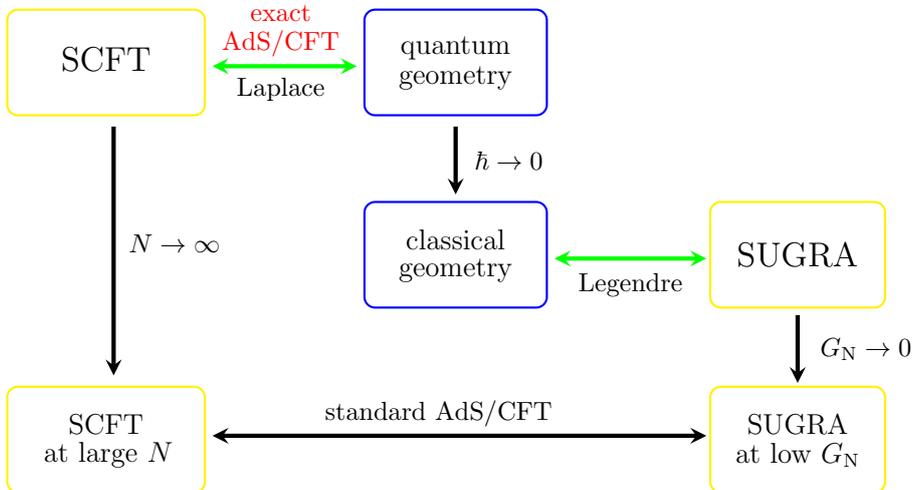
\begin{figure}[ht!]
\hspace{4em}
\begin{tikzpicture}[scale=1, every node/.style={scale=.9}]
	
	\draw[fill = white,thick, rounded corners, draw = yellow] (.1,-.75) rectangle (2.7,.65);
	\node at (1.4,-0){\Large SCFT};
	
	\draw[fill = white,thick, rounded corners, draw = blue] (4.8,-0.75) rectangle (7.2,.65);
	\node at (6,0.15){\large quantum};
	\node at (6,-0.25){\large geometry};

	\draw[fill = white,  thick, rounded corners, draw = yellow] (.1,-5.75) rectangle (2.7,-4.35);
	\node at (1.4,-4.85){\large SCFT};
	\node at (1.43,-5.25){\large at large $N$};

	\draw[fill = white,  thick, rounded corners, draw = yellow] (9.35,-5.75) rectangle (11.65,-4.35);
	\node at (10.5,-4.85){\large SUGRA};
	\node at (10.5,-5.25){\large at low $G_\mathrm{N}$};

	\draw[fill = white,  thick, rounded corners, draw = blue] (4.8,-3.3) rectangle (7.2,-1.9);
	\node at (6,-2.4){\large classical};
	\node at (6,-2.8){\large geometry};

	\draw[fill = white, thick, rounded corners, draw = yellow] (9.35,-3.3) rectangle (11.65,-1.9);
	\node at (10.5,-2.6){\Large SUGRA};

	\draw[->,>=stealth, black, ultra thick] (10.5,-3.4) -- (10.5,-4.3);
	\node at (11.4,-3.85){$G_\mathrm{N} \to 0$};	

	\draw[->,>=stealth, black, ultra thick] (6,-0.9) -- (6,-1.8);
	\node at (6.7,-1.35){$\hbar \to 0$};	

	\draw[->,>=stealth, black, ultra thick] (1.5,-0.9) -- (1.5,-4.2);
	\node at (2.3,-2.45){$N \to \infty$};	

	\draw[<->,>=stealth, black, ultra thick] (2.8,-5) -- (9.3,-5);	
	\node at (5.78,-4.7){standard AdS/CFT};

	\draw[<->,>=stealth, green, ultra thick] (7.3,-2.65) -- (9.3,-2.65);
	\node at (8.3,-3){Legendre};				

	\draw[<->,>=stealth, green, ultra thick] (2.8,-.1) -- (4.7,-.1);
	\node at (3.7,-0.4){Laplace};		
	\node at (3.7,0.6){\color{red} exact};	
	\node at (3.7,0.2){\color{red} AdS/CFT};	

\end{tikzpicture}
\caption{Conceptual sketch of {\it exact} holography from string/M-theory point of view, based on results in \cite{Martelli:2023oqk,Colombo:2023fhu,Cassia:2025jkr} and here. Blue contours signify a ``$\lam$-ensemble'' of equivariant geometric moduli, while yellow correspond to $N$ or $G_{\rm N}$-ensembles in field theory and supergravity, respectively. Green arrows show a change of ensemble, i.e.\ the relation holds via a forward or inverse transform.}
\label{fig:new}
\end{figure}

The connections between these subjects might at first seem intricate, based on the concept of \emph{exact} holography, sketched in Figure~\ref{fig:new}, and the related TS/ST correspondence, \cite{Grassi:2014zfa,Marino:2015nla,Codesido:2015dia}. The core idea that we explore here as a concrete realization of this picture is instead remarkably concrete. We propose that the equivariant generalization of topological string theory, defined on a toric manifold $X$, provides the gravitational (bulk) description of M2-branes on the underlying supersymmetric transverse background $L$. The toric manifolds are geometrically modeled after the underlying BPS backgrounds, in the simplest case as smooth resolutions of the cones over them. We thus claim that the equivariant topological string partition function on $X$ provides the \emph{exact quantum gravity dual} to the corresponding field theory partition function arising in the decoupling limit of M2-branes on $L$, upon a change of ensembles between the redundant K\"ahler moduli $\lam$ on $X$ (see below) and the fluxes $N$ that correspond to the brane charge. With certain modifications, which we outline in the concluding Section~\ref{sec:7}, these statements should apply to any brane system in string/M-theory.

Before quantifying this proposal in Subsection~\ref{sec:1.1} and Figure~\ref{fig:1}, we first give a broader introduction to the main subject of the present paper, equivariant topological string theory. More concretely, as a first step in the present work we focus our attention on the constant maps, i.e.\ the perturbative part of the topological string partition function encoding the information of classical geometry, and provide explicit results for a set of interesting examples of toric manifolds. These examples also serve as a foundation for holographically dual matches with field theory localization results at finite $N$, which we explore in \cite{Cassia:2025jkr}. We have thus divided our work into two parts: the purely mathematical aspects along with the basic holographic setup, presented here; and the applied holographically driven aspects (which for the moment remain less mathematically rigorous), detailed in \cite{Cassia:2025jkr} in relation to M2-brane partition functions.

To introduce the basic idea, we recall the general form of the topological string partition function in the A-model version, \cite{Hori:2003ic,Marino:2005sj}, defined on Calabi--Yau three-folds $X$ (CY$_3$'s) as an expansion of contributions of different genera $\mathfrak{g}$:
\be
\label{eq:1.1}
    F_X (t; g_s) = \sum_{\mathfrak{g} = 0}^\infty g_s^{2\mathfrak{g}-2}\, F_{X, \mathfrak{g}} (t)\ ,
\ee
with $t$ the (complexified) K\"ahler moduli of the manifold $X$, parametrizing the volumes of the two-cycles in $H_2(X)$, and $g_s$ the string coupling constant. Each of the partition functions $F_{X, \mathfrak{g}}$ in turn enjoys an expansion in terms of the so called Gromov--Witten invariants $N^X_{\mathfrak{g}, \beta}$ that in a certain sense count the number of holomorphic maps from a genus-$\mathfrak{g}$ Riemann surface into the manifold $X$,~\footnote{Here, we focus directly on the topological string description through Gromov--Witten theory, rather than the worldsheet perspective. See \cite{Givental:1996equ} for previous work on equivariant Gromov--Witten theory, which however does not explicitly consider the constant maps terms.}
\be
\label{eq:1.2}
 F_{X, \mathfrak{g}} (t) = \sum_\beta N^X_{\mathfrak{g}, \beta} (t)\, Q^\beta\ ,
\ee
where the power of $Q := \mathe^{-t}$ labels the corresponding instanton number (or degree of the map). 

The leading perturbative contributions to the partition function, corresponding to $\beta = 0$ above, are called \emph{constant maps terms} and enjoy some special geometric interpretation. These terms correspond to trivial holomorphic maps in Gromov--Witten theory, and we argue they play a key role in linking topological strings to supersymmetric observables. The genus $0$ constant maps term, $N^X_{0, 0}$, is proportional to the triple intersection numbers, $C_{a b c}$, of $X$ and is a homogeneous cubic function of the K\"ahler moduli, while $N^X_{1, 0}$ is a linear function proportional to the second Chern numbers of the four cycles of $X$, denoted as $b_a$. We come back to the definition of $N^X_{\mathfrak{g}, 0}$ in the next section. More formally, the constant maps terms are determined by the (classical) cohomology ring, while the instanton corrections of the genus $0$ Gromov--Witten invariants measure the deformed cohomology ring, or quantum cohomology.

An important observation is that toric Calabi--Yau manifolds, which are necessarily non-compact, allow for a generalization of the topological string partition function via equivariant cohomology.~\footnote{Note that toric CY's are also a natural setting for defining the \emph{refined} topological string partition function, which we discuss more carefully in \cite{Cassia:2025jkr}.} Crucially, in the non-compact case, the equivariant upgrade regularizes the volumes of the target manifolds which would otherwise be divergent and in turn it allows for well-defined constant maps terms.
The main message here, inspired by the work of Martelli--Zaffaroni and its predecessors, \cite{Martelli:2023oqk,Couzens:2018wnk,Gauntlett:2018dpc,Hosseini:2019use,Hosseini:2019ddy,Gauntlett:2019roi}, is that this has a clear physical meaning. In support of this claim, we can actually consider $X$ to be a K\"ahler manifold of arbitrary dimension $d$, admitting a torus action $\mathbb{T}^n$, with $n$ being the number of homogeneous coordinates. We can thus naturally extend the topological string partition function with a set of equivariant parameters $\e_i$, $i = 1, \dots, n$. They   correspond to weights of the torus action and regularize infinities in non-compact CY volumes, see \cite{Nekrasov:2021ked,Cassia:2022lfj}. This is at the heart of the work in \cite{Martelli:2023oqk}, who considered the equivariant volume of toric manifolds and its holographic relation to various wrapped brane models, see also \cite{Colombo:2023fhu}. Importantly, they make use of a set of \emph{redundant} K\"ahler parameters, $\lam^i$, that replace the \emph{effective} K\"ahler moduli $t^a$ (in the language of \cite{Ruan:2020qua} that we find appropriate to use) which allows us to look at the alternative expansion
\be
\label{eq:1.4}
 F_X (\lam, \e; g_s) = \sum_{\mathfrak{g} = 0}^\infty g_s^{2\mathfrak{g}-2}\,
 F_{X, \mathfrak{g}} (\lam, \e) = \sum_{\mathfrak{g}, \beta} g_s^{2\mathfrak{g}-2}\,
 N^X_{\mathfrak{g}, \beta} (\lam, \e)\, \tilde{Q}^\beta\ ,
\ee
where $\tilde{Q}$ are different coordinates parametrizing the instanton contributions corresponding to exponentials of $\lam$-parameters, analogously to how the $Q$ in \eqref{eq:1.2} parametrize intanton contributions in terms of the effective moduli $t$. As we explain below, using the $\lam$-parametrization is more natural from the point of view of the equivariant intersection numbers and the geometric description of flux compactifications.
Roughly speaking, one should consider the $\lam^i$ as conjugate variables to the redundant equivariant parameters $\e_i$ associated to the non-faithful action of $\mathbb{T}^n$.

An important comment is that the equivariant generalization represents a significant functional change. While the standard volume of a manifold $X$ is of a fixed degree in terms of the Kähler parameters (corresponding to its dimension), the equivariant volume is typically an exponential function with generally non-vanishing expansion coefficients for an infinite set of different degrees in the Kähler moduli. This implies that the equivariant intersection numbers are not restricted to a particular dimension. Consequently, we can (and do) use the definitions of the constant maps in terms of the triple intersection numbers for a toric manifold $X$ of \emph{any} dimension. We argue that this is the relevant construction capturing the dynamics of M2-branes in particular, while other degree intersection numbers should be relevant for the description of other string/M-theory objects, see \cite{Colombo:2023fhu}.

\subsection{Main proposal}
\label{sec:1.1}
Having built up some notation, we now state our main conjecture in terms of explicit, albeit still not fully fleshed out, formulae. We consider resolutions, denoted generally by $X$, of the cone over a $(2d -1)$-dimensional toric Sasakian manifold, $L$, with metric
\be
\label{eq:coneoverL}
 \mathd s^2(X_\circ) = \mathd r^2 + r^2\, \mathd s^2(L)\ ,
\ee
where $r$ is a coordinate on $\BR_+$ and $X_\circ = C(L)$.
The toric Sasakian condition requires that $X_\circ$ is a non-compact toric local Calabi--Yau cone of complex dimension $d$.~\footnote{More precisely, the local Calabi--Yau condition on a non-compact Kähler manifold $X_\circ$ requires that the canonical bundle be holomorphically trivial, equivalently that the first Chern class of the tangent bundle vanish in integral cohomology.
For conical complex geometries $X_\circ = C(L)$, this is equivalent to the Gorenstein condition, namely that the canonical divisor of $X_\circ$ is Cartier. 
The existence of a Sasaki--Einstein metric on $L$ constitutes an additional analytic problem.
As discussed in~\cite{Gauntlett:2006vf}, various obstructions to such metrics can occur in general, but for toric Gorenstein cones it is expected that no such obstructions arise, and a Sasaki--Einstein metric can always be realized by an appropriate choice of Reeb vector~\cite{Martelli:2006yb}.} In the simplest case when $L$ is a round sphere, S$^{2d-1}$, $X_\circ = \BC^d$ and it does not need to be resolved as it is already smooth. In the generic case $X_\circ$ exhibits a conical singularity, and $X$ denotes a smooth resolution of $X_\circ$ preserving the CY condition. The toric assumption means that the Reeb vector field $\xi$ on the space $L$ can be written as
\be
\label{eq:Reeb}
 \xi = 2\pi\sum_{i = 1}^n\, \e_i\, \partial_{\varphi_i}\ ,
\ee
where $\partial_{\varphi_i}$ generate the torus action on $X$. At the same time the parameters $\e_i$ can also be viewed as squashing parameters that parametrize the metric of the compact manifold. We give more details of the relation between the equivariant intersection numbers and the Sasakian volume in Section~\ref{sec:5} based on the work of Martelli--Sparks--Yau and its subsequent generalizations, see \cite{Martelli:2005tp,Butti:2005vn,Butti:2005ps,Martelli:2006yb,Amariti:2011uw,Couzens:2018wnk,Gauntlett:2018dpc,Hosseini:2019use,Hosseini:2019ddy,Gauntlett:2019roi,Gauntlett:2019pqg,Boido:2022mbe}.

We conjecture that the partition functions of M2-branes with exact charge $N_{\rm M2}$ (related to the brane theory gauge group rank $N$ up to a constant shift, see App.~\ref{app:A}), sitting on the tip of the cone $X_\circ$ (such that the space $L$ is the near-horizon geometry), is dictated by the full non-perturbative topological string partition function on $X$. Schematically, the $\lam^i$ parameters are dual to a set of fluxes $N_{{\rm M2}, i}$, such that the M2-brane partition function on the (possibly squashed) three-sphere, S$^3$, \cite{Kapustin:2009kz,Hama:2011ea,Imamura:2011wg}, is the Laplace transform of the topological string partition function,~\footnote{See Section~\ref{sec:7} for the description of more general brane systems within string/M-theory.}
\be
\label{eq:mainconjecture}
	Z^{\rm M2}_L (N_{\rm M2}, \e) = \int \mathd \lam\, \exp \left(  F_X (\lam, \e; g_s) - \lam\, N_{\rm M2}  \right)\ .
\ee
The expressions on the right hand side are not fully detailed here, as we omit certain subtleties that relate to the mesonic twist (to be defined in Section~\ref{sec:2.3}) and the topological string refinement. A detailed companion work, \cite{Cassia:2025jkr}, examines both sides of \eqref{eq:mainconjecture} independently and demonstrates perturbative agreement across a range of non-trivial examples, thereby substantiating our proposal. However, in essence this proposal represents a quantum generalization of the extremization proposal of \cite{Colombo:2023fhu} for the $\lam$ parameters. The extremization corresponds to the saddle point approximation of the integral transform above, including only the leading genus-zero constant maps $N_{0,0}^X$ and the leading M2-brane charge, $N$ (see App.~\ref{app:A} for the difference between $N_{\rm M2}$ and $N$). Consequently, it yields correctly the corresponding large $N$ expression for the partition function $Z_{\rm M2}$ (the right-most part of the holographic map sketched below). We should stress here our 11d point of view, where we regard $g_s$ as a formal expansion parameter that has no stringy or geometric origin, but ensures the convergence of the series. In the holographic match it is therefore allowed to be a numerical constant that fixes the normalization.

In this context, our focus on the definition of equivariant constant maps describes the complete perturbative part of the topological string and the M2-brane partition function,
\be
\label{eq:mainconjecture-pert}
 Z^\text{pert}_L (N_{\rm M2}, \e) = \int \mathd \lam\,
 \exp \left( \sum_{\mathfrak{g}\geq0} g_s^{2\mathfrak{g}-2}\,
 N^X_{\mathfrak{g}, 0} (\lam, \e) - \lam\, N_{\rm M2}  \right)\ .
\ee
We defer the explicit details on the integration and the comparison with available M2-brane localization results to \cite{Cassia:2025jkr}.

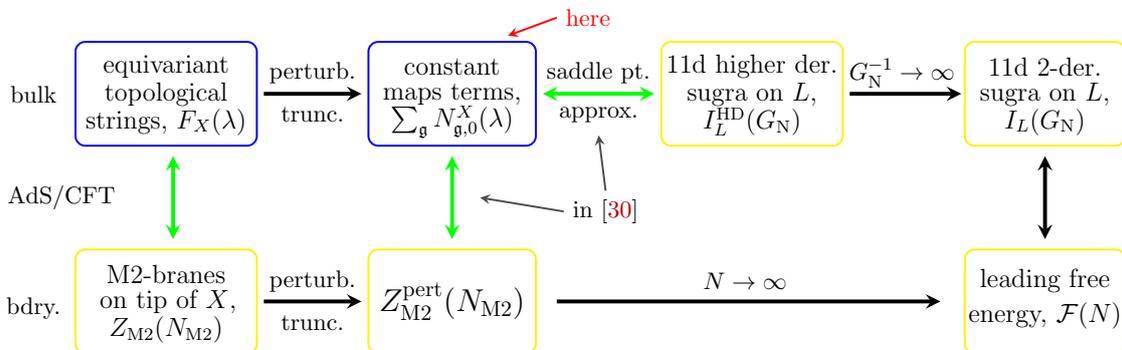
\begin{figure}[ht!]
\centering
\resizebox{\textwidth}{!}{
\begin{tikzpicture}[every node/.style={scale=.9}]
	
	\node at (-2.5,0){bulk};
	
	\node at (-2.5,-3.1){bdry.};

	\draw[fill = white,thick, rounded corners, draw = blue] (-1.9,-.75) rectangle (0.7,.75);
	\node at (-0.6,0.4){\large equivariant};
    \node at (-0.6,0){\large topological};
    \node at (-0.6,-0.4){\large strings, $ F_X (\lam)$};
	
	\draw[fill = white,thick, rounded corners, draw = blue] (2.3,-0.75) rectangle (4.7,.75);
	\node at (3.5,0.4){\large constant};
	\node at (3.5,-0){\large maps terms,};
	\node at (3.5,-0.4){\large $\sum_{\mathfrak{g}}  N^X_{\mathfrak{g}, 0} (\lam) $};

	\draw[fill = white,thick, rounded corners, draw = yellow] (6.5,-.75) rectangle (9.1,.75);
	\node at (7.8,0.4){\large higher der.};
	\node at (7.8,-0){\large sugra on $L$, };
	\node at (7.8,-0.4){\large $I_L^\text{HD} (G_\mathrm{N})$};

	\draw[fill = white,thick, rounded corners, draw = yellow] (10.85,-.75) rectangle (13.15,.75);
	\node at (12,0.4){\large 2-der.};
	\node at (12,-0){\large sugra on $L$, };
	\node at (12,-0.4){\large $I_L (G_\mathrm{N})$};

	\draw[->,>=stealth, black, ultra thick] (.8,0) -- (2.2,0);
	\node at (1.5,0.3){perturb.};
		\node at (1.5,-0.3){trunc.};

	\draw[<->,>=stealth, green, ultra thick] (4.8,0) -- (6.4,0);
	\node at (5.6,0.3){saddle pt.};
	\node at (5.6,-0.3){approx.};

	\draw[->,>=stealth, black, ultra thick] (9.2,0) -- (10.8,0);
	\node at (9.95,0.3){$G_\mathrm{N}^{-1} \to \infty$};

	\draw[fill = white,  thick, rounded corners, draw = yellow] (-1.9,-3.75) rectangle (.7,-2.25);
	\node at (-0.6,-2.6){\large M2-branes};
	\node at (-0.57,-3.0){\large on tip of $X$,};
    \node at (-0.6,-3.4){\large $Z_{\rm M2} (N_{\rm M2})$};	

	\draw[fill = white,  thick, rounded corners, draw = yellow] (2.3,-3.75) rectangle (4.7,-2.25);
	\node at (3.5,-3){\Large $Z^\text{pert}_{\rm M2} (N_{\rm M2})$};

	\draw[fill = white, thick, rounded corners, draw = yellow] (10.85,-3.75) rectangle (13.15,-2.25);
	\node at (12,-2.7){\large leading free};
	\node at (12,-3.2){\large energy, $\cF (N)$};	

	\draw[<->,>=stealth, black, ultra thick] (12,-0.9) -- (12,-2.1);

	\draw[<->,>=stealth, green, ultra thick] (3.5,-0.9) -- (3.5,-2.1);

	\draw[<->,>=stealth, green, ultra thick] (-0.5,-0.9) -- (-0.5,-2.1);
	\node at (-2.1,-1.5){AdS/CFT};

	\draw[->,>=stealth, black, ultra thick] (.8,-3) -- (2.2,-3);	
    \node at (1.5,-2.7){perturb.};
    \node at (1.5,-3.3){trunc.};

	\draw[->,>=stealth, black, ultra thick] (5,-3) -- (10.5,-3);
	\node at (7.7,-2.7){$N \to \infty$};

	\draw[->,>=stealth, red, thick] (5,1.1) -- (4.3,.84);
	\node at (5.45,1.1){{\color{red} here}};	

	\draw[->,>=stealth, gray, thick] (5.7,-1.4) -- (5.6,-0.6);
	\draw[->,>=stealth, gray, thick] (5.1,-1.7) -- (3.9,-1.5);
	\node at (5.7,-1.7){in \cite{Cassia:2025jkr}};

\end{tikzpicture}
}
\caption{Schematic diagram of the relations we propose. Blue contour signifies quantities in the $\lam$-ensemble, while yellow corresponds to the $N$ or $G_\mathrm{N}$-ensemble in field theory and supergravity, respectively ($N$ and $G_\mathrm{N}$ are related directly via AdS/CFT). Green arrows signify a change of ensemble, and we have suppressed additional indices and (equivariant) parameters.}
\label{fig:1}
\end{figure}

As a continuation of the previous relations, again initiated in \cite{Martelli:2023oqk}, we can also describe fibrations over a toric base $\Sigma$, known as a spindle, which can be interpreted as wrapped M2-brane configurations. The branes wrap cycles in the resulting space $M$, and we geometrically model this system via the CY manifold $Y$, 
\be
\label{eq:Sigmafibrations}
\begin{array}{ccc}
 L & \to & M \\
 && \downarrow \\
 && \Sigma
\end{array}
 \qquad \qquad \qquad 
\begin{array}{ccc}
 X & \to & Y \\
 && \downarrow \\
 && \Sigma
\end{array}
\ee
The resulting M2-brane partition function on S$^1 \times \Sigma$ has the interpretation of an index, known as either the twisted or the anti-twisted spindle index depending on the way supersymmetry is preserved, \cite{Benini:2015noa,Inglese:2023wky,Colombo:2024mts}. In both cases we again schematically propose the holographic relation
\be
\label{eq:mainconjecture-pert-sigma}
 Z^\text{pert}_{M} (N_{\rm M2}, \e) = \int \mathd \lam\,
 \exp\left( \sum_{\mathfrak{g}\geq0} g_s^{2\mathfrak{g}-2}\,
 N^{Y}_{\mathfrak{g}, 0} (\lam, \e) - \lam\, N_{\rm M2} \right)\ ,
\ee
which we also discuss further and verify at leading order in \cite{Cassia:2025jkr}.~\footnote{At this point, it might not be obvious that \eqref{eq:mainconjecture-pert-sigma} is on the same footing as \eqref{eq:mainconjecture-pert} from the geometric modeling point of view. We started by defining $X$ as a resolution of $C(L)$, while the fact that $Y$ is a resolution of $C(M)$ follows more subtly from the definition \eqref{eq:Sigmafibrations}. We explore the precise relation between $Y$ and $C(M)$ in Sections~\ref{sec:4} and \ref{sec:5} after defining the mesonic twist in Section~\ref{sec:2.3}.}

A specific corollary of the broader conjecture relates to effective supergravity, which serves as an alternative gravitational description of the same physical system. In fact, one can either consider M2-brane backgrounds within 11d supergravity, as done here in Sec.~\ref{sec:5} and App.~\ref{app:A}, or 4d gauged supergravity arising as a consistent truncation on the manifold $L$, as done in our companion paper, \cite{Cassia:2025jkr}.~\footnote{After the first version of this work, reference \cite{Gautason:2025plx} proposed a further refinement on the supergravity side of the diagram below. The authors of \cite{Gautason:2025plx} argue that in fact 11d supergravity observables naturally come in the $\lam$ (blue) ensemble, while 4d supergravity observables are in the $G_\mathrm{N}$ (yellow) ensemble.}
 For readers primarily focused on the topological string aspects, the emphasis on the constant maps $N^X_{\mathfrak{g}, 0} (\lam, \e)$ from the infinite topological string expansion might seem limited. However, our main objective here and in \cite{Cassia:2025jkr} is to verify as much as possible the perturbative validity of the general proposal, \eqref{eq:mainconjecture-pert} and \eqref{eq:mainconjecture-pert-sigma}. Within this scope, the constant maps provide all the essential information. Specifically, $N^X_{0, 0} (\lam, \e)$ fully captures the two derivative supergravity data (either in 11d or in 4d), as discussed at the level of Sasakian geometry in \cite{Martelli:2005tp,Butti:2005vn,Butti:2005ps,Martelli:2006yb,Amariti:2011uw,Couzens:2018wnk,Gauntlett:2018dpc,Hosseini:2019use,Hosseini:2019ddy,Gauntlett:2019roi,Gauntlett:2019pqg,Boido:2022mbe} and reviewed in Sec.~\ref{sec:5}; while we assert that $N^X_{1, 0} (\lam, \e)$ (and its refined generalization) encompasses the entire spectrum of higher-derivative (HD) corrections within effective 4d supergravity, see \cite{Bobev:2020egg,Bobev:2021oku,Hristov:2021qsw, Hristov:2022lcw}. Alternatively, from 11d perspective, our conjecture is that all HD corrections come into the (refined) genus-one constant maps, except for the topological CS-term explicitly considered in App.~\ref{app:A} responsible for the shift in the exact M2-brane charge, $N_{\rm M2}$.~\footnote{The difference between the 11d and 4d higher-derivative corrections is simply due to the fact that $N_{\rm M2}$, subject to 11d HD corrections, enters in the value of the effective 4d Newton constant, $G_{\rm 4d}$, see Sec.~\ref{sec:5}. Thus only the effective 4d supergravity fits clearly in the yellow ensemble, see the previous footnote.} 
 This approach allows us to determine the complete perturbative behavior of any background that effective supergravity can describe.~\footnote{Relating the higher-genus constant maps terms $N_{\mathfrak{g} > 1,0}$ to supergravity is more subtle and involves a Schiwnger-like calculation, see \cite{Dedushenko:2014nya}.} Note that due to the difference between full integration and saddle point approximation, the HD supergravity terms do not coincide holographically with the perturbative part of the M2-brane partition function, but only to a subset of it. A notable example are the logarithmic corrections, $\log N$, which do follow from the integration of the constant maps but do not correspond to HD corrections in supergravity, see \cite{Bhattacharyya:2012ye,Liu:2017vbl,Hristov:2021zai,Bobev:2023dwx}.

Interestingly, the above conjecture also has implications for string compactifications with vanishing flux, where $N_{\rm M2} = 0$. In such scenarios, from a 11d perspective, $L$ takes the product form S$^1 \times \mathrm{CY}_3$, with a compact CY 3-fold that is non-toric (we are excluding more general $G_2$-holonomy manifolds that preserve less supersymmetry and cannot be described by standard topological strings). However, the cone over S$^1$ corresponds to the complex plane $\BC$, which still allows us to work equivariantly. Given the structure of the full manifold $X$, we find that the constant maps $N^X_{0, 0} (\lam, \e)$ and $N^X_{1, 0} (\lam, \e)$ in this case are actually independent of $\lam$, and the proposed general relations \eqref{eq:mainconjecture-pert} and \eqref{eq:mainconjecture-pert-sigma} no longer involve an integration. However, these relations are still non-trivial, as we find that the content of \eqref{eq:mainconjecture-pert-sigma} is a reformulation of the Ooguri--Strominger--Vafa (OSV) conjecture, \cite{Ooguri:2004zv}, concerning the microscopic entropy of Maldacena--Strominger--Witten black holes \cite{Maldacena:1997de}. Consequently, we propose a derivation of the OSV conjecture in the toric direction and rephrase the remaining non-toric part of the conjecture in a more natural form, as discussed in Section~\ref{sec:6}.

Finally, we note in passing that our proposal has some interesting conceptual consequences for the discussion on ensemble averages in quantum gravity, see e.g.\ \cite{Belin:2023efa} for an introduction. Although at low energies the standard holographic dictionary relates field theory and supergravity in the same ensembles, the topological string partition function is naturally in a different ensemble as it depends on the $\lam$-parameters. Therefore the full non-perturbative holographic dictionary, which kicks-in at higher order of precision, does require a change of ensemble. From the field theory point of view, we then need to sum over different sectors at fixed value of $N_{\rm M2}$,
\be
 F_X (\lam, \e; g_s) = \log \left( \sum_{N_{\rm M2}}\,
 \mathe^{\lam\, N_{\rm M2}}\,  Z^{\rm M2}_L (N_{\rm M2}, \e)  \right)\ ,
\ee
which formally represents the inverse Laplace transform of \eqref{eq:mainconjecture}. 

\subsection*{Outline of the paper}
The rest of this paper is organized as follows. In Section~\ref{sec:2}, we give a brief overview of the topological string partition function and focus on defining the equivariant upgrade of the constant maps terms. In Sections~\ref{sec:3}, \ref{sec:3.5}, and \ref{sec:4}, we provide a comprehensive set of toric examples to build a more hands-on intuition on the subject. While many of these examples have appeared in various forms in other references, our aim is to consolidate them into a uniform notation and focus on some key points. In Section~\ref{sec:5}, we elaborate further on the relation between the toric manifolds and flux compactifications in string/M-theory.  We consider the specific case of vanishing flux backgrounds in Section~\ref{sec:6}, which provide a special example that completes the results in Sections~\ref{sec:3}--\ref{sec:4} and relates to the OSV conjecture. We conclude with some general remarks and open directions in Section~\ref{sec:7}. In Appendix~\ref{app:A} (to be read after Section~\ref{sec:5}) we calculate the exact M2-brane charge, denoted above by $N_{\rm M2}$. Finally, some additional toric fibration examples are relegated to Appendix~\ref{app:B} as they have no direct flux compactification relevance.

\subsection*{Note for mathematicians}
If one is interested in the purely mathematical side of this work, we have structured it such that all formal definitions can be found in Section~\ref{sec:2}, while explicit examples can be found in Sections~\ref{sec:3}--\ref{sec:4}. The rest of the paper contains applied string theoretic discussions that provide physical motivation.

\section{Equivariant constant maps}
\label{sec:2}

\subsection{Topological strings and intersection numbers}
Let $\phi_a\in H^2(X)$ be a basis of the degree-2 cohomology of $X$, for $X$ a complex (Kähler) $d$-fold which we assume to be compact for the moment.
We will also assume that the cohomology ring of $X$ is a polynomial ring in the classes $\phi_a$ subject to some relations (i.e.\ there is no torsion and odd-cohomology is trivial). In particular, we can decompose the Kähler form $\omega_X \equiv \omega$ over this basis of degree-2 classes as
\be
\label{eq:2.1}
 \omega = \sum_a \phi_a t^a
\ee
where $t^a$ are regarded as (real) numbers parametrizing the choice of Kähler structure in $H^2(X,\BR)$. In this sense, we can think of $t^a$ as Kähler moduli (coordinates on the moduli space of Kähler structures).

Following \cite{Iqbal:2007ii}, the general form of the topological string amplitudes $F_\mathfrak{g}(t)$, cf.\ \eqref{eq:1.1}, is given in the A-twisted model by the following integrals,~\footnote{Here and in the remainder of the paper we suppress the implicit label $X$ in the topological string amplitudes, which served in the introductory section to emphasize the dependence on the manifold, cf.\ \eqref{eq:1.1}-\eqref{eq:1.4}. We however keep the label $X$ for the volumes and equivariant volumes in the main parts of this section.}
\be
\label{eq:2.2}
\begin{aligned}
 F_0(t) &= \frac{1}{3!}\int_X\omega^3
 +\sum_{\beta\in H_2(X,\BZ)}\,N_{0,\beta}\,\mathe^{-\int_\beta\omega}\,,\\
 F_1(t) &=-\frac{1}{24}\int_X\omega\cup c_2(TX)
 +\sum_{\beta\in H_2(X,\BZ)}\,N_{1,\beta}\,\mathe^{-\int_\beta\omega},\\
 F_{\mathfrak{g}\geq2}(t) &= \left( \frac{(-1)^{\mathfrak{g}}}2 \int_{\overline{\cM}_\mathfrak{g}}
 c_{\mathfrak{g}-1}^3\right) \int_X c_3(TX)+\sum_{\beta\in H_2(X,\BZ)}\,N_{\mathfrak{g},\beta}\,
 \mathe^{-\int_\beta\omega}\,,
\end{aligned}
\ee
where $N_{\mathfrak{g},\beta}$ are the genuine Gromov--Witten invariants in genus $\mathfrak{g}$ and degree $\beta$, cf.\ \eqref{eq:1.2}, $\overline{\cM}_\mathfrak{g}$ is the moduli space of genus $\mathfrak{g}$ Riemann surfaces and $c_{\mathfrak{g}-1}$ is the Chern class of the Hodge bundle over $\overline{\cM}_\mathfrak{g}$. From here on, we focus the discussion on the constant maps, which are the first terms in the expressions above, also denoted $N_{\mathfrak{g}, 0}$.

Since any cohomology class in $H^\ast(X)$ can be written as a polynomial in the classes $\phi_a$, we can define topological invariants of $X$ by pairing all possible degree-$n$ polynomials with the fundamental class of $X$. We can then define numbers
\be
 C_{a_1,\dots,a_n} := \int_X \phi_{a_1}\cup\dots\cup\phi_{a_n}\ ,
\ee
which we can also interpret as intersection numbers of divisors in $X$ as follows. First, observe that to a degree-2 cohomology class $\phi_a$ we can associate a line bundle $L_a$ such that $c_1(L_a) = \phi_a$. Then one can associate a divisor $D_a$ in $X$ to each such line bundle, as the (homology class of the) zero locus of a generic section of $L_a$.
Then we have the identification
\be
 C_{a_1,\dots,a_n} = D_{a_1} \cap \dots\cap D_{a_n}\ ,
\ee
where on the r.h.s.\ we have the intersection number of the divisors $D_{a_i}$. Roughly speaking, one should regard $\phi_a$ as the Poincar\'e dual class to the divisor $D_a$.

Now we can package all of this information about intersection numbers into a generating function. To do so, we define the function
\be
\label{eq:intnum}
\begin{aligned}
 \VVV_X(t) :=& \sum_{a_1, \dots, a_d} \frac{1}{d!} C_{a_1,\dots,a_d}\, t^{a_1}\dots t^{a_d} \\
 =& \sum_{a_1, \dots, a_d} \frac{1}{d!} t^{a_1}\dots t^{a_d}
 \int_X \phi_{a_1}\cup\dots\cup\phi_{a_d} \\
 =& \int_X\frac{\omega^d}{d!}
 = \int_X \mathrm{e}^{\omega}\ ,
\end{aligned}
\ee
where in the last step we used that, from dimensional reasons, the integral over $X$ selects the order $d$ term in the expansion of the exponential of the Kähler class. It follows that this function corresponds to the symplectic volume of the manifold, which we also denote by $\VVV_X (t)$.
By construction then, the function $\VVV_X(t)$ satisfies the identity
\be
\label{eq:importantidentity}
 C_{a_1,\dots,a_d} = \frac{\partial^d \VVV_X(t)}{\partial t^{a_1}\dots \partial t^{a_d}}
 = \frac{\partial^d \VVV_X(t)}{\partial t^{a_1}\dots \partial t^{a_d}} \Big|_{t = 0}\ ,
\ee
with the latter equality, here trivial, written for a simpler comparison with the forthcoming equivariant generalization.

In the case when $X$ is a 3-fold, we obtain that $\VVV_X(t)$ is a polynomial of degree 3 in the Kähler moduli, which correspond to the genus-zero constant-maps term in the topological string free energy, cf.\ \eqref{eq:2.2},~\footnote{The constant maps contributions to the topological string free energy are sometimes also denoted as $F_{\mathfrak{g}}^{\rm const}(t)$.}
\be
\label{eq:g0constantmap}
	N_{0, 0} (t) = \VVV_X(t) = \frac{1}{3!}\, \sum_{a, b, c} C_{a,b,c}\, t^a t^b t^c\ .
\ee

The constant-maps term in the genus-one free energy of the topological string on a three-fold $X$ is instead given by, cf.\ \eqref{eq:2.2},
\be
\label{eq:F1int}
 N_{1, 0} (t) = \frac1{24}\int_X c_2(TX) \cup \omega
 = \frac1{24} \sum_{a} t^a \int_X c_2(TX) \cup \phi_a\ .
\ee
If we similarly expand the second Chern class $c_2(TX)$ as a degree-2 homogeneous polynomial in $\phi_a$, we can write
\be
\label{eq:chern2}
 c_2(TX)
 = \frac12 \sum_{a, b} c_2^{(a,b)}\, \phi_a\cup\phi_b\ ,
\ee
where $ c_2^{(a,b)}$ are the numerical coefficients of the expansion.
One can then define the invariants
\be
\label{eq:2ndchernnumbers}
 b_a := \int_X c_2(TX) \cup \phi_a = \frac12 \sum_{b, c} C_{a,b,c}\, c_2^{(b,c)}\ ,
\ee 
which imply that
\be
 N_{1, 0} (t)  = \frac1{24}\sum_a b_a\, t^a\ .
\ee
The integral in \eqref{eq:F1int} can therefore be regarded as the generating function of the $b_a$'s.

The higher genus constant maps, defined in \eqref{eq:2.2}, can be also rewritten in the following form, \cite{Marino:1998pg,Gopakumar:1998ii,Faber:1998gsw},
\be
\label{eq:hodge-integrals}
 N_{\mathfrak{g}\geq 2,0} (t) = N_{\mathfrak{g}\geq 2, 0}
 = (-1)^\mathfrak{g}\frac{\chi(X)}{2}\,
 \frac{|B_{2\mathfrak{g}}|}{2\mathfrak{g}} \frac{|B_{2\mathfrak{g}-2}|}{(2\mathfrak{g}-2)}
 \frac{1}{(2\mathfrak{g}-2)!}\ ,
\ee
using the Bernoulli numbers $B_n$. Importantly, the higher genus constant maps are independent of the K\"ahler moduli $t^a$, which will be important to keep in mind for the equivariant upgrade.

\subsection{Equivariant volumes, toric geometry, and generating functions}

The formula for the symplectic volume in \eqref{eq:intnum} admits a natural equivariant upgrade if we substitute the Kähler class by its equivariant completion. Namely, let $T$ be a torus of rank $n$ acting on $X$ with moment map $H:X\to\mathrm{Lie}(T)^\ast$, and let $\rho_i$ be a basis for $\mathrm{Lie}(T)^\ast$ so that $H=\rho_iH^i$.
The equivariant completion of the symplectic form $\omega$ is the combination $\omega-\e_i H^i$, with $\e_i\in H^2_T(\mathrm{pt})$ being the generators of the $T$-equivariant cohomology of the point. The equivariant upgrade of the symplectic form is closed w.r.t.\ the equivariant differential $\mathd+\e_i\iota_{v^i}$ for vector fields $v^i$ generators of the flow along the 1-parameter subgroups of $T$ associated to the generators $\rho_i$.
Then we can define the equivariant (symplectic) volume of $X$ as the integral
\be
\label{eq:equivol}
 \VVV_X (t,\e) := \int_X \mathe^{\omega-\e_iH^i} \ .
\ee
We observe that the equivariant volume is a function of the equivariant parameters $\e_i$ and it is no longer polynomial in the variables $t^a$. If the manifold $X$ is compact, then the equivariant volume is a regular function around $\e_i = 0$, such that $\vol_X (t, 0)$ is simply the symplectic volume $\VVV_X(t)$ that generates the intersection numbers, as in \eqref{eq:intnum}. If $X$ is non-compact, then the $\e_i\to 0$ limit is not smooth as the manifold does not possess a finite volume. In this sense, one can think of the equivariant generalization as a way of regularizing the volume of non-compact manifolds, as heavily explored in the physics literature starting with \cite{Nekrasov:2002qd}.

In the toric case, $X$ can be described as a K\"ahler quotient of a linear space $\BC^n$ by a torus $U(1)^r$, such that the complex dimension of $X$ is given by $d = n - r$. The action of this torus is described in terms of a matrix of integer charges $Q_i^a$ with $i=1,\dots,n$ and $a=1,\dots,r$, corresponding to an embedding of $U(1)^r$ into the larger torus $U(1)^n$ acting diagonally on the homogeneous coordinates $z_i$ on $\BC^n$. The associated moment map $\mu(z)=\rho_a Q_i^a |z_i|^2$ is the composition of the map $Q$ and the moment map $H=\rho_i H^i=\rho_i|z_i|^2$ for the $U(1)^n$-action, where $\rho_a$ and $\rho_i$ are a basis of $\mathrm{Lie}(U(1)^r)^\ast\cong\BR^r$ and $\mathrm{Lie}(U(1)^n)^\ast\cong\BR^n$, respectively.

The (ordinary) cohomology of the quotient space $X:=\BC^n//U(1)^r=\mu^{-1}(t)/U(1)^r$ is generated by tautological classes $\phi_a\in H^\ast(X)$, one for each of the $U(1)$ factors in the torus $U(1)^r$.
Moreover, since the original $U(1)^n$-action on $\BC^n$ descends to a (not necessarily faithful) action on the quotient space $X$, we can also consider the $U(1)^n$-equivariant cohomology $H^\ast_{U(1)^n}(X)$ as a module over the $U(1)^n$-equivariant cohomology of the point $H^\ast_{U(1)^n}(\mathrm{pt})\cong\BC[\e_1,\dots,\e_n]$, which is isomorphic to a polynomial ring in the equivariant parameters $\e_i$.

The Chern classes dual to the toric divisors $D_i=(\{z_i=0\}\cap\mu^{-1}(t))/U(1)^r\subset X$ admit an equivariant upgrade as equivariant Chern classes
\be
\label{eq:equiv-chern-roots}
 x_i := \e_i+\phi_aQ^a_i \in H^\ast_{U(1)^n}(X)\ .
\ee
Equivariant localization à la Jeffrey--Kirwan (JK) then allows to compute the equivariant volume in \eqref{eq:equivol} via a residue integral of the form
\be
\label{eq:altequivol}
 \VVV_X(t,\e) = \oint_\text{JK} \prod_{a=1}^r\frac{\mathd\phi_a}{2\pi\mathi}
 \frac{\mathe^{\phi_a t^a}}{\prod_{i=1}^n x_i}\ ,
\ee
where the choice of contour depends on the \emph{chamber} for the K\"ahler parameters $t^a$, also known as \emph{phase} of the symplectic quotient \cite{Goldin:2003dis}.
Evaluation of the residues at each of the JK-poles allows to recover the contributions of each fixed-point as in the ABBV localization formula.

Again, we can regard $\VVV_X(t,\e)$ as a generating function of certain geometric quantities, namely, the equivariant integrals over $X$ of all polynomial functions in the tautological classes $\phi_a$.
An alternative way to construct a generating function of equivariant integrals is as follows.
\begin{definition} Let $X$ be a toric quotient as before, with equivariant Chern roots $x_i$ as in \eqref{eq:equiv-chern-roots}. Let $\lam^i$ be a set of $n$ formal variables conjugate to the classes $x_i$, then we define the generating function of \emph{equivariant intersection numbers} as
\be
\label{eq:identity-tlam}
 \BV_X(\lam,\e) := \oint_\text{JK} \prod_{a=1}^r\frac{\mathd\phi_a}{2\pi\mathi}
 \frac{\mathe^{x_i\lam^i}}{\prod_{i=1}^n x_i}
\ee
which we regard as a formal power series in the $\lam^i$'s with coefficients in the field of fractions of the $U(1)^n$-equivariant cohomology ring of the point.
\end{definition}
\begin{remark}
The formal parameters $\lam^i$'s are not directly related to the K\"ahler moduli of $X$, in fact one should think of them as conjugate variables to all of the isometries of the prequotient $\BC^n$, including those that are not gauged when reducing to $X$. In this sense, the $\lam^i$'s can be regarded as \emph{redundant K\"ahler parameters}, the same way that the $\e_i$'s can be regarded as redundant equivariant parameters.%
\footnote{Observe that the same kind of redundant K\"ahler parameters appeared before in the literature in \cite{Ruan:2020qua}, for instance, although in the context of (quantum) K-theory. A similar redundant parametrization was also employed in the definition of the equivariant volume in \cite{Martelli:2023oqk}.}
\end{remark}
In particular, the coefficients in the $\lam^i$ expansion encode all equivariant integrals of polynomials in the equivariant Chern classes $x_i$. For this reason, we define \emph{divisors operators}
\be
\label{eq:divisor-operator}
 \cD_i:=\e_i+Q_i^a\frac{\partial}{\partial t^a}\,.
\ee
such that they produce inserions of classes $x_i$ when acting on the equivariant volume $\VVV_X(t,\e)$. Similarly, we can produce the same insertions of classes $x_i$'s in the function $\BV_X(\lam,\e)$ if we act with the operators $\frac{\partial}{\partial\lam^i}$.
We then have the identity
\be
\label{eq:divisors-on-C}
 \left.\cD_i \VVV_X(t,\e)\right|_{t=0}
 = \left.\frac{\partial}{\partial\lam^i} \BV_X(\lam,\e)\right|_{\lam=0}
\ee
and more generally we have the following.
\begin{lemma}
The equivariant volume $\VVV_X(t,\e)$ and the generating function $\BV_X(\lam,\e)$ are related by the identity
\be
\label{eq:BVtoVol}
 \left.\mathe^{\e_i\lam^i}\VVV_X(t,\e)\right|_{t^a=Q_i^a\lam^i} = \BV_X(\lam,\e)
\ee
\end{lemma}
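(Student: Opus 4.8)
The plan is to prove the identity by direct substitution into the residue representation \eqref{eq:identity-tlam}, exploiting the fact that the exponent defining $\BV_X$ splits cleanly into a piece independent of the integration variables $\phi_a$ and a piece that reproduces the exponent of the JK integral \eqref{eq:altequivol} defining $\VVV_X$. Concretely, I would start from the definition \eqref{eq:identity-tlam} and expand the equivariant Chern roots using \eqref{eq:equiv-chern-roots}, so that the exponent becomes
\be
 x_i\lam^i = \e_i\lam^i + Q_i^a\,\phi_a\,\lam^i\ .
\ee
The first term $\e_i\lam^i$ carries no dependence on $\phi_a$, and therefore the factor $\mathe^{\e_i\lam^i}$ can be pulled out of the JK integral as an overall multiplicative constant.

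The second step is to recognize the remaining $\phi$-dependent exponential. Upon identifying $t^a = Q_i^a\lam^i$, the surviving factor becomes $\mathe^{Q_i^a\phi_a\lam^i} = \mathe^{\phi_a t^a}$, which is precisely the numerator appearing in \eqref{eq:altequivol}. Since the denominator $\prod_i x_i$ and the measure $\prod_a \mathd\phi_a/(2\pi\mathi)$ are identical in both integrals, the residue integral that remains after extracting $\mathe^{\e_i\lam^i}$ is exactly $\VVV_X(t,\e)$ evaluated at $t^a = Q_i^a\lam^i$. Assembling these two observations yields the claimed identity \eqref{eq:BVtoVol}. As a consistency check, I would verify that applying $\partial/\partial\lam^i$ to both sides and setting the variables to zero reproduces the insertion of a single class $x_i$ via the divisor operators \eqref{eq:divisor-operator}, recovering \eqref{eq:divisors-on-C} as the first-order truncation of the full statement.

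The one point that genuinely requires care — and which I expect to be the main (if modest) obstacle — is the consistency of the Jeffrey--Kirwan prescription under the substitution. The contour in \eqref{eq:altequivol} is fixed by the chamber of the Kähler parameters $t^a$, whereas \eqref{eq:identity-tlam} is defined in terms of the $\lam^i$; one must verify that the JK chamber selected by $t^a$ coincides with the one induced by setting $t^a = Q_i^a\lam^i$, so that the same collection of fixed-point residues contributes on both sides. Because the extracted factor $\mathe^{\e_i\lam^i}$ is pole-free and does not affect the location or residues of the poles, and because the substitution is linear in the $\lam^i$, this reduces to checking that the two sides are compared within the same phase of the symplectic quotient. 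Once this bookkeeping is in place, the equality holds termwise as an identity of formal power series in the $\lam^i$, and no analytic subtleties arise beyond those already implicit in the definitions of $\VVV_X$ and $\BV_X$.
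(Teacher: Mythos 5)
Your proof is correct and is exactly the argument the paper intends — the paper's own proof consists of the single line ``this follows from definitions \eqref{eq:altequivol}, \eqref{eq:identity-tlam} and \eqref{eq:equiv-chern-roots},'' and your computation (split $x_i\lam^i=\e_i\lam^i+\phi_aQ_i^a\lam^i$, pull the $\phi$-independent exponential out of the JK integral, identify the remainder with $\mathe^{\phi_a t^a}$ at $t^a=Q_i^a\lam^i$) is precisely the unpacking of that line. Your remark about checking that the JK chamber is the same on both sides is a sensible extra care point that the paper leaves implicit.
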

\begin{proof}
This follows from definitions \eqref{eq:altequivol}, \eqref{eq:identity-tlam} and \eqref{eq:equiv-chern-roots}.
\end{proof}
Compared to $\VVV(t,\e)$, the function $\BV(\lam,\e)$ does not depend on all of the parameters $\e_i$ but rather on $d = n-r$ independent linear combinations of them. This follows by observing the following.
\begin{lemma}
\label{lem:shift-eps}
For $\gamma\in\BR^n$ a vector in the image of the map $Q:\BR^r\to\BR^n$,
i.e.\ $\gamma_i=\varphi_a Q_i^a$ for some arbitrary parameters $\varphi_a$, we have
\be
\label{eq:e+gamma}
 \BV_X(\lam,\e+\gamma) = \BV_X(\lam,\e)\,.
\ee
Equivalently, if we take the limit where all $\varphi_a$'s go to zero, we can rephrase the identity \eqref{eq:e+gamma} in differential form as
\be
\label{eq:Qde}
 Q_i^a\frac{\partial}{\partial\e_i} \BV_X(\lam,\e) = 0\,.
\ee
\end{lemma}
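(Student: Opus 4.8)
The plan is to exploit the fact that in the defining residue integral \eqref{eq:identity-tlam} the equivariant parameters $\e_i$ enter \emph{only} through the Chern roots $x_i = \e_i + \phi_a Q_i^a$ of \eqref{eq:equiv-chern-roots}. I would first note that shifting $\e_i \mapsto \e_i + \gamma_i$ with $\gamma_i = \varphi_a Q_i^a$ acts on every Chern root as
\be
 x_i \;\longmapsto\; \e_i + \varphi_a Q_i^a + \phi_a Q_i^a
 = \e_i + (\phi_a + \varphi_a)\, Q_i^a\,,
\ee
which is indistinguishable from replacing the integration variables $\phi_a$ by $\phi_a + \varphi_a$. Since the entire integrand $\mathe^{x_i\lam^i}/\prod_{i=1}^n x_i$ is assembled out of the $x_i$ alone, a constant translation of $\e$ along the image of $Q$ is absorbed by a constant translation of the $\phi_a$.

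Next I would change variables $\phi_a' := \phi_a + \varphi_a$. The holomorphic measure is translation invariant, $\mathd\phi_a = \mathd\phi_a'$, so after relabelling the integrand is literally that of $\BV_X(\lam,\e)$. The one point requiring care --- and the step I expect to be the main obstacle --- is showing that the Jeffrey--Kirwan contour is mapped to itself. The JK residue is fixed by the hyperplane arrangement $\{x_i = 0\}$ together with the chamber selected by the resolution $X$, and it picks out those intersection points whose charge covectors $Q_i^a$ satisfy the JK cone condition. A constant shift of all $\phi_a$ translates these hyperplanes rigidly without altering their normals $Q_i^a$ or the chamber, inducing a bijection of contributing fixed points under which each local residue is preserved. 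Reading everything order by order in the formal $\lam$-expansion makes this clean, as each coefficient is then a finite sum of such residues; this establishes \eqref{eq:e+gamma}.

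Finally, the infinitesimal form \eqref{eq:Qde} should follow by differentiating the identity \eqref{eq:e+gamma} in $\varphi_a$ at $\varphi = 0$. Since $\gamma_i = \varphi_a Q_i^a$ gives $\partial\gamma_i/\partial\varphi_a = Q_i^a$, the chain rule produces
\be
 0 = \frac{\partial}{\partial\varphi_a}\,\BV_X(\lam,\e+\gamma)\Big|_{\varphi=0}
 = Q_i^a\,\frac{\partial}{\partial\e_i}\,\BV_X(\lam,\e)\,,
\ee
where the left-hand side vanishes precisely because \eqref{eq:e+gamma} makes $\BV_X$ independent of $\varphi$. This reproduces \eqref{eq:Qde} and closes the equivalence of the two statements.
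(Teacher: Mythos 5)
Your proof is correct and follows essentially the same route as the paper: both absorb the shift $\e_i\mapsto\e_i+\varphi_aQ_i^a$ into a translation $\tilde\phi_a=\phi_a+\varphi_a$ of the integration variables in \eqref{eq:identity-tlam}, using that the integrand depends on $\e$ and $\phi$ only through the Chern roots $x_i$. The only cosmetic difference is that you obtain \eqref{eq:Qde} by differentiating the finite identity in $\varphi_a$, whereas the paper notes directly that $Q_i^a\partial_{\e_i}$ of the integrand is a total derivative $\partial_{\phi_a}$; your added care about the JK contour being translation-invariant is a reasonable remark the paper leaves implicit.
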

\begin{proof}
Suppose $\gamma_i=\varphi_a Q_i^a$ for some $\varphi_a\in\BR$, then we have
\be
 (\e_i+\gamma_i)+\phi_a Q_i^a=\e_i+(\phi_a+\varphi_a)Q_i^a\,.
\ee
Then we can redefine the integration variables in \eqref{eq:identity-tlam} as
\be
 \tilde\phi_a = \phi_a+\varphi_a\ ,
\ee
which leads to the claim of the lemma, namely
\be
\ba
 \BV_X(\lam,\e+\gamma)
 &= \oint\prod_{a=1}^r\frac{\mathd\phi_a}{2\pi\mathi}
 \frac{\mathe^{(\e_i+(\phi_a+\varphi_a)Q_i^a)\lam^i}}
 {\prod_{i=1}^n(\e_i+(\phi_a+\varphi_a)Q_i^a)} \\
 &= \oint\prod_{a=1}^r\frac{\mathd\tilde\phi_a}{2\pi\mathi}
 \frac{\mathe^{(\e_i+\tilde\phi_a Q_i^a)\lam^i}}
 {\prod_{i=1}^n(\e_i+\tilde\phi_a Q_i^a)}
 = \BV_X(\lam,\e)\,.
\ea
\ee
Similarly, \eqref{eq:Qde} follows by observing that
\be
 Q_i^a\frac{\partial}{\partial\e_i}\prod_j\frac{\mathe^{x_j\lam^j}}{x_j}
 = \frac{\partial}{\partial\phi_a}\prod_j\frac{\mathe^{x_j\lam^j}}{x_j}\ ,
\ee
is a total derivative in $\phi_a$.
\end{proof}
Let us then define a map $v:\BR^n\to\BR^{n-r}$ as the cokernel of $Q$, i.e.\
\be
\label{eq:map-v}
 v^i_\alpha Q_i^a = 0\,,
\ee
then we have that the function $\BV_X(\lam,\e)$ depends only on the combinations
\be
\label{eq:nuinsteadofeps}
 \nu_\alpha(\e) := v^i_\alpha\e_i
\ee
which one can regard as equivariant parameters for the quotient torus $U(1)^n/U(1)^r\cong U(1)^{n-r}$, i.e.\ the faithfully acting torus symmetry of $X$. In particular, this implies that there exists a lift of $\BV_X(\lam,\e)$ to a function $\tilde{\BV}_X(\lam,\nu)$ such that
\be
 \left.\tilde{\BV}_X(\lam,\nu)\right|_{\nu_\alpha=v_\alpha^i\e_i} = \BV_X(\lam,\e)\,.
\ee
We observe that the function $\tilde{\BV}_X(\lam,\nu)$ coincides with the function defined in \cite[(2.29)]{Martelli:2023oqk} up to a change of sign in front of the formal variables $\lam^i$.%
\footnote{In order to compare with the formulas in \cite{Martelli:2023oqk}, we first notice that the equivariant parameters $\e_i$ in \cite{Martelli:2023oqk} are to be identified with the effective equivariant parameters $\nu_\alpha$ here, while the redundant equivariant parameters $\bar{\e}_a$ in \cite{Martelli:2023oqk} are to be identified with the $\e_i$'s here.
With this dictionary between notations, we can identify the equivariant volume $\VVV_X(t,\e)$ with the Molien--Weyl integral $\BV_{\rm MW}(t,\bar{\e})$ in \cite[(2.55)]{Martelli:2023oqk}.}

In a similar way to how Lemma~\ref{lem:shift-eps} expresses the fact that the $\e_i$'s are redundant equivariant parameters, we have a similar redundancy in the K\"ahler parameters $\lam^i$'s. As observed in \cite[(5.7)]{Martelli:2023oqk}, this leads to an identity for the function $\tilde{\BV}_X(\lam,\nu)$ with respect to (infinitesimal) shifts in the $\lam$-variables.
More generally, we have the following.
\begin{lemma}
The following identities hold true:
\be
 v^i_\alpha\cD_i\,\VVV_X(t,\e) = v^i_\alpha\e_i\,\VVV_X(t,\e)\,,
\ee
\be
\label{eq:vdlamBV}
 v^i_\alpha\frac{\partial}{\partial\lam^i}\BV_X(\lam,\e) = v^i_\alpha\e_i\,\BV_X(\lam,\e)\,,
\ee
\be
 v^i_\alpha\frac{\partial}{\partial\lam^i}\tilde{\BV}_X(\lam,\nu)
 = \nu_\alpha\,\tilde{\BV}_X(\lam,\nu)\,.
\ee
\end{lemma}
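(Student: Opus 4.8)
The three identities share a common algebraic origin: contracting any index-$i$ quantity with the cokernel vector $v^i_\alpha$ kills its $Q$-dependent part, by the defining relation $v^i_\alpha Q_i^a = 0$ in \eqref{eq:map-v}. The plan is to exploit exactly this in each case, reducing every statement to the single fact that $v^i_\alpha x_i = v^i_\alpha\e_i = \nu_\alpha$ is a constant, independent both of the integration variables $\phi_a$ and of the Kähler parameters.

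For the first identity I would argue purely at the level of operators. Recalling the divisor operator $\cD_i = \e_i + Q_i^a\,\partial/\partial t^a$ from \eqref{eq:divisor-operator}, contraction with $v^i_\alpha$ gives
\be
 v^i_\alpha\cD_i = v^i_\alpha\e_i + \left(v^i_\alpha Q_i^a\right)\frac{\partial}{\partial t^a} = v^i_\alpha\e_i\,,
\ee
since the derivative term drops out by \eqref{eq:map-v}. Applying this operator identity to $\VVV_X(t,\e)$ yields the first claim with no further work. Equivalently, one can use the JK representation \eqref{eq:altequivol} together with the fact, noted below \eqref{eq:divisor-operator}, that $\cD_i$ inserts the class $x_i$ into the integrand; contracting with $v^i_\alpha$ then inserts the constant $v^i_\alpha x_i = \nu_\alpha$, which factors out of the residue integral.

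For the second identity I would differentiate the defining integral \eqref{eq:identity-tlam} under the JK contour. Since the Chern roots $x_i$ are independent of $\lam^i$ and the contour itself does not depend on $\lam$, the operator $\partial/\partial\lam^i$ simply brings down a factor of $x_i$ in the numerator. Contracting with $v^i_\alpha$ replaces this insertion by $v^i_\alpha x_i = v^i_\alpha\e_i + \phi_a\,v^i_\alpha Q_i^a = \nu_\alpha$, a $\phi$-independent constant that pulls out of the integral, leaving exactly $\nu_\alpha\,\BV_X(\lam,\e) = v^i_\alpha\e_i\,\BV_X(\lam,\e)$.

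The third identity follows by transporting the second through the lift $\tilde{\BV}_X(\lam,\nu)|_{\nu_\alpha = v_\alpha^i\e_i} = \BV_X(\lam,\e)$. Because $\lam$ and $\e$ are independent, $\partial/\partial\lam^i$ commutes with the substitution $\nu_\alpha = v^i_\alpha\e_i$, so the second identity reads $v^i_\alpha(\partial_{\lam^i}\tilde{\BV}_X)|_{\nu=v\cdot\e} = \nu_\alpha\,\tilde{\BV}_X|_{\nu=v\cdot\e}$. The only point requiring comment is that this must be promoted to an identity valid for all $\nu$: this is legitimate precisely because $v$ is surjective onto $\BR^{n-r}$, so the values $\nu_\alpha = v^i_\alpha\e_i$ exhaust all of $\nu$-space as $\e$ varies. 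That surjectivity step is the sole (mild) obstacle; everything else is a direct consequence of the cokernel relation and differentiation under the JK integral.
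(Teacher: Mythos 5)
Your proof is correct and follows essentially the same route as the paper, whose own proof is a one-line appeal to the cokernel relation \eqref{eq:map-v} together with the integral representations; you have simply unpacked that appeal explicitly (operator cancellation for the first identity, insertion of the constant $v^i_\alpha x_i=\nu_\alpha$ under the JK contour for the second, and transport through the lift for the third). The surjectivity remark for the third identity is a reasonable extra care point but does not change the substance of the argument.
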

\begin{proof}
The claim follows from the constraint \eqref{eq:map-v} together with the integral representations of $\VVV_X(t,\e)$, $\BV_X(\lam,\e)$ and $\tilde{\BV}_X(\lam,\nu)$, respectively.
\end{proof}

\subsubsection{\texorpdfstring{$\mathfrak{g}=0$}{g=0}}
The next logical step after generalizing the symplectic volume is to consider the equivariant version of the standard intersection numbers, which is really the reason we introduced the function $\BV(\lam,\e)$ above.%
\footnote{From this point on, we supress the implicit label $X$ on the equivariant volumes $\VVV, \BV, \tilde \BV$ unless the underlying manifold $X$ is not clear in the context.} As discussed in \cite{Cassia:2022lfj} and \cite{Martelli:2023oqk}, we can formally identify the equivariant Chern roots $x_i$ with the differential operators \eqref{eq:divisor-operator} such that acting with $\cD_i$ on the equivariant volume $\VVV(t,\e)$ is the same as inserting $x_i$ inside the integral, 
\be
 \cD_{i_1}\cdots\cD_{i_m} \VVV(t,\e)
 = \int_X \mathe^{\omega-\e_i H^i}\, x_{i_1} \cdots x_{i_m}
 = \int_{D_{i_1}\cap\dots\cap D_{i_m}} \mathe^{\omega-\e_i H^i}\,,
\ee
which corresponds to the equivariant generalization of the intersection number of divisors $D_{i_1}\cap\dots\cap D_{i_m}$. Notice that now the dimension of $X$ does not determine the degree of the non-vanishing intersection numbers, and the integral above will generically not be a topological invariant as it depends on the K\"ahler moduli themselves, on top of the equivariant parameters $\e_i$. We can nevertheless define a generalization of the intersection numbers $C_{a_1,\dots,a_n}$, see \eqref{eq:importantidentity}, that depends only on the equivariant parameters as follows.
\begin{definition}
We define $m$-tuple equivariant intersection numbers as
\be
\label{eq:n-tuple-intersection}
 C_{i_1,\dots,i_m} (\e) := \cD_{i_1}\cdots\cD_{i_m} \VVV(t,\e) \Big|_{t=0}\ ,
\ee
or, equivalently, as
\be
 C_{i_1,\dots,i_m} (\e) = \frac{\partial^m \BV(\lam,\e)}
 {\partial\lam^{i_1}\cdots\partial\lam^{i_m}} \Big|_{\lam=0}\ ,
\ee
where we used \eqref{eq:divisors-on-C}.
\end{definition}

Now we are ready to address the question of the equivariant generalization of the genus-zero constant maps, defined in \eqref{eq:g0constantmap}.
First, we notice that the generating function of $m$-tuple equivariant intersection numbers on $X$ coincides precisely with the part of $\BV(\lam,\e)$ homogeneous of degree $m$ in $\lam^i$'s, as noticed in \cite{Martelli:2023oqk},
\be
\label{eq:g0equimapn}
 \BV^{(m)}(\lam,\e)
 := \frac{1}{m!}\,\sum_{i_1,\dots,i_m} C_{i_1,\dots,i_m}(\e)\,\lam^{i_1}\cdots\lam^{i_m}\,,
 \hspace{30pt}
 \BV(\lam,\e) = \sum_{m=0}^\infty \BV^{(m)}(\lam,\e)
\ee
Moreover, this can be regarded as a generalization of the genus-zero constant-maps contribution to the GW free energy for an arbitrary toric CY $d$-fold.

Given the replacement of the K\"ahler parameters $t^a$ with the parameters $\lam^i$ in the discussion of the intersection numbers, we can define the equivariant generalization of the genus-zero constant maps $N_{0,0}(t)$, cf.\ \eqref{eq:g0constantmap}, as
\be
\label{eq:g0equimap3}
 N_{0,0}(\lam,\e) := \frac{1}{3!}\sum_{i,j,k} C_{i,j,k}(\e)\,\lam^i\,\lam^j\,\lam^k
 = \BV^{(3)}(\lam,\e)\,.
\ee

\begin{remark}
It could be counterintuitive at first, but we keep this definition, using the triple intersection numbers, for a toric manifold of \emph{arbitrary} complex dimension. We argue that this is the correct choice based on explicit comparison with M2-brane partition functions, \cite{Cassia:2025jkr}, see also \cite{Martelli:2023oqk}.
\end{remark}

\subsubsection{\texorpdfstring{$\mathfrak{g}=1$}{g=1}}
For the equivariant upgrade of the genus-one function $N_{1,0}(t)$ we also have several options: the most obvious equivariant generalization (that we choose not to follow) is
\be
\label{eq:F1equiv}
 N^\text{alt.}_{1, 0} (t,\e)
 := \frac1{24}\int_X c_{d-1}(TX)_{\mathrm{equiv}}
 \cup\mathe^{\omega-\e_iH^i}\ ,
\ee
where $d=\dim_\BC(X)$ and $c_k(TX)_{\mathrm{equiv}}$ is the $k$-th Chern class of the equivariant tangent to $X$. More specifically, one can write the Chern class $c_k$ as the $k$-th elementary symmetric polynomial $e_k(x_1,\dots,x_n)=\sum_{i_1<\dots<i_k}x_{i_1}\cdots x_{i_k}$ in terms of the equivariant Chern roots $x_i$, which are now linear combinations of both the $\phi_a$ and the $\e_i$. Just like above, observe that the formula in \eqref{eq:F1equiv} is well-defined also in any dimension $d$ other than 3, even though the connection to topological string free energy might become more subtle.

For a toric quotient, $N_{1,0}(t,\e)$ can be expressed using equivariant localization as the contour integral
\be
 N^\text{alt.}_{1, 0}(t,\e)
 = \frac1{24}\prod_{a=1}^r\oint\frac{\mathd\phi_a}{2\pi\mathi} \mathe^{\phi_a t^a}
 \frac{\sum_{1\leq i_1<\dots<i_{d-1}\leq n}x_{i_1}\cdots x_{i_{d-1}}}{\prod_{i=1}^n x_i}\ ,
\ee
where $x_i=\e_i+Q_i^a\phi_a$ and $d=n-r$ is the complex dimension of the quotient. Because of equivariance again, this is not a polynomial of fixed degree in $t^a$'s, but a power series. 

Following the same logic as in the previous section, we would like to define instead a function homogeneous of degree-one (in the redundant K\"ahler moduli $\lam^i$'s) which we can regard as the generating function of equivariant intersections between one divisor $D_i$ and the (equivariant) Poincar\'e dual to the second Chern class.
We are then led to define the function
\be
\label{eq:g1equimap}
 N_{1, 0} (\lam,\e)
 := \frac1{24} \sum_{i<j}\sum_k C_{i,j,k}(\e) \lam^k
 = \frac1{24}\oint_\text{JK} \prod_{a=1}^r\frac{\mathd\phi_a}{2\pi\mathi}
 \frac{(\sum_{i<j}x_ix_j)(x_k\lam^k)}{\prod_{i=1}^n x_i}
\ee
It follows that we can extract both the genus-zero and the genus-one equivariant constant maps contributions from a single quantity: the equivariant volume $\VVV(t,\e)$, see \eqref{eq:n-tuple-intersection}.

\subsubsection{\texorpdfstring{$\mathfrak{g}\geq 2$}{g>=2}}

In higher genus, the constant-maps contribution to the topological string free energy can be computed by observing that the GW moduli space of stable maps factorizes \cite{Faber:1998gsw} as
\be
 \overline{\cM}_{\mathfrak{g}}(X,0) \cong X \times \overline{\cM}_{\mathfrak{g}}
\ee
so that the generalization of \eqref{eq:F1equiv} is
\be
 N_{\mathfrak{g},0}(t,\e)
 := \frac{(-1)^{\mathfrak{g}}}{2} \int_X c_d(TX)_{\mathrm{equiv}}
 \cup\mathe^{\omega-\e_iH^i}
 \times \int_{\overline{\cM}_{\mathfrak{g}}}c^3_{\mathfrak{g}-1}\,,
\ee
with the Hodge integrals evaluated explicitly in \eqref{eq:hodge-integrals}.
In analogy with the CY threefold case, we define
\be
\label{eq:ggeq2equimap}
 N_{\mathfrak{g},0}(\lam,\e)
 := \frac{(-1)^{\mathfrak{g}}}{2} \sum_{i<j<k} C_{i,j,k}(\e)
 \times \int_{\overline{\cM}_{\mathfrak{g}}}c^3_{\mathfrak{g}-1}\,.
\ee
First of all, we notice that this is actually independent of the parameters $\lam^i$'s.
Furthermore, we observe that when $X$ is a threefold,
\be
 \sum_{i<j<k} C_{i,j,k}(\e)
 = \chi(X)
\ee
is a constant also independent of $\e_i$'s because of cohomological degree reasons.
In general however, this function is homogeneous of degree $3-d$ w.r.t.\ a simultaneous rescaling of all the $\e_i$'s.

We again stress that the above definition, \eqref{eq:ggeq2equimap}, might be counterintuitive at first when used on a toric manifold of arbitrary dimension, but it follows the logic of using triple equivariant intersection numbers for all constant maps terms. We comment more on the possibility for considering intersections of different order at the end of this paper.

\subsubsection*{Summary of main formulae}
Here we gather the formulae that allow us to efficiently calculate the equivariant volumes and corresponding constant maps terms in the explicit examples in the following sections.

The equivariant volume, as a function of the K\"ahler moduli $t$, for a given toric manifold with a charge matrix $Q$ can be computed via the JK-residue expression
\be
 \VVV(t,\e) = \oint_\text{JK} \prod_{a=1}^r\frac{\mathd\phi_a}{2\pi\mathi}
 \frac{\mathe^{\phi_a t^a}}{\prod_{i=1}^n x_i}\ ,
\ee
and the $\lam$-parametrization can be obtained via
\be
\label{eq:maineqforcalculationglamparametrization}
 \BV(\lam,\e) = \mathe^{\lam^i\cD_i}
 \,\VVV(t,\e)\, \Big|_{t = 0}\,,
\ee
with summations over repeated indices implied.

The equivariant constant maps terms can be computed by ordinary partial derivatives w.r.t.\ the $\lam$ parameters:
\be
\label{eq:cconstmapssummary}
\begin{aligned}
N_{0,0} (\lam, \e) &= \frac16\, \sum_{i, j, k} \frac{\partial^3 \BV (\lam,  \e)}{\partial \lam^i \partial \lam^j \partial \lam^k} \Big|_{\lam = 0}\, \lam^i \lam^j \lam^k\ , \\
N_{1,0} (\lam, \e) &= \frac1{24} \sum_{i < j} \sum_k \frac{\partial^3 \BV (\lam,  \e)}{\partial \lam^i \partial \lam^j \partial \lam^k} \Big|_{\lam = 0}\, \lam^k\ , \\
N_{\mathfrak{g}>1,0} (\e) &=  \frac{|B_{2\mathfrak{g}}|}{4 \mathfrak{g}} \frac{|B_{2\mathfrak{g}-2}|}{(2\mathfrak{g}-2)}
 \frac{(-1)^\mathfrak{g}}{(2\mathfrak{g}-2)!} \sum_{i < j < k} \frac{\partial^3 \BV (\lam,  \e)}{\partial \lam^i \partial \lam^j \partial \lam^k} \Big|_{\lam = 0}\ .
\end{aligned}
\ee

\subsection{Calabi--Yau cones, K\"ahler bases, and mesonic twist}
\label{sec:2.3}

So far, we have kept the nature of the toric action general, but in fact, there are two distinct types of symmetries to consider. The geometry typically includes non-contractible two-cycles, corresponding to the number of Kähler parameters $t^a$, which can be understood as internal symmetries—referred to as baryonic symmetries—since the toric action associated to these cycles is unrelated to the global isometries of the manifold. The remaining symmetries arise from the maximal torus of the manifold's isometry group and are called mesonic symmetries, following \cite{Hosseini:2019use,Hosseini:2019ddy}. Notably, we are always effectively breaking baryonic symmetries by sitting at a specific point in the Kähler moduli space.

Based on the above discussion, we note an interesting holographic observation that, while not fully proven, has been empirically verified in several examples (see \cite{Hosseini:2019use,Hosseini:2019ddy}). The dual field theory seems to be unaware of the baryonic symmetries of the geometry. Instead, the global symmetries of the field theory on M2-branes appear to correspond solely to the mesonic symmetries, which arise from the isometries of the manifold transverse to the brane.

Assuming this holographic observation holds, achieving a correct holographic match requires implementing the so-called \emph{mesonic truncation} or \emph{mesonic twist} in the geometry, which involves removing the baryonic symmetries, introduced by the symplectic quotient. Geometrically, this corresponds precisely to going to the origin of the K\"ahler moduli space of $X$ thus contracting all non-trivial two-cycles. This operation can be regarded as the opposite of a blow-up process and as such it introduces a conical singularity, rather than being a resolution of a singular space. The geometrical setup is the following.

Let us consider a toric CY space $X$ obtained as symplectic reduction of $\BC^n$ under the action of the torus $U(1)^r$ as described in the previous sections. The precise geometry and topology of $X$ depends on both the choice of charges $Q_i^a$ as well as the choice of chamber in the extended K\"ahler cone. Different choices of chamber lead to genuinely different quotient spaces $X$. Furthermore, from the geometry of the symplectic reduction, we have that each real K\"ahler parameter $t^a$ parametrizes the volume of one of the cycles $\beta_a$ in $H_2(X,\BZ)$,
\be
 \int_{\beta_a}\omega = t^a
\ee
In turn, one can regard such parameters $t^a$ as measuring the lengths of the compact 1-dimensional legs in the toric diagram of $X$. By definition, the toric diagram is the 1-skeleton of the intersection locus of the preimage of the moment maps with the positive orthant in $\BR^n$. The CY condition on $Q_i^a$ implies that this intersection in non-compact, and very far from the origin it looks like a real $2d$-dimensional cone, $d = (n-r)$.
When we set all the K\"ahler moduli to zero, we then expect that $X$ becomes a cone where all the non-trivial homology 2-cycles have been shrunk to a single point coinciding with the tip of the cone. The base of this cone is a Sasakian manifold $L$, of real dimension $2 d-1$. Conversely, we can regard $X$ as a toric resolution of the cone over $L$. Because $t=0$ is at the origin of the extended K\"ahler moduli space, after shrinking all homology 2-cycles in $X$, we end up with a conical singular space which no longer depends on the choice of chamber (all chambers contain the origin as a boundary point).
From now on we will refer to this conical space as $X_\circ$ and we will use that
\be
 X_\circ \cong C(L)\ .
\ee
When this is the case, we can write its metric as
\be
\label{eq:conicalmetricX}
 \mathd s^2 (X_\circ) := \mathd r^2 + r^2\, \mathd s^2 (L) \ ,
\ee
where $r$ is the radial coordinate and $\mathd s^2(L)$ is a metric on the base.
Furthermore, we will assume that $L$ can be described as a S$^1$-bundle over a conformally K\"ahler manifold $B$ such that
\be
 \mathd s^2 (L) = \eta^2 + \mathd s^2 (B)\ ,
\ee
where $\eta$ is the contact 1-form on $L$ dual to the Reeb vector $\xi$ (see \cite{Martelli:2005tp,Martelli:2006yb} and references therein for more details),
\be
 \iota_\xi \eta = 1\,.
\ee
In the following, we will also assume that the base $B$ can be described as follows. Within the base of the toric fibration of the cone $X_\circ$, we define a compact sublocus corresponding to the intersection with an hypersurface associated to an equation of the form
\be
\label{eq:moment-map-B}
 \sum_i b_i |z_i|^2 = c\ ,
\ee
where $b_i$ are some real parameters and $z_i$ are the homogeneous coordinates on $X_\circ$.
If the $b_i$ are integers, then we can regard \eqref{eq:moment-map-B} as a moment map for a Hamiltonian $U(1)$-action on $X_\circ$ and $c$ as the corresponding K\"ahler parameter. 
Then $B$ is obtained as the symplectic reduction
\be
 B = X_\circ//U(1)
\ee
and $c$ is its only non-vanishing modulus, which is to say that $H_2(B,\BZ)\cong \BZ$.

Concretely, the mesonic twist acts as a specialization of parameters,
\be
\label{eq:mesonic-twist-on-t}
 t^a_\text{mes.} = 0\ , \qquad \forall a\ ,
\ee
which in terms of the ineffective K\"ahler parameters $\lam^i$, corresponds to imposing the linear constraints
\be
\label{eq:mesonictwist}
 \sum_i Q_i^a \lam_\text{mes.}^i = 0\ , \quad \forall a\ .
\ee
Using the above condition together with \eqref{eq:BVtoVol}, it is straightforward to derive the general behavior,
\be
 \BV_X (\lam_\text{mes.}, \e)
 = \mathe^{\e_i \lam_\text{mes.}^i}\, \VVV_X(t=0, \e)
 =: \mathe^{\e_i \lam_\text{mes.}^i}\, \VVV_X^\text{mes.} (\e)\ .
\ee
Note that the expression for the equivariant volume $\VVV_X(t,\e)$ at general non-zero values of $t$, depends not only on the charge matrix $Q_i^a$, but also on the chamber in the K\"ahler moduli space where $t$ lies, cf.\ \eqref{eq:altequivol}. On the other hand, in the limit $t\to 0$ the final expression for the equivariant volume is independent of the chamber, which means that under the mesonic twist there is a universal formula for $\VVV_X^\text{mes.} (\e)$ which only depends on the charge matrix $Q_i^a$. 

We can further understand the mesonic condition through the use of the toric fan vectors $v_\alpha^i$, which by definition satisfy the relation \eqref{eq:map-v}
where $\alpha$ labels precisely the mesonic symmetries (i.e.\ those symmetries which act effectively on the quotient $X$), while $a$ labels the baryonic ones. It follows that the parameters $\lam^i$, in order to satisfy the constraint \eqref{eq:mesonictwist}, must be expressed as linear combinations of the vectors of the toric fan, namely
\be
\label{eq:lam-mesonic}
 \lam_\text{mes.}^i(\mu) := \sum_\alpha v^i_\alpha \mu^{\alpha}\ ,
\ee
for some coefficients $\mu^\alpha$ that play the role of ``reduced'' $\lam$ parameters.
In other words, the parameters $\mu^\alpha$ are the totally redundant K\"ahler moduli, i.e.\ those orthogonal to the effective K\"ahler moduli $t^a$.

Notice that \eqref{eq:lam-mesonic} together with \eqref{eq:nuinsteadofeps} imply $\e_i\lam_\text{mes.}^i(\mu) = \e_i v_\alpha^i \mu^\alpha = \nu_\alpha(\e)\mu^\alpha$, so that the equivariant volume at the mesonic twist point can be written as an explicit function of the reduced parameters $\mu^\alpha$ and $\nu_\alpha$:
\be
 \BV^\text{mes.}(\mu,\nu) := \tilde{\BV}(\lam_\text{mes.}(\mu),\nu)\ .
\ee

Importantly, the above discussion allows us to simplify the genus-zero and genus-one constant maps terms in \eqref{eq:cconstmapssummary} at the mesonic twist point. Namely, if we assume that $\lam^i$ are of the form \eqref{eq:lam-mesonic} then, for genus $\mathfrak{g}=0,1$ we can use \eqref{eq:vdlamBV} to write
\be
\label{eq:mesonicconstmaps}
\begin{aligned}
 N_{0,0}^\text{mes.} (\lam_\text{mes.}, \e) &=
 \frac16\, \left(\e_i \lam_\text{mes.}^i\right)^3\, \VVV^\text{mes.}(\e)\ , \\
 N_{1,0}^\text{mes.} (\lam_\text{mes.}, \e) &=
 \frac1{24}\,
 \left(\e_i \lam_\text{mes.}^i\right)\, \sum_{i < j} \frac{\partial^2 \BV (\lam,  \e)}{\partial \lam^i \partial \lam^j} \Big|_{\lam = 0}\,\ , 
\end{aligned}
\ee
while the higher-genus constant maps cannot be simplified from their expression in \eqref{eq:cconstmapssummary}. It is important to stress here that the specialization of $\lam$-parameters as in \eqref{eq:lam-mesonic} should be done after taking derivatives w.r.t.\ the unconstrained variables $\lam^i$. Moreover, since $\sum_{i<j}\partial_i\partial_j\BV|_{\lam=0}$ corresponds to the integral of the second Chern class, in the case that $X$ is a complex two-fold it becomes equal to the Euler number which is independent of the equivariant parameters, while in all other cases it is a non-trivial rational function of the~$\e_i$. A similar consideration applies to the function $\sum_{i<j<k}\partial_i\partial_j\partial_k\BV|_{\lam=0}$ (which appears at genus $\mathfrak{g}>1$) that becomes equal to the Euler number if $X$ is a complex three-fold.

These expressions can in turn be converted as functions of $\mu$ and $\nu$ in a straightforward way given the previous discussion and identities.

\subsubsection*{Mesonic constraint}

An important observation one needs to make regarding the relation between the redundant and effective equivariant parameters is that, while it is straightforward to define the $\e_i$'s as functions of the $\nu_\alpha$'s, the converse is not true. In fact, \eqref{eq:nuinsteadofeps} generically does not admit a unique solution as a system of equations for the $\e_i$'s. While it is true that by making some non-canonical choices one can always find a way to invert those relations, one would like to be able to have a systematic procedure to define $\e_i$'s as functions of $\nu_\alpha$'s.
In order to do so, we will make use of the mesonic twist \eqref{eq:mesonic-twist-on-t} together with the observation that
\be
\label{eq:mesonic-constr-eps}
 \left.v_\alpha^i\frac{\partial}{\partial\lam^i}\log\BV(\lam,\e)\right|_{\lam=\lam_{\rm mes.}(\mu)}
 = v^i_\alpha\e_i\ ,
\ee
which follows directly from the identity \eqref{eq:vdlamBV}.
The matrix $v_i^\alpha$ appears on both sides, however we cannot get rid of it since it is not an isomorphism. Nevertheless, one could decide to impose the condition that the $\e_i$'s are such that they satisfy
\be
\label{eq:mesonicconstraint}
 \e_i \stackrel{!}{=}
 \left.\frac{\partial}{\partial\lam^i}\log\BV(\lam,\e)\right|_{\lam=\lam_{\rm mes.}(\mu)}\ ,
\ee
which now form a system of non-linear equations due to the fact that the r.h.s.\ is a complicated rational function of the $\e_i$'s, namely we can rewrite \eqref{eq:mesonicconstraint} explicitly as
\be
 \e_i \stackrel{!}{=}
 \frac{\prod_a\oint\frac{\mathd\phi_a}{2\pi\mathi}\prod_{j\neq i}(\e_j+\phi_aQ_j^a)^{-1}}
 {\prod_a\oint\frac{\mathd\phi_a}{2\pi\mathi}\prod_j (\e_j+\phi_aQ_j^a)^{-1}}\ .
\ee
We refer to this as the \emph{mesonic constraint} on the $\e_i$'s.
As it turns out, not all of these equations are linearly independent precisely because of the identity \eqref{eq:mesonic-constr-eps}, which means that one cannot solve for the $\e_i$'s completely, in fact only $r$ of them can be fixed by the equations. The remaining $d$ independent degrees of freedom can be though of as parametrizing a specific choice of the effective parameters $\nu_\alpha$'s. In practice, getting rid of the redundancies in the $\e_i$'s by imposing the constraints \eqref{eq:mesonicconstraint} is an effective but not unique way to invert the relations in \eqref{eq:nuinsteadofeps}.%
\footnote{Alternatively, we can define functions
\be
 \e_i^{\rm mes.}(\nu) := \left.\frac{\partial}{\partial\lam^i}
 \log\tilde{\BV}(\lam,\nu)\right|_{\lam=\lam_{\rm mes.}(\mu)}
\ee
where we remark that the r.h.s.\ does not depend on $\mu^\alpha$'s due to some cancellations that only happen after we impose the mesonic twist on the K\"ahler parameters $\lam^i$'s.
With this definition, we readily have that $v^i_\alpha\e_i^{\rm mes.}(\nu)=\nu_\alpha$,
which means that we have found a way to invert the relations in \eqref{eq:nuinsteadofeps},
provided we know the function $\tilde{\BV}(\lam,\nu)$ beforehand.} It is interesting to observe that the mesonic constraint on the $\e$ parameters also has a holographic counterpart which has been previously explored in \cite{Hosseini:2019ddy}.

\subsubsection*{Universal parametrization}
As we discussed, the mesonic twist acts as a blow-down of the non-trivial two cycles in $X$ corresponding to the vanishing of all of its effective K\"ahler parameters. What remains, are the ineffective parameters $\mu^\alpha$ which are formally conjugate variables to the equivariant parameters $\nu_\alpha$ associated to the effective torus isometries of $X_\circ$ and therefore those of the base $L$.
In order to pass to the quotient $B$, we have to perform an additional symplectic reduction w.r.t.\ a $U(1)$ subgroup of the isometry group, which introduces a unique non-trivial two-cycle in $H_2(B,\BZ)$. The associated K\"ahler parameter is the variable conjugate to the equivariant parameter of the symmetry that has been quotiented out.
For the sake of concreteness, we can assume that the $U(1)$ quotient is defined by the moment map
\be
\label{eq:relationtoB}
 \sum_i v^i_\alpha |z_i|^2 = \mu^\alpha
\ee
where $z_i$ are homogeneous coordinates on $X_\circ$ and the label $\alpha$ identifies the 1-parameter subgroup.
Then $\mu^\alpha$ becomes the unique (effective) K\"ahler parameter of $B$ and the corresponding equivariant parameter $\nu_\alpha$ becomes an integration variable in the contour integral representation of the equivariant volume of $B$. In other words, we have removed the $\alpha$-th column vector from $v$ and added it as a row vector to $Q$.
In particular, this means that the direct relation between supergravity and the dual field theory can be performed by picking a single non-vanishing $\mu$-parameter. 

Among all possible choices of label $\alpha$ for the reduction, there is one which is somewhat special. This corresponds to the $\alpha$ such that $v^i_\alpha=1$ for all $i$, which always exists due to the CY condition on the matrix $Q_i^a$. It is indeed a standard convention that the first toric fan vector of a Calabi--Yau manifold is simply $v_1 = (1, 1,\dots, 1)$. We will then refer to this choice as the \emph{universal reduction} $B$, and 
we will refer to the parametrization
\be 
 \lam^i = v_1^i \mu^1 = \mu^1\,, \quad \forall i\,,
\ee 
as the \emph{universal parametrization}.

\section{Basic toric examples}
\label{sec:3}

In this section, we begin with some of the simplest toric manifolds, which serve as the fundamental building blocks for the more complex cases discussed in subsequent sections. Taking advantage of the computational simplicity of these models, we introduce key objects of interest in a more explicit and pedagogical manner. Readers already familiar with equivariant intersection theory on toric manifolds may wish to skip this section, as it primarily covers well-established results.

\subsection{Complex plane and discrete quotients, \texorpdfstring{$\BC^n/\Gamma$}{Cn/Gamma}}
The simplest (non-compact) toric CY manifold consists of $n$ copies of the complex plane, $\BC^n$. It has no non-trivial cycles, so it has no effective K\"ahler parameters $t^a$ to start with. Correspondingly, there is no charge vector $Q_i^a$ that can be defined. For illustrational purposes, we can be fully explicit here and introduce local coordinates,
\be
 \mathd s^2_{\BC^n} = \sum_{i=1}^n\,  (\mathd r_i^2 + r_i^2 \mathd \varphi_i^2)\ ,
\ee
with $2\pi$-periodic angle coordinates $\varphi_i$.
The K\"ahler form is given by
\be
 \omega
 = \sum_{i=1}^n \frac{\mathd z_i\wedge\mathd\bar{z}_i}{2\pi\mathi}
 = \frac1{\pi} \sum_{i=1}^n \mathd\varphi_i\wedge r_i \mathd r_i \ .
\ee
 There is a natural $U(1)^n$ action of rotations around the origin, parametrized by
\be
 v := \sum_{i = 1}^n \e_i\, v^i = 2 \pi\, \sum_{i = 1}^n \e_i\, \partial_{\varphi_i}\ ,
\ee
such that the equivariant indices in this case are identified with the label on the respective copy of $\BC$. The equivariant completion of $\omega$ is given by the moment maps
\be
 H^i = (r_i)^2 \quad \Rightarrow \quad (\mathd + \e_i \iota_{v^i}) (\omega - \e_i\, H^i)
 = \e_i\, (\iota_{v^i} \omega - \mathd H^i) = 0\ .
\ee
Clearly the standard symplectic volume on $\BC^n$ is not well-defined, so we can directly consider the equivariant volume. Via a direct integration from \eqref{eq:equivol} we find,
\be
\begin{aligned}
 \VVV(\e) &= \int_{\BC^n}\mathe^{\omega-\e_i H^i}
 = \int_{\BC^n} \frac{\omega^n}{n!} \mathe^{-\e_i H^i} \\ 
 &= \frac1{\pi^n} \int_0^{2\pi} \prod_i \mathd \varphi_i \int_0^\infty
 \prod_i (r_i \mathd r_i)\, \mathe^{- \sum_i \e_i\, r_i^2}
 = \frac1{\prod_i \e_i}\ .
\end{aligned}
\ee
The same result is reproduced also via the ABBV localization theorem, where the fixed point of the Killing vector $\xi$ corresponds to the origin of each copy of $\BC$, such that
\be
\label{eq:Cequivol}
 \VVV(\e)  = \frac{\mathe^{-\e_i H^i} |_{r=0}}{e(T_{\{0\}}\BC^n)}
 = \frac1{\prod_i \e_i}\ .
\ee
where $e(T_{\{0\}}\BC^n)$ is the equivariant Euler class of the normal bundle to the fixed point $\{0\}\hookrightarrow\BC^n$.

The third way of reproducing the calculation of the equivariant volume, i.e.\ the contour integral in \eqref{eq:altequivol}, automatically agrees with the result above, since the equivariant Chern roots are simply the equivariant parameters, $x_i = \e_i$, and there is no integration due to a lack of K\"ahler moduli.

After this straightforward calculation, let us turn to the main object of interest here: the constant maps. We first evaluate the equivariant volume in the $\lam$-parametrization, \eqref{eq:identity-tlam},
\be
\label{eq:CequivvolMZ}
 \BV(\lam,\e) = \exp(\e_i\lam^i)\,\VVV(\e)
 = \frac{\mathe^{\e_i\lam^i}}{\prod_i\e_i}\ ,
\ee
where summation over repeated up-down indices is implied.
Note that in this simplest case, due the lack of effective K\"ahler moduli, the mesonic twist condition is trivial, and the matrix $v_\alpha^i$ can be taken to be the identity matrix, so that $\lam=\mu$ and $\e=\nu$. With these identifications, we can write
\be
 \BV_{\BC^n}(\lam,\e) \equiv \BV^\text{mes.}_{\BC^n}(\mu,\nu)\ .
\ee
This is only true for the present example and does not hold on any of the other examples.

The genus-zero and genus-one constant maps follow directly from \eqref{eq:g0equimap3} and \eqref{eq:g1equimap}:
\be
\label{eq:cplanemaps0and1}
 N_{0,0} = \frac{(\sum_i \e_i\lam^i)^3}{6\,\prod_i\e_i}\ ,
 \hspace{30pt}
 N_{1,0} = \frac{(\sum_{i<j} \e_i \e_j)\, (\sum_k\e_k\lam^k)}{24\, \prod_i\e_i}\ ,
\ee
while the higher genus constant maps follow from \eqref{eq:ggeq2equimap},
\be
\label{eq:cplanemapsg}
 N _{\mathfrak{g}\geq 2,0} = \frac{(-1)^\mathfrak{g}}{(2\mathfrak{g}-2)!}\,
 \frac{|B_{2\mathfrak{g}}|}{2\mathfrak{g}} \frac{|B_{2\mathfrak{g}-2}|}{(2\mathfrak{g}-2)}\,
 \frac{\sum_{i<j<k} \e_i \e_j \e_k}{2\, \prod_i \e_i}\, \ .
\ee
Notice that the genus-zero expression is non-vanishing for any dimension $n$, while the genus-one expression only vanishes for $n=1$ and the higher genus expression vanishes for $n=1,2$. 

Let us also consider a generalization of the above case allowing for finite discrete quotients of the form $\BC^n / \Gamma$. Assuming that the finite group $\Gamma$ preserves the symplectic structure, the full space inherits the toric action and the equivariant structure from the $\BC^n$ case above. Even though in general the resulting space is singular (we give examples of resolutions in the next section), we can still define the equivariant volume as before, \eqref{eq:equivol}. If we have no further knowledge of the group $\Gamma$, the equivariant volume becomes a sum over all group elements, $g \in \Gamma$. From the fixed-point theorem it further follows that each group element contributes through its fixed-point set, $\text{Fix}(g)$. 

Let us then assume the simplest case where the discrete group $\Gamma$ acts on $\BC^n$ with no fixed points other than the origin. Then we have the simple identity
\be
 \VVV_{\BC^n/\Gamma} = \frac{\VVV_{\BC^n}}{|\Gamma|}\ ,
 \quad \Rightarrow \quad
 \BV_{\BC^n/\Gamma} = \frac{\BV_{\BC^n}}{|\Gamma|}\ .
\ee
This includes the special case of the cyclic groups $\BZ_k$, with corresponding action on $(z_1, \dots, z_n) \in \BC^n$ given by
\be
 (z_1,\dots, z_n) \mapsto
 (\mathe^{2\pi\mathi m_1/k} z_1, \dots, \mathe^{2\pi\mathi m_n/k} z_n)\ ,
\ee
which are particularly relevant for holographic applications.
In this case, we have
\be
 \BV_{\BC^n/\BZ_k}(\lam,\e) = \exp(\e_i\lam^i)\, \VVV_{\BC^n/\BZ_k}(\e)
 = \frac{\mathe^{\e_i\lam^i}}{k\prod_i\e_i}\ ,
\ee
The corresponding constant maps terms $N_{\mathfrak{g},0}$ can be computed as in \eqref{eq:mesonicconstmaps} by first resolving the singularity and then imposing the mesonic twist condition after taking $\lam$-derivatives.

\subsection{Spindle, \texorpdfstring{$\WPL$}{WPL}}
\label{sec:spindle}
One-dimensional complex projective space $\BP^1$, also known as the Riemann sphere, can be realized as a toric quotient $\BC^2//U(1)$ by taking the $U(1)$ action corresponding to both charges being 1, so that the moment map is $\mu(z)=|z_1|^2+|z_2|^2$. This simple quotient admits a generalization to a weighted projective space corresponding to the quotient taken w.r.t.\ the action with two charges which are not necessarily equal to 1.

In this case, the moment map is $\mu(z)=a|z_1|^2+b|z_2|^2$ where the charges $a,b\in\BZ_{>0}$ introduce two conical deficit angles at the poles of the Riemann sphere. The resulting quotient is still toric however it becomes a singular space, better described via the language of Deligne--Mumford stacks.\footnote{In the case where $\gcd(a,b)\neq1$ the resulting quotient stack is not reduced and generically takes the form of a $\BZ_{\gcd(a,b)}$-gerbe over the reduced stack.}

Known colloquially as a \emph{spindle}, this manifold has recently been considered as a horizon geometry for asymptotically AdS black holes in gauged supergravity, generalizing the usual spherical topologies, see \cite{Ferrero:2020laf,Ferrero:2020twa} and references therein. One can think of the spindle topologically as a sphere, with a single K\"ahler modulus $t$. It can be represented as a stack quotient,
\be
 \WPL = \BC^2 // U(1)\ ,
\ee
and a corresponding charge matrix (in this case a vector)
\be
\label{eq:spindleQ}
	Q = (a, b)\ .
\ee

This example was considered in some detail in \cite[Section 3.1]{Martelli:2023oqk}, where local coordinates and explicit expressions for the equivariant K\"ahler form can be found. To avoid repetition while still remaining pedagogical, let us focus on evaluating the equivariant volume using \eqref{eq:altequivol}, our main mathematical tool in the forthcoming examples. Applying this formula for the spindle (using the relabeling of parameters $\e_1 = \e_+, \e_2 = \e_-$ and likewise for the superscript of the $\lam$-parameters) leads to
\be
\label{eq:Cequivolspindle}
 \VVV(t, \e)
 =\oint\frac{\mathd\phi}{2\pi\mathi}\frac{\mathe^{\phi t}}{(\e_++a\phi)(\e_-+b\phi)} \\
 = \frac{\mathe^{-\frac{\e_+ t}{a}}}{(a\e_--b\e_+)}
 + \frac{\mathe^{-\frac{\e_- t}{b}}}{(b\e_+-a\e_-)}\ ,
\ee
where we took the poles at $\phi = -\frac{\e_+}{a}$ and $\phi = -\frac{\e_-}{b}$. Note that the denominators of the two terms are equal, up to a sign, such that in the non-equivariant limit we have a cancellation of all the singular terms,
\be
\label{eq:spindle-volume-eq}
 \lim_{\e_\pm \to 0} \VVV(t,\e)
 = \frac{t}{a\, b} - \frac{t^2}{2 a^2 b^2}\, (b\e_+ + a\e_-) + \cO (\e^2)\ .
\ee
As expected, the leading order is proportional to $t$ and it parametrizes the ordinary volume, while the equivariant upgrade adds corrections at higher orders in the K\"ahler modulus $t$ and the equivariant parameters $\e_\pm$'s.\footnote{From a cohomological point of view, one should assign degree $-2$ to the K\"ahler moduli $t^a$ and degree $+2$ to the equivariant parameters $\e_i$, so that the function $\VVV(t,\e)$ becomes homogeneous of degree $-2d$.}
Note that in order for the quotient to be a compact two-fold as described above, we need to impose that both charges are positive integers and that the K\"ahler modulus is also greater than zero. However, once the volume has been computed via equivariant localization as in \eqref{eq:Cequivolspindle}, we can analytically continue the function $\VVV(t,\e)$ to arbitrary values of $a,b$ and $t$. In particular, we will allow $b$ to become negative while keeping $a$ positive (or equivalently the other way around).
The reason for doing so has to do with the fact that, from the physics point of view, this choice still corresponds to a BPS background in string theory, as argued in \cite{Martelli:2023oqk} (see also \cite{Cassia:2025jkr}).

In the $\lam$-parametrization, the equivariant volume takes the following form, cf.\ \eqref{eq:identity-tlam} and \eqref{eq:spindleQ},
\be
\label{eq:equivolwp1}
 \BV(\lam,\e) = 
 \frac{\mathe^{\lam^+\,\e_+ + \lam^-\,\e_-}}{(a \e_- - b \e_+)}\,
 \left(\mathe^{-(a\lam^+ + b \lam^-) \e_+/a}
 - \mathe^{-(a \lam^+ + b \lam^-) \e_-/b} \right)\ ,
\ee
in agreement with the calculations in \cite{Martelli:2023oqk}. It is natural to introduce a ``reduced'' equivariant parameter on the spindle,
\be
\label{eq:definingsimplifiedparameterforthespindle}
 \NN := a \e_- - b \e_+\ ,
\ee
so that we can write the simplified expression
\be
 \tilde \BV (\lam, \nu) = \frac{\mathe^{\frac{\NN}{a} \lam^-}
 - \mathe^{-\frac{\NN}{b} \lam^+}}{\NN}\ .
\ee
where we have lifted the redundancy in the equivariant parameters from $\e_+,\e_-$ to $\NN$.

The latter parametrization in terms of $\NN$, which notationally we reserve throughout the paper for the equivariant parameter of the spindle, is immediate in this simple case. In the more involved examples that follow later, one can instead use the toric fan vectors to define $\nu(\e)$, cf.\ \eqref{eq:nuinsteadofeps}. In this case, the toric fan is given by the kernel of $Q$,
\be
 v =
 \begin{pmatrix}
 -b \\ a 
 \end{pmatrix}\ ,
\ee
so that we can define $\nu_1(\e) = v_1^i \e_i = a \e_- - b \e_+$ which matches the definition of $\NN$ in \eqref{eq:definingsimplifiedparameterforthespindle}. Note that the toric fan vector is not uniquely defined and allows for an arbitrary constant rescaling, which would result in an ambiguous overall numerical factor.

Using the expression in terms of a single equivariant parameter, the genus-zero constant maps term \eqref{eq:g0equimap3} reads
\be
\label{eq:spindleconstantmaps}
	N_{0, 0} = \frac{\NN^2}{6\, a^3 b^3}\, \left( (a \lam^+)^3 + (b \lam^-)^3 \right)\ ,
\ee
while the higher genus constant maps, as defined in \eqref{eq:g1equimap} and \eqref{eq:ggeq2equimap}, vanish identically due to the low dimension of the manifold,
\be
	N_{1, 0} = 0 \ , \qquad N_{\mathfrak{g} \geq 2, 0} = 0\ .
\ee
The spindle is of course \emph{not} a CY manifold, so the meaning of this calculation in itself is not immediately clear from our previous discussion. Our main motivation to look at the spindle lies in the role it has as the base of a fibration, whose total space is indeed a CY manifold, as discussed in Section~\ref{sec:4}.

Finally, note that imposing the mesonic twist condition, \eqref{eq:mesonictwist}, in the case of the spindle leads to the trivial result $\BV_{\WPL}^\text{mes.} = 0$. This is because the mesonic twist erases the information about two cycles, which is in essence what the spindle is, thus erasing the entire topology.

\subsection{Projective space, \texorpdfstring{$\BP^k$}{Pk}}
A higher-dimensional generalization of the previous example is the $k$-dimensional complex projective space. For each $k$, one could consider further orbifold generalizations, $\mathbb{WP}^k_{a_1, a_2, \dots, a_{k+1}}$, e.g.\ $\mathbb{WP}^2_{a, b, c}$ considered in \cite[Section 3.2.1]{Martelli:2023oqk}. In this generic case, the charge vector is given accordingly by
\be
\label{eq:WPk}
 Q = (a_1, a_2, \dots, a_{k+1})\ .
\ee
Since the addition of the conical deficit numbers is straightforward and goes in analogy to the previous example, let us just take the case $( a_1, a_2, \dots, a_{k+1} ) = (1, 1, \dots, 1)$ in order not to clutter the following formulae. 

In the $k$-dimensional case, the manifold exhibits a total of $(k+1)$ fixed points, and a single K\"ahler modulus $t$. The equivariant volume can again be calculated by a fixed point formula, \eqref{eq:equivol}, or via the contour integral, \eqref{eq:altequivol}. In the latter case we find
\be
 \VVV(t, \e)
 =\oint\frac{\mathd\phi}{2\pi\mathi}\frac{\mathe^{\phi t}}{\prod_{i = 1}^{k+1}(\e_i+\phi)}
 = \sum_{i=1}^{k+1} \frac{\mathe^{-\e_i t}}{\prod_{j \neq i} (\e_j - \e_i)}\ ,
\ee
where the JK contour includes all the poles of the integrand. Due to compactness of the quotient space, the non-equivariant limit is well-defined and produces the ordinary volume as the leading term in the $\e$-expansion,
\be
 \lim_{\e_i\to 0} \VVV_{\BP^k} (t, \e)
 = \frac{t^k}{k!} - \frac{t^{k+1}}{(k+1)!}\, \sum_{i=1}^{k+1} \e_k + \cO(\e^2)\ .
\ee

In the $\lam$-parametrization, the equivariant volume takes the form,  cf.\ \eqref{eq:identity-tlam},
\be
 \BV (\lam, \e) = \sum_{i=1}^{k+1} \prod_{j \neq i}
 \frac{\mathe^{\lam^j(\e_j-\e_i)}}{(\e_j-\e_i)}\ ,
\ee
generalizing \eqref{eq:equivolwp1}. We can rewrite the answer in terms of parameters $\nu_\alpha$'s, cf.\ \eqref{eq:nuinsteadofeps}, observing that the toric fan is given by the matrix 
\be
 v =
 \begin{pmatrix}
 1 & 0 & \dots & 0 \\
 0 & 1 & \dots & 0 \\
 \vdots & \vdots & \ddots & \vdots \\
 0 & 0 & \dots & 1 \\
 -1 & -1 & \dots & -1
 \end{pmatrix}\ ,
\ee
such that we find
\be
	\nu_\alpha = \e_\alpha - \e_{k+1}\ , \qquad \forall \alpha = 1, \dots, k\ .
\ee
We can then lift the redundancy in the equivariant parameters and write
\be
 \tilde\BV (\lam, \nu)
 = \left( \prod_{\alpha = 1}^k \frac{\mathe^{\lam^\alpha\nu_\alpha}} {\nu_\alpha} \right)
 - \sum_{\alpha = 1}^k \left( \frac{\mathe^{-\lam^{k+1}\nu_\alpha}}
 {\nu_\alpha} \prod_{\beta \neq \alpha} \frac{\mathe^{\lam^\beta(\nu_\beta-\nu_\alpha)}}
 {(\nu_\beta - \nu_\alpha)} \right)\ .
\ee

The constant maps terms that can be derived from the equivariant volume can now be written either in the $\e$ of in the $\nu$ parametrization, but in this case the expressions are more compact with the former choice. Using it, we evaluate the genus-zero and genus-one constant maps,
\be
\ba
 N_{0, 0} &= \sum_{i = 1}^{k+1} \frac{(\sum_{j\neq i} \lam^j (\e_j - \e_i))^3}{6\, \prod_{j \neq i} (\e_j - \e_i)} \ ,\\
 N_{1, 0} &=  \sum_{i = 1}^{k+1} \frac{(\sum_{m<n \neq i} (\e_m - \e_i) (\e_n-\e_i) ) (\sum_{l \neq i} \lam^l (\e_l-\e_i))}{24\, \prod_{j \neq i} (\e_j - \e_i)} \ ,
\ea
\ee
and the higher-genus contribution,
\be
 N_{\mathfrak{g} \geq 2, 0} = \frac{(-1)^\mathfrak{g}}{(2\mathfrak{g}-2)!}\,
 \frac{|B_{2\mathfrak{g}}|}{2\mathfrak{g}} \frac{|B_{2\mathfrak{g}-2}|}
 {(2\mathfrak{g}-2)}\, \sum_{i = 1}^{k+1}
 \frac{\sum_{l<m<n \neq i} (\e_l-\e_i) (\e_m-\e_i) (\e_n-\e_i)}
 {2\, \prod_{j \neq i} (\e_j-\e_i)}\, \ .
\ee
Again, since $\BP^k$ is not a CY manifold, the interpretation of these results is not immediately clear and it is presented here mostly for pedagogical purposes.

\section{Calabi--Yau cones and resolutions}
\label{sec:3.5}

In this section we consider a set of examples of CY cones and their resolutions, which in general present more involved cases in comparison with the basic examples of the previous section. In all of the examples here we look at symplectic quotients given by the charge matrices obeying
\be
	\sum_i Q^a_i = 0\ , \qquad \forall a\ ,
\ee
which corresponds to the Calabi--Yau condition on toric manifolds. As a main calculational tool we directly apply formula \eqref{eq:altequivol}, and determine the equivariant volume in the various parametrizations we have introduced. In order to keep a relatively compact presentation, in this section we refrain from explicitly evaluating the constant maps terms of the topological string, which follow straightforwardly from the calculations we present. The reader can find other explicit examples of CY resolutions in \cite[Sections 10--11]{Cassia:2022lfj}.

\subsection{Local \texorpdfstring{$\BP^{n-1}$}{Pn-1}}
The space $X = K_{\BP^{n-1}}$, i.e.\ the total space of the canonical bundle over $\BP^{n-1}$, is a local CY manifold obtained as a toric quotient defined by the charge matrix
\be
 Q = \left(1,1,\dots,1,-n\right)\ ,
\ee
and corresponds to the resolution of the cone $\BC^n/\BZ_n$ (blowing up the singular point into a $\BP^{n-1}$). The corresponding Sasakian space at the base of the cone is $L = \mathrm{S}^{2 n-1} / \BZ_n$, which can be described as the total space of a circle bundle over the manifold $B = \BP^{n-1}$.

The equivariant volume in the chamber $t>0$ is given by, \cite{Cassia:2022lfj}, 
\be
 \VVV(t,\e) = \oint_\text{JK} \frac{\mathd\phi}{2\pi}
 \frac{\mathe^{\phi t}}{(\e_{n+1} - n\phi)\prod_{i=1}^n(\e_i+\phi)}
 = \sum_{i=1}^n\frac{\mathe^{-t\e_i}}{(\e_{n+1}+n\e_i)
 \prod_{j \neq i}^n(\e_j-\e_i)}\ ,
\ee
where we took the poles at $\phi=-\e_i,i=1,\dots,n$. It follows that the equivariant volume in the $\lam$-parametrization gives
\be
 \BV(\lam,\e) = \sum_{i=1}^n
 \frac{\mathe^{\lam^{n+1}(\e_{n+1}+n\e_i)}}{(\e_{n+1}+n\e_i)}
 \prod_{j\neq i}^n\frac{\mathe^{\lam^j(\e_j-\e_i)}}{(\e_j-\e_i)}\ .
\ee
If we now impose the mesonic twist condition, \eqref{eq:mesonictwist},
$Q_i \lam^i_\text{mes.} = (\sum_{i=1}^n\lam_\text{mes.}^i)-n\lam_\text{mes.}^{n+1}=0$,
it is straightforward to show that
\be
 \BV(\lam_\text{mes.},\e)=\mathe^{\sum_{i=1}^{n+1} \lam_\text{mes.}^i\e_i}\,
 \sum_{i=1}^n \frac1{(\e_{n+1}+n\e_i)\prod_{j\neq i}^n (\e_j-\e_i)}
 = \mathe^{\sum_{i=1}^{n+1}\lam_\text{mes.}^i\e_i}\,\VVV(0,\e)\ .
\ee

After a simple inspection of these results, it is clear that we are allowed to reparametrize the result in terms of the reduced variables $\mu^\alpha,\nu_\alpha$, $\alpha=1,\dots,n$:
\be
\label{eq:locPnreducedvars}
 \nu_\alpha := \sum_{i=1}^n \delta_\alpha^i \e_i + \frac{\e_{n+1}}{n}\ ,
 \qquad
 \lam_\text{mes.}^{i\leq n} = \sum_{\alpha=1}^n\delta^i_\alpha\mu^\alpha\ ,
 \qquad
 \lam_\text{mes.}^{n+1} = \frac{1}{n} \sum_{\alpha=1}^n\mu^\alpha\ ,
\ee
which corresponds to the choice of matrix $v_\alpha^i$,
\be
 v_\alpha^i = \left\{\begin{array}{ll}
 \delta_\alpha^i\ , & \, \text{if $i\leq n$} \\
 1/n\ , & \, \text{if $i=n+1$}
 \end{array}\right.
\ee
such that we find
\be
\label{eq:locPnmesonicequivolMZ}
 \BV^\text{mes.}(\mu,\nu) = \mathe^{\sum_{\alpha=1}^n \mu^\alpha \nu_\alpha}\,
\sum_{\alpha=1}^n\frac1{n\nu_\alpha\prod_{\beta\neq\alpha}(\nu_\beta-\nu_\alpha)}
 = \frac{\mathe^{\mu^\alpha\nu_\alpha}}{n\prod_{\alpha=1}^n\nu_\alpha}\ .
\ee
Unsurprisingly, after performing the mesonic twist, i.e.\ blowing down the non-trivial two-cycle of volume $t$, we obtain the equivariant volume of the cone $\BC^n/\BZ_n$:~\footnote{Note that the overall numerical prefactor is ambiguous as it can be reabsorbed in the definition of the equivariant parameters, such that the first equality is conventional.}
\be
 \BV^\text{mes.}_{\text{loc.}\,\BP^{n-1}}
 = \frac{1}{n}\BV_{\BC^n}
 = \BV_{\BC^n/\BZ_n}\ .
\ee

For the sake of completeness, we show here (but not in the other examples below) that picking the other chamber, $t < 0$, does not change the answer after imposing the mesonic twist. In this case we find
\be
 \VVV(t < 0,\e) = \oint_\text{JK} \frac{\mathd\phi}{2\pi}
 \frac{\mathe^{\phi t}}{(\e_{n+1}-n\phi)\prod_{i=1}^n(\e_i+\phi)}
 = \frac{n^{n-1}\,\mathe^{t\e_{n+1}/n}}{\prod_{i=1}^n(n\e_i+\e_{n+1})}\ ,
\ee
where we took the single pole $\phi=\e_{n+1}/n$. In the $\lam$-parametrization we have
\be
 \BV (\lam, \e) = \frac{\mathe^{\sum_{i=1}^n \lam^i (\e_i + \e_{n+1} /n)}}
 {n\,\prod_{i=1}^n (\e_i + \e_{n+1}/n)}\ ,
\ee
such that the mesonic twist clearly reproduces \eqref{eq:locPnmesonicequivolMZ} upon the identifications \eqref{eq:locPnreducedvars}. Moreover, explicitly applying the mesonic constraint equation, \eqref{eq:mesonicconstraint}, we find
\be
 \e_i = \sum_{\alpha=1}^n \delta_i^\alpha \nu_\alpha\ ,
 \qquad \e_{n+1} = 0\ .
\ee
The resulting constant maps terms at the mesonic twist point are therefore the ones of the complex plane $\BC^n$, \eqref{eq:cplanemaps0and1} -- \eqref{eq:cplanemapsg}, divided by the factor of $n$, namely
\be
 N^{\text{loc.}\,\BP^{n-1}}_{\mathfrak{g}, 0}
 = \frac{1}{n} N^{\BC^n}_{\mathfrak{g}, 0}\ ,
 \qquad \forall \mathfrak{g}\ ,
\ee
with the exception of $\mathfrak{g}=1$ for $n=2$, and $\mathfrak{g}\geq2$ for $n=3$, due to the issue discussed after \eqref{eq:mesonicconstmaps}.

\subsection{Resolved conifold}
\label{sec:4.2}
We now consider a toric CY three-fold defined by the charge matrix
\be
 Q = \left(1,1,-1,-1\right)\ ,
\ee
and the chamber $t>0$. This corresponds to the resolution of the singular cone described by the quadratic equation in $\BC^4$:
\be
 z_1 z_3 - z_2 z_4 = 0\ .
\ee
The singular geometry is described by a cone over the Sasakian space $L = T^{1,1}$, which by definition can be described as a circle bundle over $B = \BP^1 \times \BP^1$ with first Chern class equal to $(1,1)\in H^2(B,\BZ)\cong\BZ\times\BZ$.

The equivariant volume is given by, \cite{Cassia:2022lfj},
\be
\ba
 \VVV(t, \e) &= \oint_\text{JK} \frac{\mathd \phi}{2 \pi}
 \frac{\mathe^{\phi t}}{(\e_1 + \phi) (\e_2+\phi) (\e_3-\phi) (\e_4-\phi)} \\
 &= \frac1{(\e_2-\e_1)}\, \left( \frac{\mathe^{-t \e_1}}{(\e_3+\e_1) (\e_4+\e_1)}
 - \frac{\mathe^{-t \e_2}}{(\e_3+\e_2) (\e_4+\e_2)} \right)\ ,
\ea
\ee
where we took the poles at $\phi = -\e_1$ and $\phi = -\e_2$.  Similarly, the equivariant volume in the $\lam$-parametrization gives
\be
 \BV (\lam, \e) = \frac1{(\e_2-\e_1)}\, \left(
 \frac{\mathe^{\lam^2 (\e_2-\e_1)+\lam^3 (\e_3+\e_1) +\lam^4 (\e_4+\e_1)}}
 {(\e_3+\e_1) (\e_4+\e_1)}
 - \frac{\mathe^{\lam^1 (\e_1-\e_2)+\lam^3 (\e_3+\e_2) +\lam^4 (\e_4+\e_2)}}
 {(\e_3+\e_2) (\e_4+\e_2)} \right)\ .
\ee
If we now impose the mesonic twist condition, $\lam^1+\lam^2=\lam^3+\lam^4$, it is straightforward to show that
\be
 \BV (\lam_\text{mes.}, \e) = \frac{\mathe^{\e_i\lam_\text{mes.}^i}\, \sum_{i=1}^4\e_i}
 {(\e_1+\e_3) (\e_2+\e_3) (\e_1+\e_4) (\e_2+\e_4)}
 = \mathe^{\e_i \lam_\text{mes.}^i}\, \VVV^\text{mes.} (\e)\ .
\ee
If we look at the mesonic constraint on the $\e$ parameters, \eqref{eq:mesonicconstraint},  it takes the simple form:
\be
\label{eq:conifoldmesonicparametrizationfore}
\begin{array}{ccc}
 \e_1 = \frac{(\e_1 + \e_3) (\e_1 + \e_4)}{\sum_{i = 1}^4 \e_i}\ ,
 &\quad&
 \e_3 = \frac{(\e_1 + \e_3) (\e_2 + \e_3)}{\sum_{i = 1}^4 \e_i}\ , \\
 && \\
 \e_2 = \frac{(\e_2 + \e_3) (\e_2 + \e_4)}{\sum_{i = 1}^4 \e_i}\ ,
 &\quad&
 \e_4 = \frac{(\e_1 + \e_4) (\e_2 + \e_4)}{\sum_{i = 1}^4 \e_i}\ ,
\end{array}
\ee
which can be simplified to give the following quadratic relation,
\be
\label{eq:conifoldmesonicconstr}
 \e_1 \e_2 = \e_3 \e_4\ .
\ee
This is precisely the constraint discussed in \cite[Section 5.1]{Hosseini:2019ddy},~\footnote{The rest of the quantities discussed here also have a direct correspondence to the ones in  \cite[Section 5.1]{Hosseini:2019ddy}, with $\e^\text{here} \leftrightarrow \Delta^\text{there}$ and $\nu^\text{here} \leftrightarrow b^\text{there}$. One should however take into account the additional equivariant parameter in that reference that comes from a direct product with $\BC$, which we have not considered here but is straightforward to include.} which on the field theory side can be understood as lifting up the flat directions in the matrix model, see \cite{Jafferis:2011zi}. The constant maps terms in the topological string amplitudes after the mesonic twist are given by \eqref{eq:mesonicconstmaps}, under the constraint \eqref{eq:conifoldmesonicconstr}. 

To illustrate the alternative parametrization in the variables $\mu^\alpha$ and $\nu_\alpha$, we can use the toric fan matrix
\be
 v =
 \begin{pmatrix}
 1 & 1 & 0 \\
 1 & 0 & 1 \\
 1 & 1 & 1 \\
 1 & 0 & 0
 \end{pmatrix}
\ee
such that
\be
\begin{array}{c}
 \lam_\text{mes.}^1 = \mu^1+\mu^2\ ,
 \quad
 \lam_\text{mes.}^2 = \mu^1+\mu^3\ ,
 \quad
 \lam_\text{mes.}^3 = \mu^1+\mu^2+\mu^3\ ,
 \quad
 \lam_\text{mes.}^4 = \mu^1\ ,
 \\ \\
 \nu_1 = \e_1+\e_2+\e_3+\e_4\ ,
 \quad
 \nu_2 = \e_1+\e_3\ ,
 \quad
 \nu_3 = \e_2+\e_3\ ,
\end{array}
\ee
and we find
\be
\label{eq:conifoldmesonic}
	\BV^\text{mes.} (\mu, \nu)=  \frac{\mathe^{\nu_\alpha \mu^\alpha}\, \nu_1}{\nu_2 \nu_3 (\nu_1 - \nu_2) (\nu_1 - \nu_3)}\ .
\ee

Note that the parametrization in terms of the $\nu$ variables is insensitive of whether we impose the constraint \eqref{eq:conifoldmesonicconstr} on the $\e$ parameters or not. This is because the conditions \eqref{eq:conifoldmesonicparametrizationfore} translate to
\be
 \e_1 = \frac{\nu_2 (\nu_1 - \nu_3)}{\nu_1}\ , \quad
 \e_2 = \frac{\nu_3 (\nu_1 - \nu_2)}{\nu_1}\ , \quad
 \e_3 = \frac{\nu_2 \nu_3}{\nu_1}\ , \quad
 \e_4 = \frac{(\nu_1 -\nu_2) (\nu_1 - \nu_3)}{\nu_1}\ ,
\ee
such that the constraint on $\e$, \eqref{eq:conifoldmesonicconstr} is automatically satisfied and does not lead to any constraint on the $\nu$ parameters. On the other hand the evaluation of the constant maps is more straightforward in the $\e$ parametrization, \eqref{eq:mesonicconstmaps}.

\subsection{Cone over \texorpdfstring{$M^{1,1,1}$}{M111}}
We now consider the CY four-fold $X$ defined as the cone over the Sasakian manifold $L = M^{1,1,1,}$, which is the circle bundle over the base $B = \BP^1 \times \BP^2$ \cite{Fabbri:1999hw}. The charge vector of the K\"ahler quotient is given by
\be
 Q = \left(3,3,-2,-2,-2\right)\ ,
\ee
so that the equivariant volume can be computed as
\be
\ba
 \VVV(t, \e) &= \oint_\text{JK} \frac{\mathd \phi}{2 \pi} \frac{\mathe^{\phi t}}
 {(\e_1 + 3 \phi) (\e_2+3 \phi) (\e_3-2 \phi) (\e_4-2 \phi) (\e_5 - 2 \phi)} \\
 &= \frac1{3(\e_2-\e_1)}\, \left( \frac{\mathe^{-t \e_1/3}}
 {(\e_3+\tfrac23 \e_1) (\e_4+\tfrac23 \e_1) (\e_5 + \tfrac23 \e_1)}
 - \frac{\mathe^{-t \e_2/3}}
 {(\e_3+\tfrac23 \e_2) (\e_4+\tfrac23 \e_2) (\e_5 + \tfrac23 \e_2)} \right)\ ,
\ea
\ee
where we took the poles at $\phi = -\e_1 / 3$ and $\phi = -\e_2 / 3$. In the $\lam$-parametrization we have
\begin{multline}
 \BV (\lam, \e) = \frac1{3(\e_2-\e_1)}\,
 \Big( \frac{\mathe^{\lam^2 (\e_2-\e_1)+\lam^3 (\e_3+\tfrac23\e_1)
 +\lam^4 (\e_4+\tfrac23\e_1) + \lam^5 (\e_5+\tfrac23 \e_1)}}
 {(\e_3+\tfrac23 \e_1) (\e_4+\tfrac23 \e_1) (\e_5 + \tfrac23 \e_1)}\\
 - \frac{\mathe^{\lam^1 (\e_1-\e_2)+\lam^3 (\e_3+\tfrac23 \e_2)
 +\lam^4 (\e_4+\tfrac23 \e_2) + \lam^5 (\e_5+\tfrac23 \e_2)}}
 {(\e_3+\tfrac23 \e_2) (\e_4+\tfrac23 \e_2) (\e_5 + \tfrac23 \e_2)} \Big)\ .
\end{multline}
If we now impose the mesonic twist condition, $\lam_\text{mes.}^1+ \lam_\text{mes.}^2 = \tfrac23 (\lam_\text{mes.}^3+\lam_\text{mes.}^4 +\lam_\text{mes.}^5)$, we find
\be
 \BV (\lam_\text{mes.}, \e)= \mathe^{\lam_\text{mes.}^i \e_i}\, \VVV(0, \e)\ .
\ee
Noting that the toric fan can be parametrized by
\be
 v =
 \begin{pmatrix}
 1 & 2 & 0 & 2 \\
 1 & 0 & 2 & 0 \\
 1 & 3 & 3 & 0 \\
 1 & 0 & 0 & 3 \\
 1 & 0 & 0 & 0
 \end{pmatrix}
\ee
it is straightforward to derive the expression for $\BV^\text{mes.} (\mu, \nu)$,
\begin{multline}
 \BV^\text{mes.} (\mu, \nu) = \frac{36\,\mathe^{\nu_\alpha\mu^\alpha}}{\nu_3-\nu_2}
 \Big(
 \frac1{\nu_2\nu_4(6\nu_1+\nu_2-3\nu_3-2\nu_4)} \\
 +\frac1{\nu_3(\nu_2-\nu_3-\nu_4)(6\nu_1-\nu_2-\nu_3-2\nu_4)}
 \Big)\ .
\end{multline}
We refrain from giving the explicit expression for the resulting constant maps terms, which also follow directly from the results above.

\subsection{\texorpdfstring{$A_2$}{A2} geometry}
The $A_2$ geometry is the minimal resolution of the singular space $\BC^2 / \BZ_3$, which is the cone over the Lens space S$^3/\BZ_3$. As a toric quotient, it is defined by the charge matrix
\be
Q =
\begin{pmatrix}
1 & -2 & 1 & 0 \\
0 & 1 & -2 & 1
\end{pmatrix}
\ee
in the chamber $t^1,t^2>0$. The equivariant volume  is given by, \cite{Cassia:2022lfj},
\be
\ba
 \VVV(t,\e) =& \oint_\text{JK} \frac{\mathd\phi_1\mathd\phi_2}{(2\pi\mathi)^2}
 \frac{\mathe^{\phi_1 t^1+\phi_2 t^2}}
 {(\e_1+\phi_1)(\e_2-2\phi_1+\phi_2)(\e_3+\phi_1-2\phi_2)(\e_4+\phi_2)} \\
 =& -\frac{\mathe^{-t^1\e_1-t^2(2\e_1+\e_2)}}
 {(3\e_1+2\e_2+\e_3)(2\e_1+\e_2-\e_4)}
 -\frac{\mathe^{-t^1\e_1-t^2\e_4}}{(\e_1-\e_3-2\e_4)(2\e_1+\e_2-\e_4)} \\
 &+\frac{\mathe^{-t^2\e_4-t^1(\e_3+2\e_4)}}{(\e_1-\e_3-2\e_4)(\e_2+2\e_3+3\e_4)} \ ,
\ea
\ee
where we took the poles for $(\phi_1,\phi_2)$ at $(-\e_1, -\e_2-2 \e_1)$, $(-\e_1, -\e_4)$ and $(-\e_3-2 \e_4, -\e_4)$. In the $\lam$-parametrization we have
\be
\ba
 \BV(\lam,\e) =& -\frac{\mathe^{\lam^3(3\e_1+2\e_2+\e_3)-\lam^4(2\e_1+\e_2-\e_4)}}
 {(3\e_1+2\e_2+\e_3)(2\e_1+\e_2-\e_4)}
 -\frac{\mathe^{\lam^2(2\e_1+\e_2-\e_4)-\lam^3(\e_1-\e_3-2\e_4)}}
 {(\e_1-\e_3-2\e_4)(2\e_1+\e_2-\e_4)} \\
 &+\frac{\mathe^{\lam^1(\e_1-\e_3-2\e_4)+\lam^2(\e_2+2\e_3+3\e_4)}}
 {(\e_1-\e_3-2\e_4)(\e_2+2\e_3+3\e_4)} \ .
\ea
\ee
If we now impose the mesonic twist conditions, $\lam_\text{mes.}^1+\lam_\text{mes.}^3=2\lam_\text{mes.}^2$ and $\lam_\text{mes.}^2+\lam_\text{mes.}^4=2\lam_\text{mes.}^3$, we find
\be
 \BV(\lam,\e) = \mathe^{\e_i\lam_\text{mes.}^i}\, \VVV(0,\e)
 = \frac{3\,\mathe^{\e_i\lam_\text{mes.}^i}}{(3\e_1+2\e_2+\e_3)(\e_2+2\e_3+3\e_4)}\ .
\ee
The toric fan can be defined by the four vectors\footnote{In order to preserve some symmetry of the choice of parametrization, we do not fix the first column vector of $v$ to be with constant entries equal to 1. However, it is clear that the constant vector is spanned by a linear combination of those in \eqref{eq:v-matrix-A2}.}
\be
\label{eq:v-matrix-A2}
 v = 
 \begin{pmatrix}
 1 & 0 \\
 \frac23 & \frac13 \\
 \frac13 & \frac23 \\
 0 & 1
 \end{pmatrix}
\ee
such that a possible parametrization in terms of variables $\mu$ and $\nu$ can be chosen as
\be
\begin{array}{c}
 \lam_\text{mes.}^1 = \mu^1\ ,
 \quad \lam_\text{mes.}^2 =\frac23 \mu^1+ \frac13 \mu^2\ ,
 \quad \lam_\text{mes.}^3 = \frac13 \mu^1 +\frac23 \mu^2\ ,
 \quad \lam^4 = \mu^2\ , \\ \\
 \nu_1 = \frac13\, (3 \e_1 + 2 \e_2 + \e_3)\ ,
 \qquad \nu_2 = \frac13\, (\e_2 + 2 \e_3 +3 \e_4)\ ,
\end{array}
\ee
such that 
\be
 \BV^\text{mes.} (\mu,\nu) = \frac{\mathe^{\nu_\alpha\mu^\alpha}}{3\,\nu_1\nu_2}\ .
\ee
We thus find the equivariant volume of $\BC^2/\BZ_3$,
as expected after the mesonic twist, forgets about the information of the blown up singularity,  
\be
 \BV^\text{mes.}_{A_2} = \frac13\BV_{\BC^2} = \BV_{\BC^2/\BZ_3}\ .
\ee

\subsection{Local \texorpdfstring{$\BP^1\times\BP^1$}{P1xP1}}
Consider the three-fold $X = K_{F_0}$, the canonical bundle over the Hirzebruch surface $ F_0 = \BP^1\times\BP^1$, which is defined by the charge matrix
\be
Q =
\begin{pmatrix}
1 & 1 & 0 & 0 & -2 \\
0 & 0 & 1 & 1 & -2
\end{pmatrix}
\ee
and corresponds to the resolution of the cone over some Sasaki manifold $L$, which can be written as a circle bundle over the base manifold $B = \BP^1\times\BP^1$.

We choose the chamber $t^1 > 0, t^2 > 0$, such that we can evaluate the equivariant volume to
\be
\ba
 \VVV(t,\e) =& \oint_\text{JK} \frac{\mathd\phi_1\mathd\phi_2}{(2\pi\mathi)^2}
 \frac{\mathe^{\phi_1 t^1+\phi_2 t^2}}
 {(\e_1+\phi_1)(\e_2+\phi_1)(\e_3+\phi_2)(\e_4+\phi_2) (\e_5-2 \phi_1 - 2 \phi_2)} \\
 =& \frac{\mathe^{-t^1\e_1-t^2 \e_3}}{(\e_2-\e_1)(\e_4-\e_3) (\e_5 + 2 \e_1 + 2 \e_3)} +  \frac{\mathe^{-t^1\e_1-t^2 \e_4}}{(\e_2-\e_1)(\e_3-\e_4) (\e_5 + 2 \e_1 + 2 \e_4)}\\
 &+ \frac{\mathe^{-t^1\e_2-t^2 \e_3}}{(\e_1-\e_2)(\e_4-\e_3) (\e_5 + 2 \e_2 + 2 \e_3)} +  \frac{\mathe^{-t^1\e_2-t^2 \e_4}}{(\e_1-\e_2)(\e_3-\e_4) (\e_5 + 2 \e_2 + 2 \e_4)} \ ,
\ea
\ee
where we took the residues at the poles $(\phi_1, \phi_2)$ equal to $(-\e_1, -\e_3)$, $(-\e_1, -\e_4)$, $(-\e_2, -\e_3)$ and $(-\e_2, -\e_4)$. In the $\lam$-parametrization we have
\be
\ba
 \BV(\lam,\e) =&\frac{\mathe^{\lam^2 (\e_2 - \e_1) + \lam^4 (\e_4-\e_3) + \lam^5 (\e_5+2 \e_1+2 \e_3)}}{(\e_2-\e_1)(\e_4-\e_3) (\e_5 + 2 \e_1 + 2 \e_3)} +  \frac{\mathe^{\lam^2 (\e_2 - \e_1) + \lam^3 (\e_3-\e_4) + \lam^5 (\e_5+2 \e_1+2 \e_4)}}{(\e_2-\e_1)(\e_3-\e_4) (\e_5 + 2 \e_1 + 2 \e_4)}\\
 &+ \frac{\mathe^{\lam^1 (\e_1 - \e_2) + \lam^4 (\e_4-\e_3) + \lam^5 (\e_5+2 \e_2+2 \e_3)}}{(\e_1-\e_2)(\e_4-\e_3) (\e_5 + 2 \e_2 + 2 \e_3)} +  \frac{\mathe^{\lam^1 (\e_1 - \e_2) + \lam^3 (\e_3-\e_4) + \lam^5 (\e_5+2 \e_2+2 \e_4)}}{(\e_1-\e_2)(\e_3-\e_4) (\e_5 + 2 \e_2 + 2 \e_4)} \ .
\ea
\ee
After imposing the mesonic twist conditions, $\lam_\text{mes.}^1+\lam_\text{mes.}^2=\lam_\text{mes.}^3+\lam_\text{mes.}^4=2 \lam_\text{mes.}^5$, we find
\be
\ba
 \BV(\lam_\text{mes.},\e)
 &= \mathe^{\e_i\lam_\text{mes.}^i}\, \VVV(0,\e) \\
 &= \frac{8\,\mathe^{\e_i\lam^i} \sum_{i=1}^5 \e_5}{ (2 (\e_1 + \e_3) +\e_5)  (2 (\e_1 + \e_4) +\e_5)  (2 (\e_2 + \e_3) +\e_5)  (2 (\e_2 + \e_4) +\e_5)}\ .
\ea
\ee
The toric fan can be parametrized by the five vectors
\be
 v =
 \begin{pmatrix}
 1 & 1 & 1 \\
 1 & 0 & 0 \\
 1 & 1 & 0 \\
 1 & 0 & 1 \\
 1 & \tfrac12 & \tfrac12
 \end{pmatrix}
\ee
such that a possible parametrization in terms of $\mu$ and $\nu$ is given by
\be
 \lam_\text{mes.}^i = v^i_\alpha \mu^\alpha\ , \quad  \nu_1 = \sum_{i=1}^5 \e_i\ ,
 \quad \nu_2 = \e_1 + \e_3 + \frac12\e_5\, \quad \nu_3 = \e_1 + \e_4 + \frac12 \e_5\ .
\ee
We thus find
\be
 \BV^\text{mes.} (\mu,\nu) = \frac{\mathe^{\nu_\alpha \mu^\alpha} \nu_1}
{2 \nu_2 \nu_3 (\nu_1 - \nu_2)  (\nu_1 - \nu_3)}\ .
\ee
which is proportional to $\BV^\text{mes.}$ of the resolved conifold in \eqref{eq:conifoldmesonic}.

\subsection{Cone over \texorpdfstring{$Q^{1,1,1}$}{Q111}}

The four-fold obtained as the total space of the rank-2 vector bundle $\cO (-1,-1) \oplus \cO (-1,-1) $ over $\BP^1 \times \BP^1$ is defined by the charge matrix
\be
Q =
\begin{pmatrix}
1 & 1 & 0 & 0 & -1 & -1 \\
0 & 0 & 1 & 1 & -1 & -1
\end{pmatrix}
\ee
and corresponds to the resolution of the cone over $L = Q^{1,1,1}$, which itself can be written as a circle bundle over the manifold $B=\BP^1\times\BP^1\times\BP^1$ \cite{Fabbri:1999hw}.

We choose the chamber $t^1 > 0, t^2 > 0$, such that we evaluate the equivariant volume to
\be
\ba
 \VVV(t,\e) =& \oint_\text{JK} \frac{\mathd\phi_1\mathd\phi_2}{(2\pi\mathi)^2}
 \frac{\mathe^{\phi_1 t^1+\phi_2 t^2}}
 {(\e_1+\phi_1)(\e_2+\phi_1)(\e_3+\phi_2)(\e_4+\phi_2) (\e_5-\phi_1 - \phi_2) (\e_5-\phi_1-\phi_2)} \\
 =& \frac{\mathe^{-t^1\e_1-t^2 \e_3}}{(\e_2-\e_1)(\e_4-\e_3) \prod_{j=5}^6 (\e_j+ \e_1+\e_3)} +  \frac{\mathe^{-t^1\e_1-t^2 \e_4}}{(\e_2-\e_1)(\e_3-\e_4) \prod_{j=5}^6 (\e_j + \e_1 + \e_4) }\\
 &+ \frac{\mathe^{-t^1\e_2-t^2 \e_3}}{(\e_1-\e_2)(\e_4-\e_3) \prod_{j=5}^6 (\e_j+ \e_2+\e_3)} +  \frac{\mathe^{-t^1\e_2-t^2 \e_4}}{(\e_1-\e_2)(\e_3-\e_4) \prod_{j=5}^6 (\e_j+ \e_2+\e_4)} \ ,
\ea
\ee
where we took the poles for $(\phi_1, \phi_2)$ at $(-\e_1, -\e_3), (-\e_1, -\e_4), (-\e_2, -\e_3), (-\e_2, -\e_4)$. In the $\lam$-parametrization we find
\be
\ba
 \BV(\lam,\e) =&\frac{\mathe^{\lam^2 (\e_2 - \e_1) + \lam^4 (\e_4-\e_3) + \sum_{j=5}^6 \lam^j (\e_j+ \e_1+\e_3)}}{(\e_2-\e_1)(\e_4-\e_3)  \prod_{j=5}^6 (\e_j+ \e_1+\e_3)} +  \frac{\mathe^{\lam^2 (\e_2 - \e_1) + \lam^3 (\e_3-\e_4) + \sum_{j=5}^6 \lam^j (\e_j+ \e_1+\e_4)}}{(\e_2-\e_1)(\e_3-\e_4) \prod_{j=5}^6 (\e_j+ \e_1+\e_4)}\\
 &+ \frac{\mathe^{\lam^1 (\e_1 - \e_2) + \lam^4 (\e_4-\e_3) + \sum_{j=5}^6 \lam^j (\e_j+ \e_2+\e_3)}}{(\e_1-\e_2)(\e_4-\e_3) \prod_{j=5}^6 (\e_j+ \e_2+\e_3)} +  \frac{\mathe^{\lam^1 (\e_1 - \e_2) + \lam^3 (\e_3-\e_4) + \sum_{j=5}^6 \lam^j (\e_j+ \e_2+\e_4)}}{(\e_1-\e_2)(\e_3-\e_4) \prod_{j=5}^6 (\e_j+ \e_2+\e_4)} \ .
\ea
\ee

If we now impose the mesonic twist conditions, $\lam_\text{mes.}^1+\lam_\text{mes.}^2=\lam_\text{mes.}^3+\lam_\text{mes.}^4=\lam_\text{mes.}^5 + \lam_\text{mes.}^6$, we find
\be
 \BV(\lam_\text{mes.},\e) = \mathe^{\e_i\lam_\text{mes.}^i}\, \VVV(0,\e)\ .
\ee
We can parametrize the toric fan by the six vectors
\be
 v =
 \begin{pmatrix}
 1 & 1 & 1 & 1 \\
 1 & 0 & 0 & 0 \\
 1 & 1 & 0 & 1 \\
 1 & 0 & 1 & 0 \\
 1 & 1 & 1 & 0 \\
 1 & 0 & 0 & 1
 \end{pmatrix}
\ee
suggesting a possible parametrization for $\mu$ and $\nu$. We refrain from giving the explicit form of $ \BV^\text{mes.}$ and the resulting constant maps terms, which follow straightforwardly from the formulae above but are somewhat cumbersome.

\section{Fibration over \texorpdfstring{$\WPL$}{WPL}}
\label{sec:4}

In this section we consider spaces $Y$ of the form of a fibration of a toric $d$-fold $X$ over the weighted projective line $\WPL$:
\be
\label{eq:XYWPL}
\begin{array}{ccc}
 X & \to & Y \\
 && \downarrow \\
 && \WPL
\end{array}
\ee
We will be primarily interested in the case when $X$ is a CY manifold, whose equivariant volume obeys the mesonic twist condition. However, we also explore some compact fibers towards the end of this section, in order to gain some intuition for the black hole solutions discussed in section \ref{sec:6.2}.

As previously discussed, the mesonic twist in the fiber corresponds to a blow down of the non-trivial two-cycles of $X$ that results in $X$ becoming a singular space $X_\circ\cong C(L)$.
Applying the mesonic twist fiberwise, however, does not imply that the total space $Y$ also becomes a cone. In fact, if $Y$ is toric, we can generically expect that it will have an additional non-trivial two-cycle corresponding to the base of the fibration $\WPL$. Only when we impose a further mesonic twisting of the K\"ahler modulus of the spindle, we find that $Y$ becomes a cone $Y_\circ\cong C(M)$, where $M$ is the total space of the fibration of $L$ over $\WPL$, cf.\ \eqref{eq:Sigmafibrations}. We show explicitly how this works at the end of our first example, when $X = \BC^d$.

\subsection{\texorpdfstring{$\BC^d$}{Cd} fiber}
\label{subsec:4.2}
Let us first consider the geometry of $\BC^d$ fibered over the spindle.
Both the fiber and the base have been discussed separately in the basic examples in Section~\ref{sec:3}, so now we can directly deal with the total space $Y$ in \eqref{eq:XYWPL}.

Fibering $\BC^d$ over the spindle, corresponds to the direct sum of $d$ line bundles $L_i=\mathcal{O}_{\WPL}(-n_i)$ with Chern classes such that
\be
 \int_{\WPL} c_1(L_i) = -n_i \in \BZ_{\leq 0}\ .
\ee
The total space of the fibration can be described as a K\"ahler quotient $Y=\BC^{d+2}//U(1)$, associated to the charge matrix
\be
 Q=(-n_1,-n_2, \dots, -n_d,a,b)\ ,
\ee
for a choice of one K\"ahler modulus $t>0$.
The quotient $Y$ is CY if it satisfies
\be
\label{eq:CYforY}
 \sum_{i=1}^d n_i = a+b\ .
\ee
The equivariant volume of $Y$ is
\be
\begin{aligned}
 \VVV_Y(t,\e) &= \oint\frac{\mathd\phi}{2\pi\mathi}
 \frac{\mathe^{\phi t}}
 {(\e_++a\phi)(\e_-+b\phi) \prod_{i=1}^d (\e_i-n_i\phi)} \\
 &= \frac{a^{d}\, \mathe^{-\frac{\e_+ t}{a}}}
 {(a\e_--b\e_+)\, \prod_{i=1}^d(a\e_i+n_i\e_+)}
 - \frac{b^{d}\, \mathe^{-\frac{\e_- t}{b}}}
 {(a\e_--b\e_+)\, \prod_{i=1}^d(b\e_i+n_i\e_-)}\ ,
\end{aligned}
\ee
where $\e_i$ are the equivariant parameters on $\BC^d$, $\e_\pm$ those of the base $\WPL$ as in Section~\ref{sec:spindle}, and we took the poles at $\phi = - \frac{\e_+}{a}$ and $\phi = - \frac{\e_-}{b}$. We observe that the answer has a natural interpretation in terms of the equivariant volume of $\BC^d$, \eqref{eq:Cequivol},
\be
 \VVV_Y(t,\e)
 = \frac{\mathe^{-\frac{\e_+t}{a}}}{\NN}
 \VVV_{\BC^d}\left(\e_i+\frac{n_i}{a}\e_+\right)
 - \frac{\mathe^{-\frac{\e_-t}{b}}}{\NN}
 \VVV_{\BC^d} \left(\e_i+\frac{n_i}{b}\e_-\right)\ ,
\ee
where we already used the definition \eqref{eq:definingsimplifiedparameterforthespindle}, $\NN = a\e_--b\e_+$, in this case simply as short-hand since $\VVV_Y$ is still understood as a function of $\e_\pm$. In this case, the fiber has no K\"ahler moduli, but its equivariant parameters $\e_i$ get shifted by the fluxes $n_i$ multiplied by the equivariant parameters of the normal bundle to the fixed point in $\WPL$.

The equivariant volumes in the $\lam$-parametrization enjoy a similar relation, which can be seen via \eqref{eq:maineqforcalculationglamparametrization}:
\be
\label{eq:MZequivolCnoverspindle}
\begin{aligned}
 \BV_Y(\lam,\e) &= \frac{\mathe^{(\e_i + \frac{n_i}{a} \e_+) \lam^i + \frac{\NN}{a} \lam^-}}
 {\NN \prod_{i=1}^d(\e_i+\frac{n_i}{a}\e_+)}
 - \frac{\mathe^{(\e_i + \frac{n_i}{b} \e_-) \lam^i - \frac{\NN}{b} \lam^+}}
 {\NN \prod_{i=1}^d(\e_i+\frac{n_i}{b}\e_-)} \\
 & = \frac{\mathe^{\frac{\NN}{a} \lam^-}}{\NN}\,
 \BV_{\BC^d} \left(\lam^i, \e_i+\frac{n_i}{a}\e_+\right)
 - \frac{\mathe^{-\frac{\NN}{b} \lam^+}}{\NN}\,
 \BV_{\BC^d} \left(\lam^i, \e_i+\frac{n_i}{b}\e_-\right) \ .
\end{aligned}
\ee
In analogy to the preferred parametrization in \cite{Martelli:2023oqk},
which is also useful holographically, we define
\be
 \e^+_i := \e_i + \frac{n_i}{a}\, \e_+\ ,
 \qquad
 \e^-_i := \e_i + \frac{n_i}{b}\, \e_-\ ,
\ee
as well as
\be
 \lam^i_+ := \lam^i + \frac{\NN}{a\, d\, \e^+_i}\, \lam^-\ ,
 \qquad
 \lam^i_- := \lam^i - \frac{\NN}{b\, d\, \e^-_i}\, \lam^+\ ,
\ee
such that we can rewrite the equivariant volume of the total space as
\be
 \BV_Y (\lam, \e ) = \frac1{\NN}
 \left( \BV_{\BC^d} (\lam^i_+, \e^+_i) -  \BV_{\BC^d} (\lam^i_-, \e^-_i) \right)\ .
\ee
This allows us in turn to write down the equivariant constant maps terms of the total space as
\be
\label{eq:equivariantconstantmapsonYcomingfromC}
\begin{aligned}
 N^Y_{0, 0} =&\, \frac1{\NN} \left( N_{0, 0}^{\BC^d} (\lam^i_+, \e^+_i)
 -  N_{0, 0}^{\BC^d} (\lam^i_-, \e^-_i) \right)\ , \\
 N^Y_{1, 0} =&\, \frac1{\NN} \left( N_{1, 0}^{\BC^d} (\lam^i_+, \e^+_i)
 -  N_{1, 0}^{\BC^d} (\lam^i_-, \e^-_i) \right) \\
 &+\frac1{24}\, \left( (\e^+_i \lam^i_+)
 \frac{\sum_i \e^+_i}{a\prod_i \e^+_i}
 + (\e^-_i \lam^i_-)
 \frac{\sum_i \e^-_i}{b\prod_i \e^-_i}
 \right)\ , \\
 N^Y_{\mathfrak{g} > 1, 0} =&\, \frac1{\NN}
 \left( N_{\mathfrak{g}, 0} ^{\BC^d} (\lam^i_+, \e^+_i)
 - N_{\mathfrak{g}, 0}^{\BC^d} (\lam^i_-, \e^-_i) \right) \\
 &+ \frac{|B_{2\mathfrak{g}}|}{4\mathfrak{g}}\frac{|B_{2\mathfrak{g}-2}|}{(2\mathfrak{g}-2)}
 \frac{(-1)^\mathfrak{g}}{(2\mathfrak{g}-2)!}
 \left(\frac{\sum_{i < j} \e^+_i \e^+_j}{a \prod_i \e^+_i}
 + \frac{\sum_{i < j} \e^-_i \e^-_j}{b \prod_i \e^-_i} \right)\ ,
\end{aligned}
\ee
where the explicit expressions for $N_{\mathfrak{g}, 0} ^{\BC^d}$ can be found in \eqref{eq:cplanemaps0and1}-\eqref{eq:cplanemapsg}.

\subsubsection*{Mesonic twist on \texorpdfstring{$Y$}{Y}}
Let us now illustrate the mesonic twist condition on the total space $Y$, which will provide an explicit realization of the relation to $C(M)$. At the level of the $\lam$ parameters, the mesonic twist simply gives the additional relation,
\be
 a\, \lam^+_\text{mes.} + b\, \lam^-_\text{mes.} = \sum_{i = 1}^d n_i \lam^i_\text{mes.}\ ,
\ee
such that we find
\be
 \BV_Y (\lam_\text{mes.}, \e ) = \frac{\mathe^{\e_+ \lam^+_\text{mes.} +\e_-
 \lam^-_\text{mes.} + \e_i \lam^i_\text{mes.}}}{\NN}
 \left( \frac1{\prod_i \e^+_i} - \frac1{\prod_i \e^-_i} \right)\ .
\ee
The mesonic constraint on the $\e$ parameters, \eqref{eq:mesonicconstraint}, can be written as follows:
\be
\label{eq:mesconicconstraintonY}
\begin{array}{c}
 a\, \e_- = \frac{\NN\, \prod_j \e^-_j}{\prod_j \e^-_j - \prod_k \e^+_k}\ ,
 \qquad
 b\, \e_+ = \frac{\NN\, \prod_k \e^+_k}{\prod_j \e^-_j - \prod_k \e^+_k}\ , \\
 \\
 \e_i = \frac{\e^+_i\, \prod_j \e^-_j -  \e^-_i\, \prod_k \e^+_k}{\prod_j \e^-_j - \prod_k \e^+_k}\ ,
\end{array}
\ee
which gives a single non-linear relation between the $\e$ parameters. 

Let us further specialize to the simplest case $d = 1$ in order to fully flesh out the previous equations. We further choose the case of the smooth sphere for simplicity, such that
\be
    a = b= 1\ , \qquad \Rightarrow \qquad n = 2\ ,
\ee
where we suppress the index $i$ as it is no longer needed. In this case we have $X = \BC$ simply being the cone over the circle, such that $L = \mathrm{S}^1$, $X = C(L)$. The fibration of $L$ over the two-sphere with charge $n=2$ simply produces the second lens space,
\be
    M = \mathrm{S}^3/\BZ_2 = L(2; 1)\ .
\ee
In turn, the cone over $M$ is simply given by 
\be
    C(M) = \BC^2/\BZ_2\ ,
\ee
whose equivariant volume and corresponding constant maps terms follow from the discussion in Section~\ref{sec:3}. On the other hand, the mesonic constraint on $Y$, in the case when $X = \BC$ above, \eqref{eq:mesconicconstraintonY}, can be easily seen to reduce to
\be
    \e = 0\ ,
\ee
with $\e_\pm$ unconstrained. Plugging this back into the equivariant volume, we find
\be
    \BV_Y (\lam_\text{mes.}, \e) = \frac{\mathe^{\e_+ \lam^+_\text{mes.} +\e_- \lam^-_\text{mes.}}}{2\, \e_+ \e_-},
\ee
which is precisely matching with $\BV_{\BC^2/\BZ_2} (\lam, \e)$ upon the identification of $\pm$ indices with the ${1,2}$ indices of $\BC^2$. The precise match between the equivariant constant maps terms \eqref{eq:equivariantconstantmapsonYcomingfromC} and \eqref{eq:cplanemaps0and1} also follows simply from the same identification.

\subsection{General and mesonic fiber}

Let us now consider a more general fibration, where the fiber $X$ is a toric CY $d$-fold with equivariant volume $\BV_X (\lam^i, \e_i)$.
We will assume for the moment that the fibration of $X$ over the spindle also has a description as a toric symplectic quotient, and specifically we will assume that the corresponding matrix of charges for the total space $Y$ can be written schematically as the block triangular matrix
\be
 Q_Y = \left[
 \begin{array}{cc|c}
 a & b & -n_i \\
 \hline
  &  & Q_X
 \end{array}
 \right]
\ee
where $Q_X$ is the matrix of charges of the fiber and $n_i$ are (non-negative) integers describing the fibration.~\footnote{The structure group of the fibration corresponds to the torus group of isometries of the fiber, which we identify with $U(1)^n$. The charges $n_i$ then represent the first Chern numbers of the associated principal $U(1)$-bundles. Since not all of these $U(1)$'s act effectively, the topology of the total space $Y$ only depends on $d=\dim X$ linear combinations of the $n_i$, as in \eqref{eq:reducedintermsoforiginalcharges}. Another way to see this, is to observe that $Q_Y$ is only defined up to the left-action of $SL_{r+1}(\BZ)$ which can be used to shift the vector of $n_i$ by the rows of the matrix $Q_X$.} In this description, $Y$ can be explicitly identified with the quotient $\BC^{n+2}//U(1)^{r+1}$ of complex dimension $n-r+1=d+1$ ($d$ being the dimension of $X$), which is CY provided that the charges satisfy the condition
\be
\label{eq:CYconditiononchargen}
	\sum_i n_i = a + b\ .
\ee
The equivariant volume of $Y$ is then given by the contour integral
\be
\label{eq:VYfromVX}
\ba
 \BV_Y(\lam,\e)
 &= \oint\frac{\mathd\phi}{2\pi\mathi}
 \frac{\mathe^{(\e_++a\phi)\lam^++(\e_-+b\phi)\lam^-}}{(\e_++a\phi)(\e_-+b\phi)}
 \exp\left(-n_i\phi\frac{\partial}{\partial\e_i}\right)\cdot
 \BV_X(\lam,\e) \\
 &= \oint\frac{\mathd\phi}{2\pi\mathi}
 \frac{\mathe^{(\e_++a\phi)\lam^++(\e_-+b\phi)\lam^-}}{(\e_++a\phi)(\e_-+b\phi)}
 \BV_X(\lam^i,\e_i-n_i \phi) \\
 &= \frac{\mathe^{\frac{\NN}{a}\lam^-}}{\NN}\, \BV_X(\lam^i,\e_i+\frac{n_i}{a}\e_+)
  - \frac{\mathe^{-\frac{\NN}{b}\lam^+}}{\NN}\, \BV_X(\lam^i,\e_i+\frac{n_i}{b}\e_-)\ ,
\ea
\ee
with the same parameters $\lam^\pm$ and $\NN$ on the spindle.
Without further assumptions on the form of $\BV_X$ this expression has no obvious further simplification, and the constant maps terms follow directly from the generic formula.

\subsubsection*{Mesconic fiber}
We can be more specific when $X$ has blown down two cycles, i.e.\ subjected to the mesonic twist condition, \eqref{eq:mesonictwist}. Suppose further that the equivariant volume of $X$ is known and given by a function $\BV_X^\text{mes.}(\mu,\nu)$, where $\nu_\alpha$ are a set of equivariant parameters associated to the $d$ holomorphic toric isometries of $X$, i.e.\ they are \emph{effective} equivariant parameters in the fiber, while $\mu^\alpha$ are \emph{ineffective} K\"ahler parameters\footnote{Because of the mesonic twist condition, there are no \emph{effective} K\"ahler parameters in the fiber.} conjugate to the $\nu_\alpha$. We therefore know that the general expression for the equivariant volume of $X$ is given by
\be
\label{eq:mesonicequivolparametrizationCnu}
	\BV_X^\text{mes.} (\mu, \nu) = \mathe^{\nu_\alpha \mu^\alpha}\, \VVV_X (0, \e (\nu) )\ ,
\ee
where $\VVV_X$ depends on further details of the manifold $X$, a particular example (apart from the complex plane discussed previously) is given in \eqref{eq:conifoldmesonic}.
 
The equivariant volume of the total space $Y$ can be expressed in terms of that of $X$ via the following integral formula
\be
\label{eq:MZequivolmesonicoverspindle}
\ba
 \BV_Y(\lam^\pm,\e_\pm,\mu,\nu)
 &= \oint\frac{\mathd\phi}{2\pi\mathi}
 \frac{\mathe^{(\e_++a\phi)\lam^++(\e_-+b\phi)\lam^-}}{(\e_++a\phi)(\e_-+b\phi)}
 \exp\left(-p_\alpha\phi\frac{\partial}{\partial\nu_\alpha}\right)\cdot
 \BV^\text{mes.}_X(\mu,\nu) \\
 &= \oint\frac{\mathd\phi}{2\pi\mathi}
 \frac{\mathe^{(\e_++a\phi)\lam^++(\e_-+b\phi)\lam^-}}{(\e_++a\phi)(\e_-+b\phi)}
 \BV^\text{mes.}_X(\mu,\nu_\alpha-p_\alpha\phi) \\
 &= \frac{\mathe^{\frac{\NN}{a} \lam^-}}{\NN}\, \BV^\text{mes.}_X(\mu,\nu_\alpha+\frac{p_\alpha}{a}\e_+)
  - \frac{\mathe^{-\frac{\NN}{b} \lam^+}}{\NN}\,  \BV^\text{mes.}_X(\mu,\nu_\alpha+\frac{p_\alpha}{b}\e_-)\ ,
\ea
\ee
again using $\NN = a\e_- -b\e_+$, where the contour has been chosen so that it encircles the poles at $\phi=-\frac{\e_+}{a}$ and $\phi=-\frac{\e_-}{b}$ and there are no other poles coming from $\BV_X^\text{mes.}(\mu,\nu)$.
The variables $p_\alpha$ parametrize the fibration of $X$ over the spindle in terms of the Chern classes of the associated principal $U(1)$ bundles. Note that the original charges $n_i$, subject to \eqref{eq:CYconditiononchargen} implicitly depend on $p_\alpha$ via
\be
\label{eq:reducedintermsoforiginalcharges}
	p_\alpha = v^i_\alpha\, n_i\ . 
\ee
for $v_\alpha^i$ a toric fan matrix for the fiber $X$.
Roughly speaking, the charges $n_i$ describe the fibration in terms of the homogeneous coordinates of $X$, while the charges $p_\alpha$ describe the fibration in terms of its intrinsic coordinates.
Choosing the toric fan in the convention where $v_1 = (1, 1, \dots, 1)$, we thus find the CY condition on the reduced charges $p_\alpha$ as
\be
	p_1 = a + b\ ,
\ee
in agreement with \cite{Martelli:2023oqk}.~\footnote{Note that the charges in \cite{Martelli:2023oqk} are normalizaed to carry an additional inverse factor of $(a b)$ with respect to the present conventions.}
In analogy to the previous subsection, we can redefine equivariant parameters on the total space of the fibration as
\be
 \nu^+_\alpha := \nu_\alpha+\frac{p_\alpha}{a}\e_+\,,
 \qquad
 \nu^-_\alpha := \nu_\alpha+\frac{p_\alpha}{b}\e_-\ ,
\ee
where we notice that
\be
 \nu^-_\alpha - \nu^+_\alpha = \frac{p_\alpha}{ab}\, \NN\ ,
\ee
so that only $(d+1)$ of the $\nu$ parameters are linearly independent. We also use the redefinitions
\be
 \mu^\alpha_+ := \mu^\alpha + \frac{\NN}{a\, d\, \nu_{\alpha, +}}\, \lam^-\ .
 \qquad
 \mu^\alpha_- := \mu^\alpha - \frac{\NN}{b\, d\, \nu_{\alpha, -}}\, \lam^+\ ,
\ee
This allows us to rewrite the volume of the total space as
\be
 \BV_Y (\mu, \nu ) = \frac1{\NN}
 \left( \BV^\text{mes.}_X (\mu_+,\nu^+) - \BV^\text{mes.}_X (\mu_-,\nu^-) \right)\ .
\ee
In order to derive the equivariant constant maps terms, however, we need to be careful about the $\lam^i$ derivatives, similarly to the discussion of the mesonic constraint in Section~\ref{sec:2.3}.

We can alternatively look at the mesonic twist of $X$ in the original variables, $\lam$ and $\e$:
\be
\BV_X (\lam_\text{mes.}, \e) = \mathe^{\e_i \lam^i_\text{mes.}}\,\VVV^\text{mes.}_X (\e)\ ,
\ee
which from \eqref{eq:VYfromVX} would lead to
\be
 \BV_Y(\lam,\e)
 = \frac{\mathe^{\frac{\NN}{a}\lam^-}}{\NN}\, \BV_X(\lam^i_\text{mes.},\e_i+\frac{n_i}{a}\e_+)
  - \frac{\mathe^{-\frac{\NN}{b}\lam^+}}{\NN}\, \BV_X(\lam^i_\text{mes.},\e_i+\frac{n_i}{b}\e_-)\ .
\ee
As above, we can define
\be
 \e^+_i := \e_i+\frac{n_i}{a}\e_+\,, \qquad \e^-_i := \e_i+\frac{n_i}{b}\e_-\ ,
\ee
and
\be
 \lam^i_{+,\text{mes.}} := \lam^i_{\text{mes.}} + \frac{\NN}{a\, n\, \e^+_i}\, \lam^-\ ,
 \qquad
 \lam^i_{-,\text{mes.}} := \lam^i_{\text{mes.}} - \frac{\NN}{b\, n\, \e^-_i}\, \lam^+\ ,
\ee
where $n$ is number of homogenous coordinates on $X$ (not to be confused with the fluxes $n_i$).
We thus arrive at
\be
 \BV_Y (\lam^i_{\pm, \text{mes.}}, \e^\pm_i)
 = \frac1{\NN} \left( \BV_X (\lam^i_{+, \text{mes.}}, \e^+_i)
 - \BV_X (\lam^i_{-, \text{mes.}}, \e^-_i) \right)\ .
\ee
This parametrization uses the redundant set of charges, $n_i$, which are implicit functions of the charges $p_\alpha$ as explained around \eqref{eq:reducedintermsoforiginalcharges}. Using the explicit knowledge of $\BV_X (\lambda_\text{mes.}, \e)$, we can simplify the genus-zero and genus-one constant maps terms as
\be
\label{eq:spindlefibermestwistconstantmaps}
\ba
 N^Y_{0, 0} =&\, \frac1{\NN} \left( N_{0, 0} ^{X, \text{mes.}} (\lam^i_+, \e^+_i)
 -  N_{0, 0} ^{X, \text{mes.}} (\lam^i_-, \e^-_i) \right)\ , \\
 N^Y_{1, 0} =& \frac1{\NN} \left( N_{1, 0} ^{X, \text{mes.}} (\lam^i_+, \e^+_i)
 -  N_{1, 0} ^{X, \text{mes.}} (\lam^i_-, \e^-_i) \right) \\
 &+ \frac1{24\, a b} \left( b\, \sum_{i} \frac{\partial \BV_X (\lam^i,  \e^+_i)}{\partial \lam^i} \Big|_{\lam = 0}  (\e^+_i \lam_{+,\text{mes.}}^i) + a\, \sum_{i} \frac{\partial \BV_X (\lam^i,  \e^-_i)}{\partial \lam^i} \Big|_{\lam = 0}  (\e^-_i \lam_{-,\text{mes.}}^i) \right)
\ea
\ee
where the constant maps terms $N_{\mathfrak{g}, 0}^{X,\text{mes.}}$ of the fiber $X$ are given in \eqref{eq:mesonicconstmaps}, and we omitted the ``$\text{mes.}$'' label on the $\lam$ variables for brevity. 

\subsection*{Mesonic charge constraint}
Just as the $\e$ can be found explicitly (but not uniquely) in terms of $\nu$ via the mesonic constraint \eqref{eq:mesonicconstraint}, we can (but is not strictly needed if we directly use \eqref{eq:MZequivolmesonicoverspindle}) find a set of constraints on the charges $n_i$ to lift the redundancy.
As previously noticed in \cite{Hosseini:2019ddy}, one possible way to do this consists in imposing the mesonic charge constraints~\footnote{In the language of \cite{Hosseini:2019ddy}, the mesonic charge constraint can be thought of as acting with the operator $\sum_k n_k\, \partial_{\e_k}$ on the mesonic constraints \eqref{eq:mesonicconstraint}.}
\be
\label{eq:mesonicchargeconstraint}
 n_i \stackrel{!}{=} \left.\sum_k n_k\,
 \frac{\partial^2\log\BV_Y}{\partial\e_k\partial\lam^i}\right|_{\lam=\lam_{\rm mes.}} \ ,
 \qquad \text{for $i=1,\dots,n$.}
\ee
as a set of non-linear equations for the $n_i$.
The motivation for doing so is the following.
First, we notice that the derivative w.r.t.\ $\lam^i$ goes through the integral in \eqref{eq:VYfromVX} and only acts on the function $\BV_X$, and the mesonic twist on the $\lam^i$ is imposed only after taking the derivative.
We can then show that the equations in \eqref{eq:mesonicchargeconstraint} are not all linearly independent, in fact, if we contract with the matrix $v_\alpha^i$, we obtain
\be
 v_\alpha^i n_i
 = \sum_k n_k\,\frac{\partial}{\partial\e_k}
 v_\alpha^i\frac{\partial}{\partial\lam^i}\log\BV_Y
 = \sum_k n_k\,\frac{\partial}{\partial\e_k} v^i_\alpha\e_i
 = v^i_\alpha \sum_k \delta_i^k n_k
\ee
which is trivial even before the mesonic twist on $\lam$ due to \eqref{eq:vdlamBV}.
This implies that, as in the case of the constraints \eqref{eq:mesonicconstraint}, only $n-d$ of these equations are linearly independent, and they can be used to lift the redundancy from the $n_i$ parameters to the $p_\alpha$ parameters.
As mentioned, the specialization $\lam=\lam_{\rm mes.}$ is not strictly necessary for this procedure to work, however it does make the expression in the r.h.s.\ of \eqref{eq:mesonicchargeconstraint} considerably simpler.

To give a tractable example of this constraint, we can again consider the resolved conifold of Section~\ref{sec:4.2}, where we showed that the mesonic 
constraint simplifies to 
\be
 \e_1 \e_2 = \e_3 \e_4\ ,
\ee
such that the mesonic charge constraint becomes
\be
 \e_1 n_2 + \e_2 n_1 = \e_3 n_4 + \e_4 n_3\ ,
\ee
again in agreement with \cite{Hosseini:2019ddy}. A similar charge constraint can be imposed on all other examples in Section~\ref{sec:3.5} upon a further fibration over a spindle base as described here.

Finally, we can present an even more simplified expression genus-one constant maps terms,
assuming the mesonic constraints for $\e_i$, \eqref{eq:mesonicconstraint}, and $n_i$, \eqref{eq:mesonicchargeconstraint},
\be
\ba
 N^Y_{1, 0} =&\, \frac1{\NN} \left( N_{1, 0} ^{X, \text{mes.}} (\lam^i_+, \e^+_i)
 -  N_{1, 0} ^{X, \text{mes.}} (\lam^i_-, \e^-_i) \right) \\
 & +\frac1{24}\, \left( \frac{(\e^+_i \lam^i_+)}{a}\,\VVV^\text{mes.}_X (\e^+_i)
 \sum_i \e^+_i + \frac{(\e^-_i \lam^i_-)}{b}\,\VVV^\text{mes.}_X (\e^-_i)
 \sum_i \e^-_i \right)\ . 
 \ea
\ee

We refrain from spelling out the mesonic twist and constraint on the total space $Y$ in this case to avoid possible confusion with what we did so far, but it follows straightforwardly from the discussion in the previous subsection, around \eqref{eq:mesconicconstraintonY}.

\section{M2-brane backgrounds with flux}
\label{sec:5}
As explained in the introduction, our topological string examples are primarily motivated by charged brane configurations, particularly M2-brane constructions preserving at least $\cN = 2$ supersymmetry, see \cite{Gabella:2012rc} and references therein. While the relevance of cones over Sasakian manifolds extends beyond M2-branes, we focus on this specific case for clarity. It is important to emphasize that our approach from here on is guided by supersymmetry, beginning from the low-energy description. Although supergravity does not capture all details, as illustrated in Figures \ref{fig:new} and \ref{fig:1}, we rely on supersymmetry to fix the topology of the underlying manifold $X$. The immediate goal here is to clarify the claim that the $\lam$ parameters in the equivariant volume are conjugate to the fluxes $N_{\rm M2}$, proposed in terms of an extremization procedure in \cite{Couzens:2018wnk,Gauntlett:2018dpc,Hosseini:2019use,Martelli:2023oqk}. The main results of these references in practice illustrate the validity of equations \eqref{eq:mainconjecture-pert} and \eqref{eq:mainconjecture-pert-sigma} at leading order in the M2-brane charge $N$. In \cite{Cassia:2025jkr}, we elaborate in more detail on many of the aspects touched upon here and reformulate the relation as a full integral transform. The present section thus serves as an introduction to the companion paper, while also physically motivating the examples we have considered in the previous sections.

We start with the low-energy description of M-theory, the two derivative 11d supergravity action, whose bosonic fields are the metric $g_{\mu\nu}$ and a three-form field $C_3$ with a corresponding field strength $F_4 = \mathd C_3$. The bosonic action is given by
\be
 S_\mathrm{11d} = \frac{1}{2\, \kappa_\mathrm{11d}^2}
 \int \Big[ R \star 1 - \frac12\, F_4 \wedge \star F_4
 - \frac16\, C_3 \wedge F_4 \wedge F_4 \Big]\ ,
\ee
where $\kappa^2_\mathrm{11d} := 8 \pi G_\mathrm{11d} = (2 \pi l_P)^9 / 4 \pi$. The equations of motion and supersymmetry variations of this theory have been analyzed carefully in many references. The simplest and most prominent supersymmetric backgrounds correspond to the multiple spacetime filling M2 and M5-branes, but we will also be interested in cases where the branes wrap internal cycles in a supersymmetric way, see e.g.\ \cite{Gauntlett:2003di} for a pedagogical review. 

\subsection{Spacetime filling branes}
\label{sec:5.1}
\subsubsection*{11d background solution}
One type of supersymmetry-preserving (or BPS) 11d background of multiple spacetime filling M2-branes can be written as~\footnote{There are even more general backgrounds, which we have neglected in the presented form of the metric and three-form backgrounds, which correspond to further deformations of the underlying M2-brane theory, see e.g.\ \cite{Pope:2003jp,Bena:2004jw} and references thereof. Even though we do not consider the explicit 11d solutions, the topological string description still allows to identify such deformations in the dual field theory picture.}
\be
\label{eq:11d-metric}
 \mathd s^2 = H(r)^{-2/3}\, \eta_{ij} \mathd x^i \mathd x^j
 + H(r)^{1/3}\, [\mathd r^2 + r^2\, \mathd s^2 (L) ]\ ,
\ee
with 
\be
 H(r) = 1 +\alpha_2\, \frac{q}{r^6}\ ,
\ee 
and $\alpha_2 = 32 \pi^2 l_P^6$ in the standard normalization. Here the index $i = 0, 1, 2$ labels the coordinates along the spacetime directions where the branes are located. The second part of the metric corresponds to the cone over a 7d Sasaki--Einstein space $L$, which we denote as $X_\circ = C(L)$, with metric as in \eqref{eq:conicalmetricX}.
The transverse space $X_\circ$ to the M2-brane directions has a conical singularity at $r=0$, and is only smooth when $L$ is a sphere, S$^7$, with the round metric. In the general case, $X_\circ$ is a non-compact Calabi--Yau four-fold singularity. Some toric examples of resolutions of such spaces were given in section \ref{sec:3.5}. Importantly, the above background is a supersymmetric solution only in the presence of a non-vanishing three-form field, 
\be
 C_3 = H(r)^{-1}\, \mathd x^0 \wedge \mathd x^1 \wedge \mathd x^2\ .
\ee
In the near-horizon limit, $r \to 0$, one finds the smooth product space $\AdS_4\times L$ with both radii proportional to $q^{1/6}$, 
\be
 \mathd s^2_{r \to 0} = R^2_L\, \left( \frac14\, \mathd s^2 (\AdS_4)
 + \mathd s^2 (L)  \right)\ , \qquad F_4 = \frac38\, R^3_L\, \omega_{\AdS_4}\ ,
\ee
with $R_L = (\alpha_2\, q)^{1/6}$, and $\omega_{\AdS_4}$ the volume form with unit radius. We can canonically define the integer $N$ as the M2-brane electric charge (here at leading order as we consider the two derivative theory, $N_{\rm M2} = N$, see App.~\ref{app:A}), 
\be
\label{eq:4dN}
 N = \frac1{(2 \pi l_P)^6} \int_L \star F_4
 = \frac{6\, R_L^6}{(2 \pi l_P)^6}\, \VVV_L = \frac{3\, \VVV_L}{\pi^4}\,
 q \in \BZ\ ,
\ee
where $\VVV_L$ is the volume of $L$ for unit radius, such that the charge $N$ coincides with $q$ for the maximally supersymmetric case when $L$ is the round S$^7$. The three-form field plays a crucial role as its flux through the compact space balances its volume (and the curvature of the $\AdS_4$ factor) and does not allow the size of $L$ to shrink arbitrarily.

To clarify the holographic picture, schematically shown in Figure~\ref{fig:1}, the AdS/CFT correspondence applies precisely in the near-horizon limit, $\AdS_4 \times L$, and relates M-theory on this background to the gauge theory localized on the M2-branes.~\footnote{While we focus on M2-branes, the discussion can almost directly apply to D3 or M5-branes, with only minor differences due to their dimensionalities.} In the absence of a non-perturbative definition of the bulk theory—which we argue is captured by equivariant topological strings—the low-energy supergravity description is typically applied in the bulk and related to a strong coupling limit in the dual field theory. 11d supergravity typically allows for consistent truncations of finite sets of KK modes on the compact space $L$ to an effective four-dimensional gauged supergravity, see \cite{deWit:1986oxb,Gauntlett:2007ma,Gauntlett:2009zw,Cassani:2011fu,Nicolai:2011cy,Cassani:2012pj}. In this case, the effective Newton constant in the lower dimensional theory, $G_\mathrm{4d}$, depends explicitly on the internal volume, see e.g.\ \cite[Section 5]{Marino:2011nm}, 
\be
\label{eq:4dGN}
 \frac{R_L^2}{G_\mathrm{4d}} \propto \frac{N^{3/2}}{\sqrt{\VVV_L}}\ ,
\ee
with the combination on the left hand side setting the overall scale of 4d supergravity quantities (since $R_L$ is proportional to the $\AdS_4$ radius).~\footnote{In presence of higher-derivative corrections in 11d, we expext that the above equation holds with the substitution of $N$ with $N_{\rm M2}$, see App.~\ref{app:A}.}
Notice that the M2-brane charge $N$ is a natural parameter both in the field theory description,  corresponding to the gauge group rank, and in the supergravity description, via \eqref{eq:4dGN}. On the other hand, the geometry of $L$ itself only depends on the flux $N$ in its overall size and not topologically. This is the fundamental origin of the need for different ensembles between the topological string ($\lam$) and the field theory/supergravity ($N$ or $G_\mathrm{N}$) descriptions.

\subsubsection*{Off-shell geometry}
The dimensional reduction described above requires satisfying the equations of motion and supersymmetry conditions, which fix the metric on $L$ and all other background fields. In contrast, our current approach considers arbitrary metrics on a fixed topology, with topological string amplitudes allowing us to remain agnostic about the choice of metric. Specifically, the equivariant and K\"ahler parameters $\e_i$ and $\lam^i$ (or the reduced ones, $\nu_\alpha$ and $\mu^\alpha$), introduced in Section~\ref{sec:2}, probe the space of metrics on the topology of $X$, resolutions of $X_\circ$, which obey the Calabi--Yau condition for an arbitrary Sasakian metric on $L$. This enables us to describe a large moduli space of geometries, while the supergravity equations of motion select only isolated points within this space. It is therefore unsurprising that the precise relation between topological strings and supergravity involves an extremization procedure, as depicted in Figures~\ref{fig:new} and \ref{fig:1}. Various aspects of this idea, which can also be implemented via an off-shell quantity dubbed the \emph{master volume}, have been explored in a series of papers \cite{Martelli:2005tp,Butti:2005vn,Butti:2005ps,Martelli:2006yb,Amariti:2011uw,Couzens:2018wnk,Gauntlett:2018dpc,Hosseini:2019use,Hosseini:2019ddy,Gauntlett:2019roi,Gauntlett:2019pqg,Boido:2022mbe}. While we have focused on the $\lam$ parameters as a new type of ensemble, the equivariant parameters $\e$ play a similar role, typically requiring extremization to describe a true Lorentzian BPS background solution in (possibly off-shell) supergravity; see also \cite{BenettiGenolini:2019jdz,Hosseini:2019iad,Bobev:2020pjk,Panerai:2020boq,BenettiGenolini:2023ndb,Hristov:2024cgj,Cassani:2024kjn}.

Let us illustrate the above discussion explicitly to the present model, which ultimately leads to the geometric description of F-maximization, \cite{Jafferis:2010un,Jafferis:2011zi}. To allow for supersymmetric deformations without solving the supergravity equations of motion, \cite{Martelli:2005tp} proposed considering more general Sasakian manifolds that relax the Einstein condition, but are equipped with a unit norm Killing vector field, $\xi$, known as the Reeb vector \eqref{eq:Reeb}. The space $X$ corresponds to a (partial) resolution of the CY singularity described by the conical metric, $X_\circ$. In the toric case, the angles $\varphi_i \sim \varphi_i + 2 \pi$ can be chosen as part of the symplectic coordinate pairs $(H^i,\varphi_i)$ on $X$, with $H^i$ the associated moment maps. The symplectic form on $X$ is then given by
\be
 \omega = \frac{1}{2\pi}\mathd H^i \wedge \mathd \varphi_i\ ,
\ee
with an implicit summation over repeated indices. We thus see that the Reeb vector on $X$ actually parametrizes the standard torus action, such that $\e_i$ are the equivariant parameters in the sense of Section~\ref{sec:2}. In the case when we keep the 4d space maximally symmetric (i.e.\ in Euclidean signature pure $\AdS_4$ is sliced into round S$^3$ coordinates),~\footnote{In fact, we can also relax the assumption on the boundary metric without changing the internal topology by considering a squashed S$^3$ boundary that only preserves $U(1) \times U(1)$ isometry, see \cite{Martelli:2011fu}. This introduces an additional deformation parameter, usually denoted by $b$, which also generalizes the supersymmetric constraint. We elaborate further on these issues in \cite{Cassia:2025jkr}.} supersymmetry imposes a particular condition on these equivariant parameters,
\be
\label{eq:fluxsusyconstr}
	\sum_i \e_i = 2\ ,
\ee
which can also be seen as a condition on existence of Killing spinors on the $\AdS_4$ vacuum. 

As described in Section~\ref{sec:2.3}, one can also parametrize the metric on $L$ using the foliation defined by the Reeb vector and its dual one-form $\eta$,
\be
	 \mathd s^2 (L) = \eta^2 + \mathd s^2 (B)\ ,
\ee
with a K\"ahler form $\omega_B=\mathd\eta$ transverse to the Reeb foliation, see \cite{Couzens:2018wnk,Gauntlett:2018dpc,Hosseini:2019use,Hosseini:2019ddy,Gauntlett:2019roi} for details and \cite{Katmadas:2015ima} for a different approach to this construction. We can thus view the space $L$ as a circle fibration over a compact space $B$ of three complex dimensions. This gives a more direct geometric meaning of the $\lam$ parameters as responsible for the physical (non-equivariant) size of the compact space $B$, cf.\ the discussion around \eqref{eq:relationtoB}.

In what follows, we will also need to consider certain submanifolds of $L$.
Assuming that the cone over $L$ is indeed a toric CY, we can deduce that to each toric divisor $D_i$ of $C(L)$, i.e.\ the locus where the $i$-th homogeneous coordinate vanishes, one can associate a corresponding (real) codimension-2 submanifold $T_i$ in $L$, obtained as the intersection of the toric divisor and the base of the cone,
\be
\label{eq:Sasakian-divisor}
 T_i := D_i\cap L\,.
\ee
By analogy with the toric case, we will refer to the $T_i$'s as \emph{Sasakian divisors} of $L$.

Given this setup, there are many geometric relations that have been shown to hold between the Sasakian manifold $L$, the K\"ahler space $B$, and the cone $X_\circ$, see again Section~\ref{sec:2.3} and \cite{Martelli:2005tp,Butti:2005vn,Butti:2005ps,Martelli:2006yb,Amariti:2011uw,Couzens:2018wnk,
Gauntlett:2018dpc,Hosseini:2019use,Hosseini:2019ddy,Gauntlett:2019roi,Gauntlett:2019pqg,Boido:2022mbe}.
A quantity of particular interest is the Sasakian volume $\VVV_L (\e)$, which now depends on the parameters $\e_i$.~\footnote{Note that the Sasakian volume $\VVV_L (\e)$ should be thought of as an off-shell quantity and is only related to the quantity entering Eq's \eqref{eq:4dN} and \eqref{eq:4dGN} after a suitable extremization fixing the parameters $\e$ on their on-shell values.} Its relation to the equivariant volume was first shown in \cite{Martelli:2006yb} (passing through the equivariant index) and revisited in \cite{Martelli:2023oqk},~\footnote{Similarly to the definition of the equivariant volume, the Sasakian volume here is normalized differently with respect to \cite{Martelli:2006yb,Martelli:2023oqk} and the related references.}
\be
\label{eq:Sasakian-vol}
 \VVV_L (\e) = \VVV_X (t = 0, \e)\ ,
\ee
where we remark that, in our conventions, the equivariant parameters $\e_i$ are redundant, while the original definition of the Sasakian volume only depends on the effective equivariant parameters (they enter the computation of the Sasakian volume as coefficients of the Reeb vector on the basis of abelian isometries of $L$). In order to match the parametrizations of the volumes, we make use of the fact that the Sasakian volume can be obtained as the cohomological limit of the K-theoretic equivariant index, which we can naturally upgrade to a function of all equivariant parameters, including the non-effective ones. In this way, we can regard \eqref{eq:Sasakian-vol} as a definition of this upgraded function, which we obtain as the cohomological limit of the \emph{fully} equivariant index. In this way, the volume $\VVV_L (\e)$ reduces by construction to the usual definition of Sasakian volume once we lift the redundancy in the $\e$-parametrization.

We notice, moreover, that a similar definition of volume can be formulated for the Sasakian divisors $T_i$. In fact, since they arise as intersections between toric divisors in $X$ and the base $L$, one can define their volume as
\be
\label{eq:Sasakian-divisor-vol}
 \VVV_{T_i}(\e) = \VVV_{D_i} (t = 0, \e)\ ,
\ee
which is equivalent to \eqref{eq:Sasakian-vol}, once we substitute $T_i$ and $D_i$ for $L$ and $X$, respectively. These relations, \eqref{eq:Sasakian-vol}-\eqref{eq:Sasakian-divisor-vol}, already establish the link between the internal volume extremization program initiated in \cite{Martelli:2006yb} and the present approach.

A closely related object, which allows us to relate the $\lam$ parameters to the flux $N$, is the so called master volume $\cV_L$, originally defined as
\be
	\cV_L (\lam, \e) := \int_L \eta \wedge \frac{\omega_B^3}{3!}\ .
\ee
One of the main insights of \cite{Martelli:2023oqk} is the relation between the master volume and the equivariant intersection numbers, which in our approach directly result in the genus-zero constant maps contribution,
\be
	\cV_L (\lam_\text{mes.}, \e) = \frac1{(2 \pi)^2}\, N^{X, \text{mes.}}_{0, 0} (\lam_\text{mes.}, \e)\ ,
\ee
where the right hand side depends on the $\e$ and $\lam_\text{mes.}$ parameters lifted to $X$. 

The master volume plays a crucial role in holography, as it directly relates to the supergravity action and in turn to the dual holographic free energy. From the present point of view, the most important insight on the relation between $\lam$ and $N$ comes from a holographic argument proposed in \cite{Couzens:2018wnk,Gauntlett:2018dpc,Hosseini:2019use,Hosseini:2019ddy,Gauntlett:2019roi} which concerns certain supersymmetric submanifolds of $L$, which are identified with the Sasakian divisors \eqref{eq:Sasakian-divisor}.
Holography then predicts that the R-charges of the dual field theory operators associated to wrapping M5-branes on the $T_i$ are given by
\be
\label{eq:r-chargeassignmentsingeometry}
 R[T_i] := 2 \pi\, \int_{T_i} \eta \wedge \omega_B^2
 = \left.\frac{\partial \cV_L (\lam, \e)}{\partial \lam^i}
 \right|_{\lam=\lam_{\rm mes.}} = \left.\frac1{(2 \pi)^2}\,
 \frac{\partial N^X_{0, 0} (\lam, \e)}{\partial \lam^i}
 \right|_{\lam=\lam_{\rm mes.}}\ ,
\ee
where the $\e_i$ are subject to the mesonic constraint, \eqref{eq:mesonicconstraint}.
On the field theory side, in the large $N$ limit corresponding to the two derivative supergravity approximation which we took here, the R-charges are linear in $N$,
\be
	R [T_i] = \Delta_i\, N\ ,
\ee
where supersymmetry further imposes the condition
\be
	\sum_i \Delta_i = 2\ .
\ee
One can now identify the rescaled R-charges $\Delta_i$ of the M2-brane theory with the equivariant parameters $\e_i$, which we carry out explicitly in \cite{Cassia:2025jkr}.~\footnote{While we focused on an off-shell point of view here, the same setup allows also a description within on-shell 4d supergravity. In this picture the generalization to arbitrary values of $\e_i$ corresponds to switching on scalar profiles in 4d and leads to a more general solution that no longer preserves all $\AdS_4$ symmetries, see \cite{Freedman:2013oja}. This is also reflected in the fact that a generic value of the $\Delta_i$ deformations, corresponding to R-charge assignments of the dual field theory on round S$^3$, breaks the conformal symmetry.} The important point here is the remaining equation that relates the first derivative of the genus-zero constant maps to the M2-brane charge $N$,
\be
\label{eq:importantrelation}
 \frac1{(2 \pi)^2}\, \sum_i \left. \frac{\partial N^X_{0, 0} (\lam, \e)}{\partial \lam^i}
 \right|_{\lam=\lam_{\rm mes.}} = 2\, N\ .
\ee
This is precisely the Legendre transform depicted schematically in Figures~\ref{fig:new} and \ref{fig:1} relating the topological string picture to the effective supergravity and dual field theory. The extremization holds in the limit of two derivative supergravity and corresponding large $N$ field theory limit, corresponding to the first approximation of a full integral transform between the $\lam$ and $N$ ensembles. We devote our companion paper, \cite{Cassia:2025jkr}, to the extension of this relation to finite $N$ in field theory and higher derivatives in supergravity, based on the additional genus-one constant maps term we have defined here. This analysis is greatly facilitated by the host of important results on the exact M2-brane partition function on round and squashed S$^3$, see \cite{Kapustin:2009kz,Hama:2011ea,Imamura:2011wg,Fuji:2011km,Marino:2011eh,Nosaka:2015iiw,Hatsuda:2016uqa,Chester:2021gdw} and references therein.

\subsection{Wrapped branes}
\subsubsection*{11d background solution}
One can also consider wrapped brane solutions, where the two spatial dimensions of the M2-branes ($x^{1,2}$) are no longer flat but they wrap around a compact two-dimensional surface $\Sigma$. The key point is that wrapped M2-branes have a near-horizon geometry with an $\AdS_2$ factor, and the Sasaki--Einstein transverse space $L$ becomes fibered over the surface $\Sigma$, see \cite{Kim:2006qu,Gauntlett:2007ts},
\be
 \mathd s^2_{r\to 0} = R^2_{\AdS_2} \mathe^{-2\sigma/3}\,
 \left( \mathd s^2 (\AdS_2) + \mathd s^2 (M)  \right)\ ,
 \qquad F_4 = R^3_{\AdS_2}\, \omega_{\AdS_2} \wedge F_2\ ,
\ee
where $\omega_{\AdS_2}$ is the volume form of $\AdS_2$ with unit radius and $F_2$ is a closed form on $M$,
\be
\begin{array}{ccc}
 L & \to & M \\
 && \downarrow \\
 && \Sigma
\end{array}
\ee
i.e.\ the total space of the fibration of $L$ over $\Sigma$.
We leave unspecified the warp factor $\sigma$ in the metric, also a function of the internal space, as it does not affect the topology. Similarly to \eqref{eq:4dN} we impose the flux quantization condition (still only at two derivative level, $N_{{\rm M2}, i} = n_i\, N$, see App.~\ref{app:A}),
\be
	n_i\, N = \frac1{(2 \pi l_P)^6} \int_{S_i} \star F_4 \in \BZ\ ,
\ee
where $S_i \hookrightarrow M$ are submanifolds representatives of the generators of the free part of $H_7(M,\BZ)$. Moreover, the generators $S_i$ are related to the Sasakian divisors $T_i\hookrightarrow L$, introduced in \eqref{eq:Sasakian-divisor}, as they fit into the diagram
\be
\label{eq:def-Si}
\begin{array}{ccc}
 T_i & \to & S_i \\
 \downarrow && \downarrow \\
 L & \to & M \\
 && \downarrow \\
 && \Sigma
\end{array}
\ee
where the vertical arrows at the top represent fiberwise inclusion of $T_i$ in $L$ and the inclusion of $S_i$ in $M$, respectively.

From a lower-dimensional perspective, if we require $\Sigma$ to be toric, these backgrounds correspond to supersymmetric rotating black holes with spherical or spindle horizon topology, $\Sigma =\WPL$, as discussed in \cite{Cacciatori:2009iz, Halmagyi:2013sla, Hristov:2018spe, Hristov:2019mqp, Bobev:2020pjk, Ferrero:2020twa, Couzens:2021cpk, Ferrero:2021etw, Hristov:2023rel}. Similar to the discussion surrounding \eqref{eq:4dGN}, the lower-dimensional Newton constant arises from explicit integration over the internal space. In the context of black hole physics, the two-dimensional Newton constant also plays a key role, being related to the Bekenstein--Hawking entropy of the black holes, $S_\text{BH} = 1/(4 G_\mathrm{2d})$. Again, this M2-brane construction admits a dual holographic description via localization techniques, as explored in \cite{Benini:2015noa,Benini:2015eyy,Hosseini:2022vho,Bobev:2023lkx,Bobev:2024mqw, Inglese:2023wky, Colombo:2024mts} and references therein, which we revisit in \cite{Cassia:2025jkr}.

\subsubsection*{Off-shell geometry}

Let us again consider the off-shell generalization of the above construction, where we relax the conditions on the metric of $M$ that arise from the equations of motion, enforcing only supersymmetry. Conceptually, the relation between the supergravity and topological string descriptions is similar yet distinct from the unwrapped branes description. The supergravity description uses the Sasakian manifold $L$, now additionally fibered over $\WPL$. On the other hand, the topological string description makes use of the fibration of $X_\circ=C(L)$  over $\WPL$, denoted by $Y$,
\be
\begin{array}{ccc}
 X_\circ & \to & Y \\
 && \downarrow \\
 && \WPL
\end{array}
\ee
The resulting total space of the fibration, $Y$, also satisfies the CY condition. This can be seen from the choice of charges which describe the fibration, (see Section~\ref{sec:4})
\be
	\sum_i n_i = a + b\ .
 \ee
We stress that the space $Y$ can be still thought of as a particular partial resolution of the cone over $M$. This follows from the observation that the full mesonic twist on $Y$ results in a conical space, isomorphic to $C(M)$. Homotopically speaking, the two spaces fit into the push-out diagram
\be
\begin{array}{ccc}
 \WPL &\hookrightarrow& Y \\
 \downarrow && \downarrow \\
 \ast &\hookrightarrow& C(M)
\end{array}
\ee
where the horizontal arrow at the top is a section of the bundle $Y\to\WPL$ with image at each point in $\WPL$ being the tip of the cone $X_\circ$ in the fiber, and the horizontal arrow at the bottom is the inclusion of a point as the tip of the cone $C(M)$. The vertical arrow on the right can then be regarded as the blow-down of the only non-trivial two-cycle in $Y$ corresponding to the image of the global section previously mentioned. This discussion is in fact in accordance with the explicit example studied around \eqref{eq:mesconicconstraintonY}. 

The construction of $Y$ once again enables us to compute the equivariant volume and the corresponding constant maps contributions. As demonstrated in \cite{Martelli:2006yb}, and further developed using the results of \cite{Couzens:2018wnk,Gauntlett:2018dpc,Hosseini:2019use,Hosseini:2019ddy,Gauntlett:2019roi,Boido:2022mbe}, the supergravity action and the so-called entropy function are directly connected to the genus-zero constant maps term, providing a clear holographic interpretation of the dual index, as in \cite{Benini:2015noa,Inglese:2023wky}. Specifically, the analog of \eqref{eq:r-chargeassignmentsingeometry}-\eqref{eq:importantrelation} relating the geometric and the holographic perspectives is given by~\footnote{The constant of proportionality is the same in each relation.}
\be
\label{eq:importantrelationfibered1}
 \left.\frac{\partial N^{Y}_{0, 0} (\lam, \e)}{\partial \lam^i}
 \right|_{\lam^i=\lam^i_{\rm mes.}} \propto n_i\, N\ ,
\ee
and
\be
\label{eq:importantrelationfibered2}
 \left.\frac{\partial N^{Y}_{0, 0} (\lam, \e)}{\partial \lam^+}
 \right|_{\lam^i=\lam^i_{\rm mes.}} \propto -a\, N\ ,
 \qquad
 \left.\frac{\partial N^{Y}_{0, 0} (\lam, \e)}{\partial \lam^-}
 \right|_{\lam^i=\lam^i_{\rm mes.}} \propto -b\, N\ ,
\ee
with $\e_i$ and $n_i$ obeying the mesonic constraints on the fiber $X$, \eqref{eq:mesonicconstraint} and \eqref{eq:mesonicchargeconstraint}, respectively. The above equations correspond holographically to $\mathcal{I}$-extremization, \cite{Benini:2015eyy}. It would be interesting to understand the holographic role of the mesonic twist on $Y$ (on top of the one on $X$ imposed above), which has not been explored in \cite{Martelli:2023oqk}. We come back to this point in \cite{Cassia:2025jkr}.

In this section, we have provided a basic overview of M2-brane backgrounds and their conceptual connection to the equivariant topological string on toric Calabi--Yau cones. While additional subtleties and details are involved in establishing the precise relation to the M2-brane partition functions, as outlined, these are explored in greater depth in our companion paper, \cite{Cassia:2025jkr}. For the remainder of this work, we focus on the special and intriguing case of vanishing fluxes, which bypasses many of these complexities and, in a sense, completes the examples introduced in Section~\ref{sec:4}.

\section{Vanishing flux backgrounds}
\label{sec:6}
In the previous section, we discussed background solutions with coincident branes in 11d supergravity, exhibiting non-trivial flux. The special case of vanishing fluxes that we examine now involves branes that either entirely wrap cycles of the internal space or are localized at isolated points.~\footnote{We are very heuristic in the description of the brane construction here since there exist many alternative descriptions in different duality frames, see e.g.\ \cite{Maldacena:1997de} for further discussion. For our present purposes we only need to know the topology of the internal space from 11d point of view, the specific brane construction is immaterial.} From a low-energy perspective, these configurations result in asymptotically flat 4d spacetime, either empty or populated by black hole centers. In this case, the anomaly cancellation condition ensures no overall flux through the compact manifold \cite{Sethi:1996es,Witten:1996md}. The equations of motion require the internal space to be Ricci flat, with unfixed Kähler moduli. As a result, the effective Newton constant is independent of the internal space volume. 

We consider compactifications to 4d $\cN = 2$ supergravity, which in the present perspective arise from 11d supergravity on a circle times a compact CY three-fold. Since compact CYs cannot be toric, they are unfortunately not amenable to an equivariant upgrade such as the one we discussed in the present article. However, we can view the additional circle factor as the simplest example of a Sasakian manifold, in line with our earlier discussion on geometric modeling,
\be
 L = \mathrm{S}^1 \times \mathrm{CY}_3\ .
\ee 
We can now consider the cone over the circle, which is simply the complex plane (without any conical singularities),
\be
	X = C(\mathrm{S}^1) \times \mathrm{CY}_3 = \BC \times \mathrm{CY}_3\ .
\ee
Note however, that $X$ is not a cone in this case, but rather a product of a compact CY and a cone over S$^1$.
Therefore $X$ is a partially toric CY four-fold that has no conical singularity, such that we can apply the equivariant upgrade to the $\BC$ factor directly,~\footnote{Note that from the supersymmetry analysis there is no constraint on the equivariant parameter of the complex plane in this case, such that \eqref{eq:fluxsusyconstr} does not hold.} while keeping the compact three-fold unspecified and non-equivariant. Although it may not be immediately clear that the geometric modeling inherited from the flux case is useful—especially since one can use standard topological strings from a 10d perspective, ignoring the M-theory circle—we argue that the equivariant description offers important new insights. Notably, we will argue in Section~\ref{sec:6.2} that the OSV conjecture of \cite{Ooguri:2004zv} can be naturally reformulated via the equivariant topological string on $X$.

\subsection{Minkowski vacuum and the 4d prepotential}
\label{sec:6.1}
We start with the vacuum case, where the 11d spacetime is given by
\be
 M_{11} = {\rm Mink}_4 \times \mathrm{S}^1 \times \mathrm{CY}_3
 = {\rm Mink}_4 \times L\ ,
\ee
with a vanishing four-form $F_4 = 0$. One can perform a consistent reduction of the 11d action down to 4d dimensions, known as compactification. Although flat space does not lead to a well-defined on-shell action, the close connection between the effective 4d supergravity action and the topological string partition function has been widely noted and explored in numerous papers. For a recent discussion and further references, see \cite{Dedushenko:2014nya}.

We briefly highlight key aspects of 4d $\cN=2$ supergravity to clarify its connection to the equivariant topological strings framework. The 4d action, whether in superspace or component form, is governed by the \emph{prepotential}, a holomorphic function of the complex scalars $X^I$.~\footnote{In off-shell supergravity, the scalars $X^I$ are the lowest components of the vector multiplets and the vector superfield in superspace. On-shell, they overparametrize the physical scalars but are still used to define the theory via the prepotential.} Supersymmetry ensures that the prepotential fully determines the Lagrangian, including higher-derivative corrections. It must also be a homogeneous function of a specific degree, related to the number of derivatives.

Explicitly, at two derivative level, the direct dimensional reduction produces the following prepotential of degree $2$,
\be
\label{eq:2derprep}
	F^{2 \partial} (X) = \frac16\, C_{a b c}\, \frac{X^a X^b X^c}{X^0}\ ,
\ee
where $C_{a b c}$ are the triple intersection numbers of the CY three-fold, cf.\ \eqref{eq:intnum} - \eqref{eq:importantidentity}, with $a, b, c = 1, \dots, h^{1,1} (\mathrm{CY}_3)$, and we assume an implicit summation over repeated up-down indices. The additional scalar $X^0$ (we use the convention $I = \{0, a \}$) in the denominator comes precisely from the circle reduction from 5d to 4d, and gives us a strong hint about the relation to equivariant topological strings on $\BC \times \mathrm{CY}_3$.  First, we can evaluate the equivariant volume of $X$, which clearly has a product structure
\be
\label{eq:cequivolCtimesCY3}
	\VVV_{\BC \times \mathrm{CY}_3} (\e_0, t^a) = \VVV_\BC (\e_0)\, \VVV_{\mathrm{CY}_3} (t^a) = \frac{1}{6\, \e_0}\, C_{a b c}\, t^a t^b t^c\ ,
\ee
and we can consider the $\lam$-parametrization for the $\BC$ factor,
\be
\label{eq:equivolCtimesCY3}
	\BV_{\BC \times \mathrm{CY}_3} (\lam^0, \e_0, t^a) = \BV_\BC (\e_0)\, \VVV_{\mathrm{CY}_3} (t^a) = \frac{\mathe^{\e_0 \lam^0}}{6\, \e_0}\, C_{a b c}\, t^a t^b t^c\ ,
\ee
in agreement with \eqref{eq:CequivvolMZ}. We should stress that the second factor in the expression above is inherently non-equivariant, therefore we cannot replace the original K\"ahler moduli $t^a$ with some redundant moduli $\lam^i$ in the compact directions associated to the CY$_3$. Determining the triple intersection numbers of the four-fold $X$ then trivializes since, unlike the typical equivariant case, the expression is already of minimal degree $3$ due to the cubic form of the CY volume: 
\be
	N_{\mathfrak{g}, 0}^{\BC \times  \mathrm{CY}_3} (t^a, \e_0)= \frac1{\e_0}\,  N_{\mathfrak{g}, 0}^{ \mathrm{CY}_3} (t^a)\ , \qquad \forall \mathfrak{g}\ .
\ee
This simply means that there is no space for further derivatives with respect to $\lam^0$, such that it fully drops out of the equivariant constant maps term, e.g.\
\be
	N_{0, 0}^{ \BC \times \mathrm{CY}_3} (\e_0, t^a) = \frac16\,  C_{a b c}\, \frac{t^a t^b t^c}{\e_0}\ ,
\ee
and by comparing with $F^{2 \partial}$ in \eqref{eq:2derprep} we are led to relate $X^a$ to $t^a$ and $X^0$ to $\e_0$. It has long been understood that the complex scalars $X^a$ correspond to the complexified K\"ahler moduli on the $\mathrm{CY}_3$ manifold, but what becomes clear in this simple example is how the additional $U(1)$ isometry associated to the circle manifests itself geometrically as the equivariant parameter of the complex plane, $\e_0$. Note that the alternative 10d topological string description in the absence of the additional circle requires relating $X^0$ to the effective string coupling constant, $g_s$. This is in agreement with the idea that our 11d approach geometrizes $g_s$ and it ceases to be a parameter independent from the geometry. In other words, the additional equivariant parameter $\e_0$ from the additional circle in the 11d picture takes the place of $g_s$ in the 10d picture.~\footnote{We still allow from the 11d perspective that $g_s$ is a non-trivial numerical constant, which would take care of the overall normalization, but it has no relation to the geometry of the manifold.}

We move on to the second term in the derivative expansion of the prepotential, which supersymmetry constrains to be of degree $0$ in the scalars $X^I$ and linear in an additional composite scalar $\hat{A}$,~\footnote{For more details on the superconformal formalism that suitably accounts for higher-derivative corrections, see e.g.\ \cite{Lauria:2020rhc}. The scalar $\hat{A}$ in this case determines the coupling to the so-called Weyl$^2$ four-derivative invariant, \cite{Bergshoeff:1980is}.}
\be
\label{eq:4derprep}
	F^{4 \partial} (X; \hat{A}) = \frac1{24}\, \frac{b_a X^a}{X^0}\, \hat{A}\ ,
\ee
where $b_a$ are the second Chern-numbers of the corresponding four-cycles on the $\mathrm{CY}_3$ space, cf.\ \eqref{eq:2ndchernnumbers}.  It is again straightforward to determine the relation with the genus-one constant maps in this case,
\be
	N_{1, 0}^{ \BC \times \mathrm{CY}_3} (\e_0, t^a)  =   \frac1{24}\, \frac{b_a t^a}{\e_0}\ ,
\ee
in agreement with the previously stated relations, along with $\hat{A} = 1$. This is another consistency check that we have correctly identified the manifold $X$ subject to the equivariant upgrade. 

The remaining higher-derivative terms (in principle, up to an infinite number of derivatives) are consistently described in off-shell supergravity by decreasing the homogeneity degree of the prepotential in the variables $X^I$ and increasing it in the variable $\hat{A}$. However, in the direct dimensional reduction from 11d in the vanishing flux case they all vanish,
\be
	F^{(2 n) \partial} (X; \hat{A}) = 0\ , \qquad \forall n > 2\ .
\ee
On the other hand, the higher-genus constant maps terms are generically non-zero, 
\be
 N_{\mathfrak{g}>1,0}^{\BC\times \mathrm{CY}_3}
 = \frac{|B_{2\mathfrak{g}}|}{4 \mathfrak{g}}
 \frac{|B_{2\mathfrak{g}-2}|}{(2\mathfrak{g}-2)}
 \frac{(-1)^\mathfrak{g}}{(2\mathfrak{g}-2)!}\, \frac{\chi (\mathrm{CY}_3)}{\e_0}\ .
\ee
This might suggest a breakdown in the relation between the prepotential and constant map terms at higher genera. Yet, as argued in \cite{Dedushenko:2014nya}, a Schwinger-like calculation recovers the missing terms and links them precisely to higher genus constant maps. These higher terms lack explicit dependence on the Kähler moduli $t^a$ and their equivariant completion $\lam^0$, making it difficult to test this relation here. A more careful revisit of the calculation in \cite{Dedushenko:2014nya} and its connection to the equivariant upgrade of higher genus constant maps would be an interesting direction for future work.

We note a key property of the equivariant constant maps terms in this case—they are automatically independent of the additional  parameter $\lam^0$. This is a crucial self-consistency condition, as the vanishing fluxes imply that the topological string is not in a different ensemble from the supergravity description. This represents a degenerate and simplified case of the general rule outlined in Figure~\ref{fig:1}, which retrospectively supports the earlier understanding of the relationship between constant map terms and the prepotential.

\subsection{Black holes and the OSV conjecture}
\label{sec:6.2}
Let us now consider the black hole case, following the main steps of \cite{Ooguri:2004zv}, which we aim to understand through the equivariant topological string. Since this approach offers no specific insights into the multicentered case,~\footnote{Multiple asymptotically BPS black holes can exist in equilibrium at arbitrary distances if their charges are parallel, or as bound states at fixed distances in certain cases, see \cite{Denef:2000nb}.} we will focus on a single black hole center for simplicity. From a 4d perspective, the spacetime of a single center black hole interpolates between the asymptotic Minkowski solution discussed earlier and the near-horizon geometry of $\AdS_2 \times \mathrm{S}^2$, with equal radii for both spaces.~\footnote{In the vanishing flux case, respectively in ungauged supergravity, only the smooth sphere is an admissible horizon topology. We thus exclude from the start the more general spindle geometries with conical singularities.} We can then zoom in on the horizon, where the on-shell supergravity action is holographically dual to the boundary CFT index.

There are two possible ways to embed this near-horizon geometry in 11d, arising from the fact that the original internal space $L$ can be fibered over $\BP^1$ in different ways. In the general case, where all fibration charges are non-vanishing, we obtain the 11d background,
\be
 M_{11} = \AdS_2 \times M\ , \qquad \begin{array}{ccc}
 (\mathrm{S}^1 \times \mathrm{CY}_3) & \to & M \\
 && \downarrow \\
 && \BP^1
\end{array}
\ee
In this case, the holographically dual theory is one-dimensional and remains somewhat elusive. However, when the charge $p^0$, representing the first Chern class of the S$^1$ fibration, vanishes, the situation changes qualitatively. The necessary electric charge in this direction causes the circle to be fibered over the $\AdS_2$ factor, resulting in a distinct 11d background,
\be
 M_{11} = \mathrm{BTZ} \times M'\ , \qquad \begin{array}{ccc}
 \mathrm{CY}_3 & \to & M' \\
 && \downarrow \\
 && \BP^1
\end{array}
\ee
such that the dual field theory is two-dimensional. In turn, this case has been extensively studied and understood in great detail, see \cite{Maldacena:1997de} and consequent references.

Although this dichotomy seems crucial from a holographic perspective, our geometric construction treats both cases uniformly by modeling them with equivariant topological strings on the following manifold:
\be
\begin{array}{ccc}
 (\BC \times \mathrm{CY}_3) & \to & Y \\
 && \downarrow \\
 && \BP^1
\end{array}
\ee
with the special case $p^0 = 0$ a smooth limit of the general analysis for arbitrary $p^0$ (that corresponds to the charge in the $\BC$ direction).~\footnote{It is worth noting that the dichotomy is also significant in gravity and geometry, as the BTZ case allows for multiple saddles (see \cite{deBoer:2006vg}). However, this is relevant only in the non-perturbative regime and lies beyond the scope of this discussion.} 

An important remark is in order: the way the charges $p^a$ appear in the case of a compact CY is different from the way the charges $p_\alpha$ (or $n_i$) appear in the case of non-compact fiber. In the non-compact case, the $p_\alpha$ correspond to Chern classes of circle bundles over $\BP^1$ associated to the fibration of the toric CY. In this case, these bundles are in one to one correspondence with the toric isometries of the fiber and they appear as shifts of the corresponding equivariant parameters as in \eqref{eq:MZequivolmesonicoverspindle}. In the compact CY case, on the other hand, the fibration is trivial because there are no continuous isometries on the fiber, and the $p^a$ arise as the Chern classes of line bundles on $\BP^1$ associated to the reduction of the M-theory three-form along the compact two-cycles in the CY. Equivalently, we can think of the $p^a$ as the integrals of the four-form $F_4=\mathd C_3$ over the product of $\BP^1$ with some non-trivial two-cycle $\gamma^a$ in the compact fiber. For this reason, the parameters $p^a$ are in one to one correspondence with the K\"ahler moduli $t^a$.

The black hole entropy, derived from the theory governed by the two-derivative and four-derivative prepotentials \eqref{eq:2derprep} and \eqref{eq:4derprep}, was directly calculated in \cite{LopesCardoso:1998tkj, LopesCardoso:1999cv, LopesCardoso:1999fsj}. It was later reformulated as a free energy in the so called mixed ensemble of fixed magnetic charges $p^I = \{p^0, p^a \}$ and varying electric charges, more suitable for our current discussion, in \cite{Ooguri:2004zv},
\be
\label{eq:OSVlikeformula}
 I_\text{BH} (\chi, p) = \frac1{2 \pi\mathi}\,
 \left(F (X^I = \chi^I - \pi\mathi\, p^I; \hat{A} = 1)
 - F (X^I = \chi^I + \pi\mathi\, p^I; \hat{A}=1) \right)\ ,
\ee
where we have set the effective Newton constant to unity, $G_\mathrm{4d} = 1$, $\chi^I$ are the conjugate chemical potentials to the black hole electric charges, and
\be
	F(X; \hat{A}) := F^{2 \partial} (X) + F^{4 \partial} (X; \hat{A})\ ,
\ee
as given in \eqref{eq:2derprep} and \eqref{eq:4derprep}.~\footnote{In order to simplify as much as possible the relationship between the supergravity expression and the topological string predictions below, which is the focus of present discussion, we have changed the normalizations for the scalar $\hat{A}$ and $F^{4\partial}$ in comparison with \cite{Hristov:2021qsw}.}

The OSV conjecture proposes that, after recognizing the connection between the constant map terms and the prepotential, the black hole partition function is given by two instances of the topological string partition function on CY$_3$, with the K\"ahler parameters $t^a$  and the string coupling $g_s$  appropriately identified \cite{Ooguri:2004zv}, in the present conventions as~\footnote{In this subsection, we denote the topological string partition function as $F^\text{top}$ to distinguish it from the prepotential in 4d supergravity. The OSV formula here is presented with differing factors of $\mathi$ and $\pi$ compared to the original reference as we follow a different topological string normalization.}
\be
\label{eq:originalOSV}
\begin{aligned}
 \log Z_\text{BH} (\chi, p) = \frac1{2\pi\mathi}\, \Big(& F^\text{top}_{\mathrm{CY}_3}
 (g_s = \frac1{\chi^0 - \pi\mathi p^0}; t^a = \frac{\chi^a -\pi\mathi p^a}
 {\chi^0 -\pi\mathi p^0}) \\ 
 &- F^\text{top}_{\mathrm{CY}_3} (g_s = \frac1{\chi^0 + \pi\mathi p^0};
 t^a = \frac{\chi^a + \pi\mathi p^a}{\chi^0 + \pi\mathi p^0})\Big)\ ,
\end{aligned}
\ee
with $F^\text{top}$ the non-equivariant topological string free energy as given in \eqref{eq:2.2}. This relation, in its full form for the complete topological string partition function, can be viewed as a proposal for a non-perturbative definition of the black hole partition function. At leading perturbative order, however, it is an observation that the explicit supergravity calculation, \eqref{eq:OSVlikeformula}, can be expressed in terms of the constant maps terms of the topological string on the compact CY$_3$. The conjecture is partially motivated by the relations discussed in the previous subsection regarding the asymptotic Minkowski vacuum, but remains unproven rigorously. Notably, the formulation above suggests a 10d perspective, since the string coupling is a meaningful parameter.

Our equivariant approach proves useful here, as it explains why a formula like \eqref{eq:OSVlikeformula} should hold for the topological string from a 11d perspective. The following reformulation of the OSV conjecture highlights the unifying role of equivariant topological strings in connecting black hole entropy to geometric invariants.

Based on the number of different examples with $\BP^1$ fibrations we investigated in Section~\ref{sec:4}, as well as the expected match with the black hole on-shell action at two and four derivatives, we find that the equivariant volume of the total space of the fibration $Y$ is given by
\be
\label{eq:OSVequivol}
	\BV_Y (\lam, \e, t) = \frac{1}{\NN}\, (\BV_{\BC \times \mathrm{CY}_3} (\lam^0_+, \e_0^+, t^a_+) -  \BV_{\BC \times \mathrm{CY}_3} (\lam^0_-, \e_0^-, t^a_-) )\ ,
\ee
where we denoted the equivariant parameter on the two-sphere as $\NN =a \e_- - b \e_+$ (with $a = b = 1$ on the sphere), as we reserve the label $0$ for the parameters coming from the $\BC$ direction in the fiber as in the previous subsection. The equivariant volume of $\BC \times \mathrm{CY}_3$ was already evaluated there, \eqref{eq:equivolCtimesCY3}, and just as in Section~\ref{sec:4} we denoted appropriate shift of the relevant parameters on the two poles with the $\pm$ subscript. On the complex plane we explicitly derived the shifts in $\lam^0$ and $\e_0$ in Subsection~\ref{subsec:4.2}:
\be
\begin{array}{c}
 \e_0^- := \e_0 + \frac{p^0}{a}\, \e_-\ ,
 \qquad \e_0^+ := \e_0 + \frac{p^0}{b}\, \e_+\ , \\ \\
 \lam^0_- := \lam^0 - \frac{\NN}{b\, \e_0^-}\, \lam^+\ ,
 \qquad \lam^0_+ := \lam^0 + \frac{\NN}{a\, \e_0^+}\, \lam^-\ . 
\end{array}
\ee
Unfortunately, due to the fact that the compact CY three-fold is non-toric, we could not explicitly evaluate the possible shift and the respective definition of the K\"ahler parameters $t^a_\pm$, which we for now keep as unknowns. Based on the relation with the supergravity answer we are going to be able to determine these shifts below, but they remain conjectural from the point of view of the topological string.

Before evaluating the constant maps terms from the equivariant volume, let us first discuss the constraints we infer from supersymmetry and not directly from geometry. In this case supersymmetry is preserved without a topological twist, see \cite[Section~7]{BenettiGenolini:2024kyy}, which implies
\be
\label{eq:identificationabosv}
	a = -b = 1\ .
\ee
In addition, we find that the spherical equivariant parameter is actually fixed to a constant, as derived from supergravity, \cite{Hristov:2021qsw,Hristov:2022pmo},
\be
\label{eq:identificationnuosv}
	\e_+ = \e_- = \pi\mathi\ , \qquad \NN = 2 \pi\mathi\ .
\ee

Given these identifications, we can now evaluate the first two constant maps terms as
\be
 N_{0, 0}^Y (\e_{0}, t^a) = \frac{C_{a b c}}{12 \pi\mathi}\,
 \left( \frac{t_+^a t_+^b t_+^c}{\e_0 - \pi\mathi p^0}
 - \frac{t_-^a t_-^b t_-^c}{\e_0 + \pi\mathi p^0}\right)\ ,
\ee
and
\be
 N_{1, 0}^Y (\e_{0}, t^a) =  \frac{b_a}{48 \pi\mathi}\,
 \left( \frac{t^a_+}{\e_0 - \pi\mathi p^0} - \frac{t^a_-}{\e_0 + \pi\mathi p^0} \right)\ ,
\ee
which follows from \eqref{eq:spindlefibermestwistconstantmaps} noting that the additional term in $N_{1, 0}^Y$ of lower order in derivatives w.r.t.\ $t^a$ vanishes automatically.

We now observe that the following conjectural identification of the shifts,~\footnote{In the special case $p^a = 0, \forall a$, which is allowed as long as $p^0 \neq 0$, the identification $t^a_- = t^a_+ = t^a$ follows directly and is no longer conjectural.}
\be
	t^a_- = t^a + \pi\mathi\, p^a\ , \qquad t^a_+ = t^a - \pi\mathi\, p^a\ ,
\ee
is in agreement with \eqref{eq:originalOSV}, and we find the relation
\be
\label{eq:OSV-new}
	I_\text{BH} (\chi, p) = N_{0, 0}^Y (\e_{0} = \chi^0, t^a = \chi^a)  + N_{1, 0}^Y (\e_{0} = \chi^0, t^a = \chi^a) \ .
\ee
The above conjecture for the shifts $t^a_\pm$ can be viewed as a reformulation of the OSV conjecture. It seems plausible that the shifts $t^a_\pm$ here come from the complexification of the real K\"ahler parameters by the reduction of the 11d three-form field, $C_3$. On the other hand, we have rigorously derived the shift along the toric direction $\e_0^\pm$, finding perfect agreement with higher-derivative supergravity.

In addition to \eqref{eq:OSV-new}, we now propose a modified version of the OSV conjecture in its complete non-perturbative form, which arises naturally within our 11d perspective:
\be
\label{eq:finalformnewOSV}
	\log Z_\text{BH} (\chi, p) = F^\text{top}_Y (g_s = 1; \e_{0} = \chi^0, t^a = \chi^a)\ ,
\ee
as derived from \eqref{eq:OSVequivol} with the identifications \eqref{eq:identificationabosv}-\eqref{eq:identificationnuosv}. This formula remains in the spirit of the OSV conjecture and aligns perfectly with our conceptual expectations from Figure~\ref{fig:1}, due to the vanishing fluxes and consequent lack of Laplace transform. 

It is important to note that \eqref{eq:originalOSV} and \eqref{eq:finalformnewOSV} are not equivalent, aside from their agreement at the level of the genus-zero and genus-one constant maps and respective match with two- and four-derivative supergravity. Our proposal suggests a different set of subleading perturbative and non-perturbative corrections, which could potentially be tested against the dual supersymmetric index. In particular, the higher-genus constant maps predicted by our formula can be evaluated to 
\be
\begin{aligned}
	N_{\mathfrak{g}>1, 0}^Y (\e_0) &=   \frac{|B_{2\mathfrak{g}}|}{4 \mathfrak{g}} \frac{|B_{2\mathfrak{g}-2}|}{(2\mathfrak{g}-2)}
 \frac{(-1)^\mathfrak{g}}{(2\mathfrak{g}-2)!}\, \frac{\chi (\mathrm{CY}_3)}{\NN}\, \left( \frac{1}{\e_0^+} -  \frac{1}{\e_0^-} \right) \\
& = \frac{|B_{2\mathfrak{g}}|}{4 \mathfrak{g}} \frac{|B_{2\mathfrak{g}-2}|}{(2\mathfrak{g}-2)}
 \frac{(-1)^\mathfrak{g}}{(2\mathfrak{g}-2)!}\, \frac{p^0\, \chi (\mathrm{CY}_3)}{(\e_0)^2 + \pi^2 (p^0)^2}\ ,
\end{aligned}
\ee
which in the general case $p^0 \neq 0$ differs from the respective terms within the original OSV proposal, \eqref{eq:originalOSV}. We hope to be able to check these modifications against holographic calculations, as well as explore the addition of a refinement parameter, in future works.

\section{Discussion and outlook}
\label{sec:7}

We finish with a discussion that relates directly to the topological string partition function discussed here. We give a list of topics more closely related to supergravity and holography in \cite{Cassia:2025jkr}.

\subsubsection*{Refinement}
In the discussion so far we have neglected the possibility for another expansion parameter in the topological string partition function, which is allowed on toric manifolds. The additional parameter, conventionally denoted by $\mathfrak{b}$ is known as refinement.
This parameter arises naturally in the computation of the Nekrasov partition function \cite{Nekrasov:2002qd}, which generalizes the topological string partition function for toric Calabi--Yau manifolds. This partition function incorporates two independent equivariant parameters—one for each copy of the complex plane that describes the 4d/5d Omega background of Nekrasov--Okounkov \cite{Nekrasov:2003rj}. 

The general refined equivariant string partition function has the following formal expansion:~\footnote{The refinement parameter $\mathfrak{b}$ used here corresponds to the one in \cite{Alexandrov:2023wdj}, after a rescaling by $\mathi$.}
\be
    F_X (\lam, \e; g_s, \mathfrak{b}) = \sum_{\mathfrak{g} = 0}^\infty \sum_{n = 0}^\mathfrak{g} (-1)^n g_s^{2 (\mathfrak{g}- 1)}\, (\mathfrak{b} + \mathfrak{b}^{-1})^{2 n}  F_{\mathfrak{g}, n} (\lambda, \epsilon)\ ,
\ee
where in some cases the refined Gromov--Witten invariants can be explicitly calculated. 

An apparent caveat of the refinement is the lack of a clear geometric origin for the additional parameter at the level of the CY manifold $X$. This particularly affects the constant maps terms, whose refinement lacks a general definition.
We hope to come back to this point in the future, \cite{Cassia:2025jkr}.

\subsubsection*{Intersections numbers at different orders and other brane systems}

We already noted that the equivariant upgrade we pursued here can be defined for toric manifolds of arbitrary (complex) dimension. The equivariant constant maps we defined, see \eqref{eq:cconstmapssummary}, are a natural extension of the non-equivariant topological string expressions in $3$ dimensions. This is reflected in the fact that all constant maps terms we have correspond to triple (equivariant) intersections numbers, such that we have three natural classical candidates for an equivariant upgrade, cf.\ \eqref{eq:2.2}
\be
 {\rm deg} = 3 \quad \Rightarrow \quad \int_X \omega^3\ ,
 \quad \int_X\omega\cup c_2(TX)\ , \quad \int_X c_3(TX)\ ,
\ee
where the power of $\omega$ gives the homogeneity degree in the moduli $\lam^i$.
Each of these three candidates correspond to one of the terms in \eqref{eq:cconstmapssummary}. This is a choice that appears specifically relevant in the geometric description of M-theory, but this is no longer true for other brane models, see in particular \cite[(1.7)]{Colombo:2023fhu}.

Based on this, we can speculate that more general M2- and M5-brane systems are still geometrically described by the $({\rm deg} = 3)$ equivariant topological string we defined here, with the field theory partition function given schematically by
\be
 Z_L^\text{M-th.} = \int \mathd \lam\, \exp \left(
 F^{({\rm deg} = 3)}_X (\lam, \e; g_s) - \lam\, N_{\rm M2}
 - \lam^2\, N_{\rm M5} \right)\ , 
\ee
where $N_{\rm M2}$ and $N_{\rm M5}$ correspond to the exact brane charge of the corresponding type. We should however note that the relation between the transverse space $L$ and the modeled $X$ might generically become more complicated, see some examples in \cite{Martelli:2023oqk}.

Similarly, one could imagine a different equivariant topological string definition, relating to double intersections numbers, where the constant maps terms will be related to the equivariant upgrade of
\be
 \mathrm{deg} = 2 \quad \Rightarrow \quad \int_X \omega^2\ , \quad \int_X c_2(TX)\ ,
\ee
which would be used in the description of D3-branes and other type IIB objects.
Based on the scaling argument in \cite[(1.7)]{Colombo:2023fhu}, one would have to consider an integral of the form
\be\label{eq:IIB}
 Z_L^\mathrm{IIB} = \int \mathd \lam\, \exp \left( F^{({\rm deg} = 2)}_X
 (\lam, \e; g_s) - \lam\, N_\mathrm{D3} - \dots \right)\ , 
\ee
where the leading term in $F^{(2)}_X$ is proportional to $\BV^{(2)}_X (\lam, \e)$, cf.\ \eqref{eq:g0equimapn}, instead of $\BV^{(3)}_X (\lam, \e)$, thus leading to a Gaussian integral in the variable $\lam$. It then follows that the saddle-point equations for this integral match with the flux constraints considered in \cite[(1.4)]{Colombo:2023fhu}.
At present, this should be viewed merely as a reformulation of the flux constraints in terms of a saddle-point equation for an integral over $\lam$. It would nevertheless be interesting to understand whether there exists an \emph{a priori} justification for such an integral representation, as in~\eqref{eq:IIB}.

Likewise, we can consider an arbitrary degree $m$ in $\lam$,
\be
 {\rm deg} = m \quad \Rightarrow \quad \int_X \omega^m\ ,
 \quad \int_X\omega^{m-2} \cup c_2(TX)\ , \quad  \dots\ ,
 \quad  \int_X  c_m (TX)\ ,
\ee
with each higher degree allowing a larger number of different invariants made up from $m$ derivatives acting on the equivariant volume $\BV_X$ (we skipped $c_1 (TX)$ as it vanishes on CY manifolds non-equivariantly), and first term proportional to $\BV_X^{(m)}$.
Such an example of higher degree $({\rm deg} = 5)$ is a candidate for the description of massive IIA objects, see again \cite[(1.7)]{Colombo:2023fhu}, 
\be
 Z_L^\mathrm{mIIA} = \int \mathd \lam\, \exp \left( F^{({\rm deg} = 5)}_X
 (\lam, \e; g_s) - \lam^2\, N_{\rm D2} - \lam^3\, N_{\rm D4} - \dots  \right)\ . 
\ee
Note furthermore that each of the terms of a given degree might allow for additional refinement and come in a different linear combination, see the discussion of periods in \cite{Cassia:2022lfj}. We give additional holographic comments on these observations in \cite{Cassia:2025jkr}.

The ultimate goal of the present approach is to determine the \emph{fully} equivariant upgrade of the complete topological string partition function in the sense specified in Section~\ref{sec:2}, i.e.\ including dependence on all ineffective equivariant parameters $\e$ and K\"ahler moduli $\lam$. This would have to take into account not only all higher degree maps%
\footnote{The equivariant volume admits multiple inequivalent quantizations. One approach is via the 2d GLSM disk (or sphere) partition function \cite{Hori:2013ika} (or \cite{Jockers:2012dk,Gomis:2012wy,Closset:2015rna}), where the quantization parameter controls the weights of instanton contributions, i.e., higher-degree maps in GW theory. Another approach is as a quantum mechanical index, where the parameter instead sets the compactified time circle size. This corresponds to the K-theoretic uplift of the equivariant volume, also known as the Hilbert series \cite{Benvenuti:2006qr}, and is fundamentally different from the first. In general, no direct relation is expected between these quantizations. However, turning on both parameters simultaneously leads to a 3d (K-theoretic) GLSM partition function/index \cite{Jockers:2018sfl,Cassia:2022lfj}. While connections to the higher-genus GW or topological string partition functions exist in special cases, a general understanding remains elusive and likely requires additional dualities, such as holography.}, but also all higher genus contributions,
thus leading to a complete set of Gromov--Witten invariants, $N_{\mathfrak{g},\beta} (\lam, \e)$, cf.\ \eqref{eq:1.4}, for which, a rigorous definition is still lacking.
Clarifying this will allow us to directly relate the present approach with a number of already established results on the connection between the Gromov--Witten invariants and the M2-brane instantons in a series of papers, see \cite{Drukker:2011zy,Hatsuda:2013oxa,Grassi:2014zfa} and references thereof.

\subsection*{Acknowledgements}
\noindent We wish to thank Thorsten Hertl for illuminating discussions, and Dario Martelli and Alberto Zaffaroni for useful comments on the draft. We gratefully acknowledge support during the MATRIX Program “New Deformations of Quantum Field and Gravity Theories,” (Creswick AU, 22 Jan -- 2 Feb 2024), where this project was initiated.
The work of L.C.\ was supported by the ARC Discovery Grant DP210103081.
The study of K.H.\ is financed by the European Union-NextGenerationEU, through the National Recovery and Resilience Plan of the Republic of Bulgaria, project No BG-RRP-2.004-0008-C01.

\appendix

\section{Exact M2-brane charge}
\label{app:A}

Here we address the question of the exact M2-brane charge, denoted by $N_{\rm M2}$ in the introduction, first focusing on the spacetime filling case. In the classical two-derivative supergravity background, see section \ref{sec:5.1}, $N_{\rm M2}$ matches the number of branes $N$. Eight-derivative corrections to the M2-brane solution were previously discussed in \cite{Bergman:2009zh}, where two correction terms were proposed (in the absence of fractional branes). However, we find that only one of these terms should be included, while the second should be discarded.

To present our main argument, we briefly summarize the key elements of the calculation in \cite{Bergman:2009zh}, which is based on the Chern-Simons-like higher-derivative correction to the 11d supergravity action, derived in \cite{Vafa:1995fj,Duff:1995wd}.~\footnote{It is important to note that the fully supersymmetric invariant Lagrangian that includes this higher-derivative correction is still unknown. Hence, we only rely on the available action to determine the M2-brane charge, which is protected from other contributions, but cannot use it to calculate the corrections to the supergravity action.} Including this correction, the 11d supergravity bosonic action for the metric $g_{\mu\nu}$ and the three-form field $C_3$ with field strength $F_4 = {\rm d} C_3$ is given by 
\be
 S_\mathrm{11d} = \frac{2 \pi}{(2 \pi l_P)^9} \int \left[ R \star 1 -\frac12\, F_4 \wedge \star F_4 - \frac16\, C_3 \wedge F_4 \wedge F_4 + (2 \pi l_P)^6\, C_3 \wedge I_8 \right]\ , 
\ee 
where $I_8$ is the 8-dimensional anomaly polynomial, which can be expressed as a polynomial in the Pontryagin classes of the tangent bundle to the 11d spacetime,
\be
 I_8 = - \frac1{2 \cdot 4!}\, \left( \frac14\, p_1^2 - p_2 \right)\ .
\ee
The M2-brane charge in a given background can be derived from the equation of motion for $C_3$, 
\be
\label{eq:EOMC3}
{\rm d} \star F_4 + \frac12\, F_4 \wedge F_4 = (2 \pi l_P)^6\, N\, \delta^8 (x)  + (2 \pi l_P)^6\, I_8\ , 
\ee 
where the first term on the r.h.s.\ represents a localized M2-brane source with flux $N$. This flux appears naturally in the explicit supergravity background solution discussed in Sec.\ \ref{sec:5}, where the eight-dimensional space transverse to the M2-brane is identified with the CY cone $X_\circ$ over the Sasakian manifold $L$. 

At large radius, the 11d geometry looks like that of $\mathbb{R}^3\times X_\circ$, so that the tangent bundle decomposes as the direct sum of a trivial rank-3 vector bundle and the tangent of $X_\circ$ (pulled-back to 11d).
Since $X_\circ$ is a complex manifold with vanishing first Chern class (i.e.\ CY), we can further simplify the anomaly polynomial as
\be 
 I_8 = - \frac{c_4(TX_\circ)}{24}\ ,
\ee
where $c_4$ is the fourth Chern class of the tangent to $X_\circ$, and the flat directions do not give any contributions.

The key point, which holds universally for all M2-brane solutions, is that the additional terms in the $C_3$ equation of motion generate corrections to the brane charge: 
\be
\label{eq:QM2}
\begin{aligned}
 N_{\rm M2}
 :=&\, \frac1{(2 \pi l_P)^6}\, \int_L \star F_4 \\
  =&\, \frac1{(2 \pi l_P)^6}\, \int_{X_\circ} \mathd \star F_4 \\
  =&\, N - \frac{\chi(X)}{24} - \frac1{2 (2 \pi l_P)^6}\, \int_X F_4 \wedge F_4\ ,
\end{aligned}
\ee
where we used the equation of motion \eqref{eq:EOMC3} together with Stokes' theorem and identified $L$ with the boundary of $X_\circ$ at infinity in the radial direction.
In the last line, we also used that the top Chern class of a smooth manifold integrates to the Euler characteristic. However, since $X_\circ$ is not smooth, we define the integral of the top Chern class by first resolving the conical singularity and then taking the mesonic limit $t\to0$. We denote the resulting integral by $\chi(X)$ for short, and it is independent of the chosen resolution provided that the resolution contains no orbifold points. 

The last term here accounts for a contribution from discrete torsion, arising from the discrete holonomy of the $C_3$ field. This term plays an important role in theories like ABJ \cite{Aharony:2008gk}, where the ranks of the gauge groups in the quiver field theory do not match, see below.

Setting aside the discrete torsion term for the moment, we wish to highlight an important distinction in our interpretation of the correction proportional to $\chi (X)$, compared to that of \cite{Bergman:2009zh}. In \cite{Bergman:2009zh}, the authors argue that this topological term should be separated into two independent, non-vanishing contributions—a bulk and a boundary term. We find this distinction unnecessary, as the Euler characteristic is a topological invariant only when considered in its entirety, whereas the bulk and boundary terms are not separately metric independent. Since the M2-brane charge itself must be an invariant quantity, it is logical to include the full Euler characteristic on the r.h.s. of \eqref{eq:QM2}, as we have done.

This represents our only disagreement with the analysis in \cite{Bergman:2009zh}, and we will use the M2-brane charge as presented here, in the absence of discrete torsion,
\be 
N_{\rm M2} = N - \frac{\chi (X)}{24}\ . 
\ee 

The contribution from discrete torsion—previously discussed in \cite{Bergman:2009zh} in the context of the $\mathbb{C}^4/\mathbb{Z}_k$ orbifold—is instead associated with fractional brane charges \cite{Aharony:2008gk}. In such cases, the gauge group ranks at the various nodes of the quiver theory become unequal, with the imbalance captured precisely by the discrete torsion term in \eqref{eq:QM2}. For the $\mathbb{C}^4/\mathbb{Z}_k$ example, our analysis is fully consistent with the results of \cite{Bergman:2009zh}. A systematic generalization of the discrete torsion contribution to an arbitrary toric manifold $X$ remains an interesting open direction for future work.

\subsection*{Wrapped brane charges}
In the case of wrapped M2-branes on a two-dimensional surface $\Sigma$, at two-derivative level, we have
\be
\label{eq:M2chargeSi}
	n_i\, N = \frac1{(2 \pi l_P)^6} \int_{S_i} \star F_4 \in \BZ\ ,
\ee
where $S_i \hookrightarrow M$ are defined in \eqref{eq:def-Si}.
In the case that $\Sigma$ is a toric Riemann surface, there are two additional 7-dimensional submanifolds of $M$ that can support M2-brane charge; they correspond to the fibers of $M$ over the two fixed-points in the base. We denote these as $S_\pm$ and , by definition, they are isomorphic to two copies of the fiber $L$. Then we have two equations analogous to \eqref{eq:M2chargeSi},
\be
 -a N = \frac1{(2 \pi l_P)^6} \int_{S_+} \star F_4\ ,
 \hspace{30pt}
 -b N = \frac1{(2 \pi l_P)^6} \int_{S_-} \star F_4\ ,
\ee
where we identified the base of the fibration with $\WPL$. Due to the CY condition on the total space $Y$, we have that the sum of all toric divisors should vanish identically. i.e.\ $\sum_i S_i+S_++S_-=0$, and in fact one can check that, because of \eqref{eq:CYforY}, the sum of the integrals of $\star F_4$ over all divisors vanishes.

Following the previous discussion, we find a shift in the exact charges,  
\be
    N_{{\rm M2}, i} := n_i\, N + \delta_i\ , \quad N_{{\rm M2}, +} := -a\, N + \delta_+\ , \quad N_{{\rm M2}, -} := -b\, N + \delta_-\ ,
\ee
given by
\be
    \delta_i := \int_{S_i} \star f_4\ , \quad \delta_\pm := \int_{S_\pm} \star f_4\ , \qquad
    \mathd \star f_4 = I_8\ ,
\ee
again looking at the torsionless case for simplicity. 

Let us assume there exist submanifolds $y_i$ of the 11d background geometry such that $\partial y_i=S_i$, then we can write $\delta_i = \int_{y_i}I_8$. A thorough analysis of the geometry of the $y_i$'s is beyond the scope of this article, however it natural to expect that $y_i$ take the form of the total spaces of fiber bundles over $\Sigma$ with fibers isomorphic to the toric divisors $D_i$ in $X_\circ$. Whenever the space $Y$ is itself a toric CY (and $\Sigma$ is a weighted projective space $\WPL$), then it follows that $y_i$ are toric divisors in $Y$, corresponding to the vanishing of one of the homogeneous coordinates in the fiber. In addition to these divisors, $Y$ also has two divisors associated to the vanishing of the homogeneous coordinates on the base $\WPL$, and they correspond to the fibers over the two fixed points in $\WPL$; we denote these divisors as $y_\pm$ which correspond to the cones over $S_\pm$, respectively. In this case then, the shift in the brane charges can be written as
\be
 \delta_i = -\frac{1}{24}\int_{y_i} c_4(TY)\,.
\ee
Notice that the integrand is no longer the top Chern class of the space over which we integrate, which means that the result is not just the Euler characteristic of $y_i$ but something more complicated that depends on the topology of the fibration.
Similarly, for $y_\pm$ we have
\be
 \delta_\pm = -\frac{1}{24}\int_{y_\pm} c_4(TY)\,.
\ee

\section{\texorpdfstring{$\BP^k$}{Pk} and \texorpdfstring{$\BC^m\times\BP^k$}{Cm x Pk} fibrations}
\label{app:B}
In this section we shortly report on several standard examples of toric fibrations (including compact ones), where the total space of the fibration is not a CY manifold. Although they have no holographic or flux compactification applications, we include these results out of both mathematical curiosity and potential field theory interest in relation to the anomaly polynomial, see \cite{Martelli:2023oqk}.

\subsection*{Hirzebruch surfaces}

We consider the total space of the $\BP^1$ fibration over $\BP^1$ corresponding to the $k$-th Hirzebruch surface $F_k$. This is a toric quotient with charges
\be
Q =
\begin{pmatrix}
1 & 1 & 0 &  -k \\
0 & 0 & 1 & 1
\end{pmatrix}\ ,
\ee
where we regard $t^1$ (corresponding to the first row) as the volume of the base and $t^2$ (corresponding to the second row) that of the fiber.

In terms of the equivariant volume of $\BP^1$, cf.\ \eqref{eq:Cequivolspindle} for $a = b =1$,
\be
 \VVV_{\BP^1}(t, \e_1, \e_2)
 =\frac{\mathe^{-\e_1t}}{\e_2-\e_1}
 + \frac{\mathe^{-\e_2t}}{\e_1-\e_2}\ ,
\ee
we can then write the equivariant volume of the total space as
\be
 \VVV_{F_k} (t, \e)
 = \frac{\mathe^{-\e_1 t^1}}{\e_2-\e_1} \VVV_{\BP^1}(t^2, \e_3,\e_4+k\, \e_1)
 + \frac{\mathe^{-\e_2 t^1}}{\e_1-\e_2} \VVV_{\BP^1}(t^2, \e_3,\e_4+k\, \e_2)\ ,
\ee
where $\e_{1,2}$ are the parameters of the fiber and $\e_{3,4}$ are the parameters on the base. 

In the non-equivariant limit, we can compute the generating function of intersection numbers as
\be
 \VVV_{F_k}(t,\e=0) = t^1 t^2+k\, \frac{(t^2)^2}{2}\ ,
\ee
which corresponds to the compact volume of the $n$-th Hirzebruch surface.

In the $\lam$-parametrization, we also find
\be
    \BV_{F_k} (\lam, \e)= \frac{e^{\lam^2 (\e_2 -\e_1)}}{\e_2 - \e_1}\, \BV_{\BP^1} (\lam^{3,4}, \e_4 - \e_3 + k\, \e_1 ) - \frac{e^{-\lam^1 (\e_2 -\e_1)}}{\e_2 - \e_1}\, \BV_{\BP^1} (\lam^{3,4}, \e_4 - \e_3 + k\, \e_2 )\ .
\ee
We can again define
\be
    \NN := \e_2 - \e_1\ , \qquad \e_\pm := \e_4 - \e_3 + k\, \e_{1,2}\ ,
\ee
together with
\be
    \lam^{3,4}_+ = \lam^{3,4} \mp \frac{\NN}{\nu_+}\, \lam^2\ , \qquad \lam^{3,4}_- = \lam^{3,4} \pm \frac{\NN}{\nu_-}\, \lam^1\ ,
\ee
where the sign $\pm$ is correlated with the index $3, 4$ respectively, such that we arrive at 
\be
\label{eq:MZequivolHirzebruch}
    \BV_{F_k} (\mu, \nu) = \frac{1}{\NN}\, \left( \BV_{\BP^1} (\lam_+^{3,4}, \e_+) - \BV_{\BP^1} (\lam_-^{3,4}, \e_-) \right)\ .
\ee
Notice the close similarity to the general expressions for the CY cases discussed in the main text, \eqref{eq:MZequivolCnoverspindle} and \eqref{eq:MZequivolmesonicoverspindle}. It is however important to note that the underlying variables are not the same and depend on the details of the fiber.

\subsection*{\texorpdfstring{$\BP^m$}{P^m} over \texorpdfstring{$\WPL$}{WPL}}
We can easily generalize to a fibration of $\BP^1$ on the spindle with
\be
Q =
\begin{pmatrix}
a & b & 0 &  -k \\
0 & 0 & 1 & 1
\end{pmatrix}\ ,
\ee
again having the base on the first row. Then the volume becomes
\be
 \VVV_{F^{(a,b)}_k}
 = \frac{\mathe^{-\frac{\e_1 t^1}{a}}}{a\e_2-b\e_1}
 \mathrm{vol}_{\BP^1}\left(t^2, \e_3,\e_4+\frac{k}{a}\e_1\right)
 + \frac{\mathe^{-\frac{\e_2 t^1}{b}}}{b\e_1-a\e_2}
 \mathrm{vol}_{\BP^1}\left(t^2, \e_3,\e_4+\frac{k}{b}\e_2\right)\ ,
\ee
and the resulting $\lam$-parametrization follows analogously as above.

Moreover, we can also generalize the dimension of the fiber and consider $\BP^m$ over the spindle, with a charge matrix given by
\be
Q =
\begin{pmatrix}
a & b & 0 & \dots & 0 &  -k \\
0 & 0 & 1 & \dots & 1 & 1
\end{pmatrix}\ ,
\ee
such that the equivariant volume becomes
\be
\begin{aligned}
 \VVV(t, \e) &= \oint\frac{\mathd\phi_1}{2\pi\mathi}
 \frac{\mathe^{\phi_1t^1}}{(\e_a+a\phi_1)(\e_b+b\phi_1)}
 \left[\oint\frac{\mathd\phi_2}{2\pi\mathi} \frac{\mathe^{\phi_2t^2}}
 {\prod_{i=1}^{m+1}(\e_i-k\, \phi_1+\delta_{i,m+1}\phi_2)}\right] \\
 &= \frac{\mathe^{-\frac{\e_at^1}{a}}}{a\e_b-b\e_a}
 \VVV_{\BP^m}\left(t^2, \e_i+\frac{k}{a}\e_a\delta_{i,m+1}\right)
 + \frac{\mathe^{-\frac{\e_bt^1}{b}}}{b\e_a-a\e_b}
 \VVV_{\BP^m}\left(t^2, \e_i+\frac{k}{b}\e_b\delta_{i,m+1}\right)\ .
\end{aligned}
\ee
Again, it is straightforward to rewrite this expression in the $\lam$-parametrization to generalize \eqref{eq:MZequivolHirzebruch}.

\subsection*{\texorpdfstring{$\BC\times\BP^1$}{CxP1} over \texorpdfstring{$\BP^1$}{P1}}
Let us consider the fiber $\BC \times \BP^1$ over a $\BP^1$ base, giving us an interesting example of a product space fiber with a compact and a non-compact factor. This is the closest example to the situation discussed in the main text in section \ref{sec:6.2} that features a non-toric compact factor in the fiber. 

We can choose the matrix of charges
\be
Q =
\begin{pmatrix}
1 & 1 & 0 &  -n & -p \\
0 & 0 & 1 & 1 & 0
\end{pmatrix}\ ,
\ee
where we regard $t^1$ (corresponding to the first row) as the volume of the base and $t^2$ (corresponding to the second row) that of the $\BP^1$ factor in the fiber. We also assume that $n,p\in\BZ_{\geq0}$, and note that the fifth direction above corresponds to the complex plane factor $\BC$, and we use the notation $\e_5 = \e$ to distinguish it.

 The equivariant volume in the chamber $t^1,t^2>0$ is given by the integral
\be
\begin{aligned}
 \VVV(t, \e) &= \oint\frac{\mathd\phi_1}{2\pi\mathi}\oint\frac{\mathd\phi_2}{2\pi\mathi}
 \frac{\mathe^{\phi_1t^1+\phi_2t^2}}{(\e_1+\phi_1)(\e_2+\phi_2)(\e_3+\phi_2)
 (\e_4-n\phi_1+\phi_2)(\e-p\phi_1)} \\
 &= \frac{\mathe^{-\e_1t^1}}{\e_2-\e_1} \VVV_{\BP^1}(t^2,\e_3,\e_4+n\e_1)
 \VVV_{\BC}(\e+p\e_1) \\
 &\hspace{30pt}+ \frac{\mathe^{-\e_2t^1}}{\e_1-\e_2} \VVV_{\BP^1}(t^2,\e_3,\e_4+n\e_2)
 \VVV_{\BC}(\e+p\e_2)
\end{aligned}
\ee
where we took the poles corresponding to solutions of:
\be
\begin{aligned}
 & (\e_1+\phi_1)(\e_3+\phi_2)=0 \\
 & (\e_2+\phi_1)(\e_3+\phi_2)=0 \\
 & (\e_1+\phi_1)(\e_4-n\phi_1+\phi_2)=0 \\
 & (\e_2+\phi_1)(\e_4-n\phi_1+\phi_2)=0
\end{aligned}
\ee
as suggested by the JK prescription in this chamber. 

It is particularly interesting to observe what happens in the non-equivariant limits in this case. Setting to zero the equivariant parameters $\e_3,\e_4$ associated to the compact directions on the fiber, we find
\be
 \VVV(t, \e, \e_{1,2}, \e_{3,4} = 0) = \frac{1}{\e_2 - \e_1}\, \left( \frac{\mathe^{-t^1 \e_1} (1-\mathe^{-n\, t^2 \e_1})}{n\, \e_1 (\e + p\, \e_1)} - \frac{\mathe^{-t^1 \e_2} (1-\mathe^{-n\, t^2 \e_2})}{n\, \e_2 (\e + p\, \e_2)}\right)\ ,
\ee
such that the equivariant volume has an inverse cubic dependence on the $\e$ parameters. We can also take the further non-equivariant limit of the compact base as well, finding
\be
 \lim_{\e_{1,2,3,4} \to 0} \VVV(t, \e)
 = \frac{p\, t^2}{\e^2} + \frac{t^2 (2 t^1 + n t^2)}{2 \e}\ .
\ee

In the $\lam$-parametrization, we instead find
\be
\begin{aligned}
 \BV =& \frac{\mathe^{\lam^2 (\e_2 -\e_1)}}{\e_2 - \e_1}\, \BV_{\BP^1} (\lam^{3,4}, \e_4 - \e_3 + n\, \e_1 )\, \BV_{\BC} (\lam,\e + p\, \e_1) \\
 &- \frac{\mathe^{-\lam^1 (\e_2 -\e_1)}}{\e_2 - \e_1}\, \BV_{\BP^1} (\lam^{3,4}, \e_4 - \e_3 + n\, \e_2 )\, \BV_{\BC} (\lam,\e + p\, \e_2)\ ,
\end{aligned}
\ee
It is again straightforward to find the parametrization that brings the equivariant volume in the form analogous to \eqref{eq:MZequivolHirzebruch}.

The generalization of these results to a fibration of $\BC^m \times \BP^k$ over $\WPL$ follows the same pattern by adjusting the corresponding number of variables.

\bibliographystyle{JHEP}
\bibliography{refs.bib}

\providecommand{\href}[2]{#2}\begingroup\raggedright\begin{thebibliography}{100}

\bibitem{Strominger:1996sh}
A.~Strominger and C.~Vafa, {\it {Microscopic origin of the Bekenstein-Hawking
  entropy}},  {\em Phys. Lett. B} {\bf 379} (1996) 99--104,
  [\href{http://arxiv.org/abs/hep-th/9601029}{{\tt hep-th/9601029}}].

\bibitem{Maldacena:1997de}
J.~M. Maldacena, A.~Strominger, and E.~Witten, {\it {Black hole entropy in M
  theory}},  {\em JHEP} {\bf 12} (1997) 002,
  [\href{http://arxiv.org/abs/hep-th/9711053}{{\tt hep-th/9711053}}].

\bibitem{Maldacena:1997re}
J.~M. Maldacena, {\it {The Large N limit of superconformal field theories and
  supergravity}},  {\em Adv. Theor. Math. Phys.} {\bf 2} (1998) 231--252,
  [\href{http://arxiv.org/abs/hep-th/9711200}{{\tt hep-th/9711200}}].

\bibitem{Nekrasov:2002qd}
N.~A. Nekrasov, {\it {Seiberg-Witten prepotential from instanton counting}},
  {\em Adv. Theor. Math. Phys.} {\bf 7} (2003), no.~5 831--864,
  [\href{http://arxiv.org/abs/hep-th/0206161}{{\tt hep-th/0206161}}].

\bibitem{Pestun:2007rz}
V.~Pestun, {\it {Localization of gauge theory on a four-sphere and
  supersymmetric Wilson loops}},  {\em Commun. Math. Phys.} {\bf 313} (2012)
  71--129, [\href{http://arxiv.org/abs/0712.2824}{{\tt arXiv:0712.2824}}].

\bibitem{Pestun:2016zxk}
V.~Pestun et~al., {\it {Localization techniques in quantum field theories}},
  {\em J. Phys. A} {\bf 50} (2017), no.~44 440301,
  [\href{http://arxiv.org/abs/1608.02952}{{\tt arXiv:1608.02952}}].

\bibitem{Gauntlett:2003di}
J.~P. Gauntlett, {\it {Branes, calibrations and supergravity}},  {\em Clay
  Math. Proc.} {\bf 3} (2004) 79--126,
  [\href{http://arxiv.org/abs/hep-th/0305074}{{\tt hep-th/0305074}}].

\bibitem{Grana:2005jc}
M.~Grana, {\it {Flux compactifications in string theory: A Comprehensive
  review}},  {\em Phys. Rept.} {\bf 423} (2006) 91--158,
  [\href{http://arxiv.org/abs/hep-th/0509003}{{\tt hep-th/0509003}}].

\bibitem{Duistermaat:1982vw}
J.~J. Duistermaat and G.~J. Heckman, {\it {On the Variation in the cohomology
  of the symplectic form of the reduced phase space}},  {\em Invent. Math.}
  {\bf 69} (1982) 259--268.

\bibitem{berline1982classes}
N.~Berline and M.~Vergne, {\it Classes caract{\'e}ristiques {\'e}quivariantes.
  formule de localisation en cohomologie {\'e}quivariante},  {\em CR Acad. Sci.
  Paris} {\bf 295} (1982), no.~2 539--541.

\bibitem{Atiyah:1984px}
M.~F. Atiyah and R.~Bott, {\it {The Moment map and equivariant cohomology}},
  {\em Topology} {\bf 23} (1984) 1--28.

\bibitem{Witten:1988xj}
E.~Witten, {\it {Topological Sigma Models}},  {\em Commun. Math. Phys.} {\bf
  118} (1988) 411.

\bibitem{Dijkgraaf:1990dj}
R.~Dijkgraaf, H.~L. Verlinde, and E.~P. Verlinde, {\it {Topological strings in
  d \ensuremath{<} 1}},  {\em Nucl. Phys. B} {\bf 352} (1991) 59--86.

\bibitem{Witten:1992fb}
E.~Witten, {\it {Chern-Simons gauge theory as a string theory}},  {\em Prog.
  Math.} {\bf 133} (1995) 637--678,
  [\href{http://arxiv.org/abs/hep-th/9207094}{{\tt hep-th/9207094}}].

\bibitem{Bershadsky:1993cx}
M.~Bershadsky, S.~Cecotti, H.~Ooguri, and C.~Vafa, {\it {Kodaira-Spencer theory
  of gravity and exact results for quantum string amplitudes}},  {\em Commun.
  Math. Phys.} {\bf 165} (1994) 311--428,
  [\href{http://arxiv.org/abs/hep-th/9309140}{{\tt hep-th/9309140}}].

\bibitem{Aganagic:2003db}
M.~Aganagic, A.~Klemm, M.~Mariño, and C.~Vafa, {\it {The Topological vertex}},
   {\em Commun. Math. Phys.} {\bf 254} (2005) 425--478,
  [\href{http://arxiv.org/abs/hep-th/0305132}{{\tt hep-th/0305132}}].

\bibitem{Aganagic:2003qj}
M.~Aganagic, R.~Dijkgraaf, A.~Klemm, M.~Mariño, and C.~Vafa, {\it {Topological
  strings and integrable hierarchies}},  {\em Commun. Math. Phys.} {\bf 261}
  (2006) 451--516, [\href{http://arxiv.org/abs/hep-th/0312085}{{\tt
  hep-th/0312085}}].

\bibitem{Iqbal:2007ii}
A.~Iqbal, C.~Kozcaz, and C.~Vafa, {\it {The Refined topological vertex}},  {\em
  JHEP} {\bf 10} (2009) 069, [\href{http://arxiv.org/abs/hep-th/0701156}{{\tt
  hep-th/0701156}}].

\bibitem{Huang:2010kf}
M.-x. Huang and A.~Klemm, {\it {Direct integration for general $\Omega$
  backgrounds}},  {\em Adv. Theor. Math. Phys.} {\bf 16} (2012), no.~3
  805--849, [\href{http://arxiv.org/abs/1009.1126}{{\tt arXiv:1009.1126}}].

\bibitem{Krefl:2010fm}
D.~Krefl and J.~Walcher, {\it {Extended Holomorphic Anomaly in Gauge Theory}},
  {\em Lett. Math. Phys.} {\bf 95} (2011) 67--88,
  [\href{http://arxiv.org/abs/1007.0263}{{\tt arXiv:1007.0263}}].

\bibitem{Aharony:2008ug}
O.~Aharony, O.~Bergman, D.~L. Jafferis, and J.~Maldacena, {\it {N=6
  superconformal Chern-Simons-matter theories, M2-branes and their gravity
  duals}},  {\em JHEP} {\bf 10} (2008) 091,
  [\href{http://arxiv.org/abs/0806.1218}{{\tt arXiv:0806.1218}}].

\bibitem{Kapustin:2009kz}
A.~Kapustin, B.~Willett, and I.~Yaakov, {\it {Exact Results for Wilson Loops in
  Superconformal Chern-Simons Theories with Matter}},  {\em JHEP} {\bf 03}
  (2010) 089, [\href{http://arxiv.org/abs/0909.4559}{{\tt arXiv:0909.4559}}].

\bibitem{Drukker:2011zy}
N.~Drukker, M.~Mariño, and P.~Putrov, {\it {Nonperturbative aspects of ABJM
  theory}},  {\em JHEP} {\bf 11} (2011) 141,
  [\href{http://arxiv.org/abs/1103.4844}{{\tt arXiv:1103.4844}}].

\bibitem{Hatsuda:2013oxa}
Y.~Hatsuda, M.~Mariño, S.~Moriyama, and K.~Okuyama, {\it {Non-perturbative
  effects and the refined topological string}},  {\em JHEP} {\bf 09} (2014)
  168, [\href{http://arxiv.org/abs/1306.1734}{{\tt arXiv:1306.1734}}].

\bibitem{Saad:2019lba}
P.~Saad, S.~H. Shenker, and D.~Stanford, {\it {JT gravity as a matrix
  integral}},  \href{http://arxiv.org/abs/1903.11115}{{\tt arXiv:1903.11115}}.

\bibitem{Penington:2019npb}
G.~Penington, {\it {Entanglement Wedge Reconstruction and the Information
  Paradox}},  {\em JHEP} {\bf 09} (2020) 002,
  [\href{http://arxiv.org/abs/1905.08255}{{\tt arXiv:1905.08255}}].

\bibitem{Almheiri:2019psf}
A.~Almheiri, N.~Engelhardt, D.~Marolf, and H.~Maxfield, {\it {The entropy of
  bulk quantum fields and the entanglement wedge of an evaporating black
  hole}},  {\em JHEP} {\bf 12} (2019) 063,
  [\href{http://arxiv.org/abs/1905.08762}{{\tt arXiv:1905.08762}}].

\bibitem{Penington:2019kki}
G.~Penington, S.~H. Shenker, D.~Stanford, and Z.~Yang, {\it {Replica wormholes
  and the black hole interior}},  {\em JHEP} {\bf 03} (2022) 205,
  [\href{http://arxiv.org/abs/1911.11977}{{\tt arXiv:1911.11977}}].

\bibitem{Almheiri:2019qdq}
A.~Almheiri, T.~Hartman, J.~Maldacena, E.~Shaghoulian, and A.~Tajdini, {\it
  {Replica Wormholes and the Entropy of Hawking Radiation}},  {\em JHEP} {\bf
  05} (2020) 013, [\href{http://arxiv.org/abs/1911.12333}{{\tt
  arXiv:1911.12333}}].

\bibitem{Cassia:2025jkr}
L.~Cassia and K.~Hristov, {\it {M2-brane partition functions and HD
  supergravity from equivariant topological strings}},
  \href{http://arxiv.org/abs/2508.21619}{{\tt arXiv:2508.21619}}.

\bibitem{Martelli:2023oqk}
D.~Martelli and A.~Zaffaroni, {\it {Equivariant localization and holography}},
  {\em Lett. Math. Phys.} {\bf 114} (2024), no.~1 15,
  [\href{http://arxiv.org/abs/2306.03891}{{\tt arXiv:2306.03891}}].

\bibitem{Colombo:2023fhu}
E.~Colombo, F.~Faedo, D.~Martelli, and A.~Zaffaroni, {\it {Equivariant volume
  extremization and holography}},  {\em JHEP} {\bf 01} (2024) 095,
  [\href{http://arxiv.org/abs/2309.04425}{{\tt arXiv:2309.04425}}].

\bibitem{Grassi:2014zfa}
A.~Grassi, Y.~Hatsuda, and M.~Mariño, {\it {Topological Strings from Quantum
  Mechanics}},  {\em Annales Henri Poincare} {\bf 17} (2016), no.~11
  3177--3235, [\href{http://arxiv.org/abs/1410.3382}{{\tt arXiv:1410.3382}}].

\bibitem{Marino:2015nla}
M.~Mariño, {\it {Spectral theory and mirror symmetry.}},  {\em Proc. Symp.
  Pure Math.} {\bf 98} (2018) 259, [\href{http://arxiv.org/abs/1506.07757}{{\tt
  arXiv:1506.07757}}].

\bibitem{Codesido:2015dia}
S.~Codesido, A.~Grassi, and M.~Mariño, {\it {Spectral Theory and Mirror Curves
  of Higher Genus}},  {\em Annales Henri Poincare} {\bf 18} (2017), no.~2
  559--622, [\href{http://arxiv.org/abs/1507.02096}{{\tt arXiv:1507.02096}}].

\bibitem{Hori:2003ic}
K.~Hori, S.~Katz, A.~Klemm, R.~Pandharipande, R.~Thomas, C.~Vafa, R.~Vakil, and
  E.~Zaslow, {\em {Mirror symmetry}}, vol.~1 of {\em Clay mathematics
  monographs}.
\newblock AMS, Providence, USA, 2003.

\bibitem{Marino:2005sj}
M.~Mariño, {\em {Chern-Simons Theory, Matrix Models, and Topological
  Strings}}, vol.~131.
\newblock Oxford University Press, 09, 2005.

\bibitem{Givental:1996equ}
A.~B. Givental, {\it {Equivariant Gromov-Witten invariants}},  {\em
  International Mathematics Research Notices} {\bf 1996} (07, 1996) 613--663,
  [\href{http://arxiv.org/abs/alg-geom/9603021}{{\tt alg-geom/9603021}}].

\bibitem{Couzens:2018wnk}
C.~Couzens, J.~P. Gauntlett, D.~Martelli, and J.~Sparks, {\it {A geometric dual
  of $c$-extremization}},  {\em JHEP} {\bf 01} (2019) 212,
  [\href{http://arxiv.org/abs/1810.11026}{{\tt arXiv:1810.11026}}].

\bibitem{Gauntlett:2018dpc}
J.~P. Gauntlett, D.~Martelli, and J.~Sparks, {\it {Toric geometry and the dual
  of $c$-extremization}},  {\em JHEP} {\bf 01} (2019) 204,
  [\href{http://arxiv.org/abs/1812.05597}{{\tt arXiv:1812.05597}}].

\bibitem{Hosseini:2019use}
S.~M. Hosseini and A.~Zaffaroni, {\it {Proving the equivalence of
  $c$-extremization and its gravitational dual for all toric quivers}},  {\em
  JHEP} {\bf 03} (2019) 108, [\href{http://arxiv.org/abs/1901.05977}{{\tt
  arXiv:1901.05977}}].

\bibitem{Hosseini:2019ddy}
S.~M. Hosseini and A.~Zaffaroni, {\it {Geometry of $\mathcal{I}$-extremization
  and black holes microstates}},  {\em JHEP} {\bf 07} (2019) 174,
  [\href{http://arxiv.org/abs/1904.04269}{{\tt arXiv:1904.04269}}].

\bibitem{Gauntlett:2019roi}
J.~P. Gauntlett, D.~Martelli, and J.~Sparks, {\it {Toric geometry and the dual
  of ${\cal I}$-extremization}},  {\em JHEP} {\bf 06} (2019) 140,
  [\href{http://arxiv.org/abs/1904.04282}{{\tt arXiv:1904.04282}}].

\bibitem{Nekrasov:2021ked}
N.~Nekrasov, N.~Piazzalunga, and M.~Zabzine, {\it {Shifts of prepotentials
  (with an appendix by Michele Vergne)}},  {\em SciPost Phys.} {\bf 12} (2022),
  no.~5 177, [\href{http://arxiv.org/abs/2111.07663}{{\tt arXiv:2111.07663}}].

\bibitem{Cassia:2022lfj}
L.~Cassia, N.~Piazzalunga, and M.~Zabzine, {\it {From equivariant volumes to
  equivariant periods}},  {\em Adv. Theor. Math. Phys.} {\bf 27} (2023), no.~4
  961--1064, [\href{http://arxiv.org/abs/2211.13269}{{\tt arXiv:2211.13269}}].

\bibitem{Ruan:2020qua}
Y.~Ruan, Y.~Wen, and Z.~Zhou, {\it {Quantum $K$-theory of toric varieties,
  level structures, and 3d mirror symmetry}},  {\em Advances in Mathematics}
  {\bf 410} (2022) 108770, [\href{http://arxiv.org/abs/2011.07519}{{\tt
  arXiv:2011.07519}}].

\bibitem{Gauntlett:2006vf}
J.~P. Gauntlett, D.~Martelli, J.~Sparks, and S.-T. Yau, {\it {Obstructions to
  the existence of Sasaki-Einstein metrics}},  {\em Commun. Math. Phys.} {\bf
  273} (2007) 803--827, [\href{http://arxiv.org/abs/hep-th/0607080}{{\tt
  hep-th/0607080}}].

\bibitem{Martelli:2006yb}
D.~Martelli, J.~Sparks, and S.-T. Yau, {\it {Sasaki-Einstein manifolds and
  volume minimisation}},  {\em Commun. Math. Phys.} {\bf 280} (2008) 611--673,
  [\href{http://arxiv.org/abs/hep-th/0603021}{{\tt hep-th/0603021}}].

\bibitem{Martelli:2005tp}
D.~Martelli, J.~Sparks, and S.-T. Yau, {\it {The Geometric dual of
  a-maximisation for Toric Sasaki-Einstein manifolds}},  {\em Commun. Math.
  Phys.} {\bf 268} (2006) 39--65,
  [\href{http://arxiv.org/abs/hep-th/0503183}{{\tt hep-th/0503183}}].

\bibitem{Butti:2005vn}
A.~Butti and A.~Zaffaroni, {\it {R-charges from toric diagrams and the
  equivalence of a-maximization and Z-minimization}},  {\em JHEP} {\bf 11}
  (2005) 019, [\href{http://arxiv.org/abs/hep-th/0506232}{{\tt
  hep-th/0506232}}].

\bibitem{Butti:2005ps}
A.~Butti and A.~Zaffaroni, {\it {From toric geometry to quiver gauge theory:
  The Equivalence of a-maximization and Z-minimization}},  {\em Fortsch. Phys.}
  {\bf 54} (2006) 309--316, [\href{http://arxiv.org/abs/hep-th/0512240}{{\tt
  hep-th/0512240}}].

\bibitem{Amariti:2011uw}
A.~Amariti, C.~Klare, and M.~Siani, {\it {The Large N Limit of Toric
  Chern-Simons Matter Theories and Their Duals}},  {\em JHEP} {\bf 10} (2012)
  019, [\href{http://arxiv.org/abs/1111.1723}{{\tt arXiv:1111.1723}}].

\bibitem{Gauntlett:2019pqg}
J.~P. Gauntlett, D.~Martelli, and J.~Sparks, {\it {Fibred GK geometry and
  supersymmetric $AdS$ solutions}},  {\em JHEP} {\bf 11} (2019) 176,
  [\href{http://arxiv.org/abs/1910.08078}{{\tt arXiv:1910.08078}}].

\bibitem{Boido:2022mbe}
A.~Boido, J.~P. Gauntlett, D.~Martelli, and J.~Sparks, {\it {Gravitational
  Blocks, Spindles and GK Geometry}},  {\em Commun. Math. Phys.} {\bf 403}
  (2023), no.~2 917--1003, [\href{http://arxiv.org/abs/2211.02662}{{\tt
  arXiv:2211.02662}}].

\bibitem{Hama:2011ea}
N.~Hama, K.~Hosomichi, and S.~Lee, {\it {SUSY Gauge Theories on Squashed
  Three-Spheres}},  {\em JHEP} {\bf 05} (2011) 014,
  [\href{http://arxiv.org/abs/1102.4716}{{\tt arXiv:1102.4716}}].

\bibitem{Imamura:2011wg}
Y.~Imamura and D.~Yokoyama, {\it {N=2 supersymmetric theories on squashed
  three-sphere}},  {\em Phys. Rev. D} {\bf 85} (2012) 025015,
  [\href{http://arxiv.org/abs/1109.4734}{{\tt arXiv:1109.4734}}].

\bibitem{Benini:2015noa}
F.~Benini and A.~Zaffaroni, {\it {A topologically twisted index for
  three-dimensional supersymmetric theories}},  {\em JHEP} {\bf 07} (2015) 127,
  [\href{http://arxiv.org/abs/1504.03698}{{\tt arXiv:1504.03698}}].

\bibitem{Inglese:2023wky}
M.~Inglese, D.~Martelli, and A.~Pittelli, {\it {The spindle index from
  localization}},  {\em J. Phys. A} {\bf 57} (2024), no.~8 085401,
  [\href{http://arxiv.org/abs/2303.14199}{{\tt arXiv:2303.14199}}].

\bibitem{Colombo:2024mts}
E.~Colombo, S.~M. Hosseini, D.~Martelli, A.~Pittelli, and A.~Zaffaroni, {\it
  {Microstates of Accelerating and Supersymmetric AdS4 Black Holes from the
  Spindle Index}},  {\em Phys. Rev. Lett.} {\bf 133} (2024), no.~3 031603,
  [\href{http://arxiv.org/abs/2404.07173}{{\tt arXiv:2404.07173}}].

\bibitem{Gautason:2025plx}
F.~F. Gautason and J.~van Muiden, {\it {Ensembles in M-theory and Holography}},
   \href{http://arxiv.org/abs/2505.21633}{{\tt arXiv:2505.21633}}.

\bibitem{Bobev:2020egg}
N.~Bobev, A.~M. Charles, K.~Hristov, and V.~Reys, {\it {The Unreasonable
  Effectiveness of Higher-Derivative Supergravity in AdS$_4$ Holography}},
  {\em Phys. Rev. Lett.} {\bf 125} (2020), no.~13 131601,
  [\href{http://arxiv.org/abs/2006.09390}{{\tt arXiv:2006.09390}}].

\bibitem{Bobev:2021oku}
N.~Bobev, A.~M. Charles, K.~Hristov, and V.~Reys, {\it {Higher-derivative
  supergravity, AdS$_{4}$ holography, and black holes}},  {\em JHEP} {\bf 08}
  (2021) 173, [\href{http://arxiv.org/abs/2106.04581}{{\tt arXiv:2106.04581}}].

\bibitem{Hristov:2021qsw}
K.~Hristov, {\it {4d $ \mathcal{N} $ = 2 supergravity observables from
  Nekrasov-like partition functions}},  {\em JHEP} {\bf 02} (2022) 079,
  [\href{http://arxiv.org/abs/2111.06903}{{\tt arXiv:2111.06903}}].

\bibitem{Hristov:2022lcw}
K.~Hristov, {\it {ABJM at finite N via 4d supergravity}},  {\em JHEP} {\bf 10}
  (2022) 190, [\href{http://arxiv.org/abs/2204.02992}{{\tt arXiv:2204.02992}}].

\bibitem{Dedushenko:2014nya}
M.~Dedushenko and E.~Witten, {\it {Some Details On The Gopakumar-Vafa and
  Ooguri-Vafa Formulas}},  {\em Adv. Theor. Math. Phys.} {\bf 20} (2016)
  1--133, [\href{http://arxiv.org/abs/1411.7108}{{\tt arXiv:1411.7108}}].

\bibitem{Bhattacharyya:2012ye}
S.~Bhattacharyya, A.~Grassi, M.~Mariño, and A.~Sen, {\it {A One-Loop Test of
  Quantum Supergravity}},  {\em Class. Quant. Grav.} {\bf 31} (2014) 015012,
  [\href{http://arxiv.org/abs/1210.6057}{{\tt arXiv:1210.6057}}].

\bibitem{Liu:2017vbl}
J.~T. Liu, L.~A. Pando~Zayas, V.~Rathee, and W.~Zhao, {\it {One-Loop Test of
  Quantum Black Holes in anti\textendash{}de Sitter Space}},  {\em Phys. Rev.
  Lett.} {\bf 120} (2018), no.~22 221602,
  [\href{http://arxiv.org/abs/1711.01076}{{\tt arXiv:1711.01076}}].

\bibitem{Hristov:2021zai}
K.~Hristov and V.~Reys, {\it {Factorization of log-corrections in
  AdS$_{4}$/CFT$_{3}$ from supergravity localization}},  {\em JHEP} {\bf 12}
  (2021) 031, [\href{http://arxiv.org/abs/2107.12398}{{\tt arXiv:2107.12398}}].

\bibitem{Bobev:2023dwx}
N.~Bobev, M.~David, J.~Hong, V.~Reys, and X.~Zhang, {\it {A compendium of
  logarithmic corrections in AdS/CFT}},  {\em JHEP} {\bf 04} (2024) 020,
  [\href{http://arxiv.org/abs/2312.08909}{{\tt arXiv:2312.08909}}].

\bibitem{Ooguri:2004zv}
H.~Ooguri, A.~Strominger, and C.~Vafa, {\it {Black hole attractors and the
  topological string}},  {\em Phys. Rev. D} {\bf 70} (2004) 106007,
  [\href{http://arxiv.org/abs/hep-th/0405146}{{\tt hep-th/0405146}}].

\bibitem{Belin:2023efa}
A.~Belin, J.~de~Boer, D.~L. Jafferis, P.~Nayak, and J.~Sonner, {\it
  {Approximate CFTs and Random Tensor Models}},
  \href{http://arxiv.org/abs/2308.03829}{{\tt arXiv:2308.03829}}.

\bibitem{Marino:1998pg}
M.~Mariño and G.~W. Moore, {\it {Counting higher genus curves in a Calabi-Yau
  manifold}},  {\em Nucl. Phys. B} {\bf 543} (1999) 592--614,
  [\href{http://arxiv.org/abs/hep-th/9808131}{{\tt hep-th/9808131}}].

\bibitem{Gopakumar:1998ii}
R.~Gopakumar and C.~Vafa, {\it {M theory and topological strings. 1.}},
  \href{http://arxiv.org/abs/hep-th/9809187}{{\tt hep-th/9809187}}.

\bibitem{Faber:1998gsw}
C.~Faber and R.~Pandharipande, {\it {Hodge integrals and Gromov-Witten
  theory}},  {\em Inventiones mathematicae} {\bf 139} (2000), no.~1 173--199,
  [\href{http://arxiv.org/abs/math/9810173}{{\tt math/9810173}}].

\bibitem{Goldin:2003dis}
R.~Goldin, T.~Holm, and L.~Jeffrey, {\it {Distinguishing the Chambers of the
  Moment Polytope}},  {\em Journal of Symplectic Geometry} {\bf 2} (2003),
  no.~1 109--131, [\href{http://arxiv.org/abs/math/0302265}{{\tt
  math/0302265}}].

\bibitem{Ferrero:2020laf}
P.~Ferrero, J.~P. Gauntlett, J.~M. P\'erez Ipi\~na, D.~Martelli, and J.~Sparks,
  {\it {D3-Branes Wrapped on a Spindle}},  {\em Phys. Rev. Lett.} {\bf 126}
  (2021), no.~11 111601, [\href{http://arxiv.org/abs/2011.10579}{{\tt
  arXiv:2011.10579}}].

\bibitem{Ferrero:2020twa}
P.~Ferrero, J.~P. Gauntlett, J.~M.~P. Ipi\~na, D.~Martelli, and J.~Sparks, {\it
  {Accelerating black holes and spinning spindles}},  {\em Phys. Rev. D} {\bf
  104} (2021), no.~4 046007, [\href{http://arxiv.org/abs/2012.08530}{{\tt
  arXiv:2012.08530}}].

\bibitem{Jafferis:2011zi}
D.~L. Jafferis, I.~R. Klebanov, S.~S. Pufu, and B.~R. Safdi, {\it {Towards the
  F-Theorem: N=2 Field Theories on the Three-Sphere}},  {\em JHEP} {\bf 06}
  (2011) 102, [\href{http://arxiv.org/abs/1103.1181}{{\tt arXiv:1103.1181}}].

\bibitem{Fabbri:1999hw}
D.~Fabbri, P.~Fre', L.~Gualtieri, C.~Reina, A.~Tomasiello, A.~Zaffaroni, and
  A.~Zampa, {\it {3-D superconformal theories from Sasakian seven manifolds:
  New nontrivial evidences for AdS(4) / CFT(3)}},  {\em Nucl. Phys. B} {\bf
  577} (2000) 547--608, [\href{http://arxiv.org/abs/hep-th/9907219}{{\tt
  hep-th/9907219}}].

\bibitem{Gabella:2012rc}
M.~Gabella, D.~Martelli, A.~Passias, and J.~Sparks, {\it {${\cal N}=2$
  supersymmetric AdS$_{4}$ solutions of M-theory}},  {\em Commun. Math. Phys.}
  {\bf 325} (2014) 487--525, [\href{http://arxiv.org/abs/1207.3082}{{\tt
  arXiv:1207.3082}}].

\bibitem{Pope:2003jp}
C.~N. Pope and N.~P. Warner, {\it {A Dielectric flow solution with maximal
  supersymmetry}},  {\em JHEP} {\bf 04} (2004) 011,
  [\href{http://arxiv.org/abs/hep-th/0304132}{{\tt hep-th/0304132}}].

\bibitem{Bena:2004jw}
I.~Bena and N.~P. Warner, {\it {A Harmonic family of dielectric flow solutions
  with maximal supersymmetry}},  {\em JHEP} {\bf 12} (2004) 021,
  [\href{http://arxiv.org/abs/hep-th/0406145}{{\tt hep-th/0406145}}].

\bibitem{deWit:1986oxb}
B.~de~Wit and H.~Nicolai, {\it {The Consistency of the S**7 Truncation in D=11
  Supergravity}},  {\em Nucl. Phys. B} {\bf 281} (1987) 211--240.

\bibitem{Gauntlett:2007ma}
J.~P. Gauntlett and O.~Varela, {\it {Consistent Kaluza-Klein reductions for
  general supersymmetric AdS solutions}},  {\em Phys. Rev. D} {\bf 76} (2007)
  126007, [\href{http://arxiv.org/abs/0707.2315}{{\tt arXiv:0707.2315}}].

\bibitem{Gauntlett:2009zw}
J.~P. Gauntlett, S.~Kim, O.~Varela, and D.~Waldram, {\it {Consistent
  supersymmetric Kaluza-Klein truncations with massive modes}},  {\em JHEP}
  {\bf 04} (2009) 102, [\href{http://arxiv.org/abs/0901.0676}{{\tt
  arXiv:0901.0676}}].

\bibitem{Cassani:2011fu}
D.~Cassani and P.~Koerber, {\it {Tri-Sasakian consistent reduction}},  {\em
  JHEP} {\bf 01} (2012) 086, [\href{http://arxiv.org/abs/1110.5327}{{\tt
  arXiv:1110.5327}}].

\bibitem{Nicolai:2011cy}
H.~Nicolai and K.~Pilch, {\it {Consistent Truncation of d = 11 Supergravity on
  AdS$_4 \times S^7$}},  {\em JHEP} {\bf 03} (2012) 099,
  [\href{http://arxiv.org/abs/1112.6131}{{\tt arXiv:1112.6131}}].

\bibitem{Cassani:2012pj}
D.~Cassani, P.~Koerber, and O.~Varela, {\it {All homogeneous N=2 M-theory
  truncations with supersymmetric AdS4 vacua}},  {\em JHEP} {\bf 11} (2012)
  173, [\href{http://arxiv.org/abs/1208.1262}{{\tt arXiv:1208.1262}}].

\bibitem{Marino:2011nm}
M.~Mariño, {\it {Lectures on localization and matrix models in supersymmetric
  Chern-Simons-matter theories}},  {\em J. Phys. A} {\bf 44} (2011) 463001,
  [\href{http://arxiv.org/abs/1104.0783}{{\tt arXiv:1104.0783}}].

\bibitem{BenettiGenolini:2019jdz}
P.~Benetti~Genolini, J.~M. Perez Ipi\~na, and J.~Sparks, {\it {Localization of
  the action in AdS/CFT}},  {\em JHEP} {\bf 10} (2019) 252,
  [\href{http://arxiv.org/abs/1906.11249}{{\tt arXiv:1906.11249}}].

\bibitem{Hosseini:2019iad}
S.~M. Hosseini, K.~Hristov, and A.~Zaffaroni, {\it {Gluing gravitational blocks
  for AdS black holes}},  {\em JHEP} {\bf 12} (2019) 168,
  [\href{http://arxiv.org/abs/1909.10550}{{\tt arXiv:1909.10550}}].

\bibitem{Bobev:2020pjk}
N.~Bobev, A.~M. Charles, and V.~S. Min, {\it {Euclidean black saddles and
  AdS$_{4}$ black holes}},  {\em JHEP} {\bf 10} (2020) 073,
  [\href{http://arxiv.org/abs/2006.01148}{{\tt arXiv:2006.01148}}].

\bibitem{Panerai:2020boq}
R.~Panerai, A.~Pittelli, and K.~Polydorou, {\it {Topological Correlators and
  Surface Defects from Equivariant Cohomology}},  {\em JHEP} {\bf 09} (2020)
  185, [\href{http://arxiv.org/abs/2006.06692}{{\tt arXiv:2006.06692}}].

\bibitem{BenettiGenolini:2023ndb}
P.~Benetti~Genolini, J.~P. Gauntlett, and J.~Sparks, {\it {Equivariant
  localization for AdS/CFT}},  {\em JHEP} {\bf 02} (2024) 015,
  [\href{http://arxiv.org/abs/2308.11701}{{\tt arXiv:2308.11701}}].

\bibitem{Hristov:2024cgj}
K.~Hristov, {\it {Equivariant localization and gluing rules in 4d
  $\mathcal{N}=2$ higher derivative supergravity}},
  \href{http://arxiv.org/abs/2406.18648}{{\tt arXiv:2406.18648}}.

\bibitem{Cassani:2024kjn}
D.~Cassani, A.~Ruip\'erez, and E.~Turetta, {\it {Localization of the 5D
  supergravity action and Euclidean saddles for the black hole index}},
  \href{http://arxiv.org/abs/2409.01332}{{\tt arXiv:2409.01332}}.

\bibitem{Jafferis:2010un}
D.~L. Jafferis, {\it {The Exact Superconformal R-Symmetry Extremizes Z}},  {\em
  JHEP} {\bf 05} (2012) 159, [\href{http://arxiv.org/abs/1012.3210}{{\tt
  arXiv:1012.3210}}].

\bibitem{Martelli:2011fu}
D.~Martelli, A.~Passias, and J.~Sparks, {\it {The gravity dual of
  supersymmetric gauge theories on a squashed three-sphere}},  {\em Nucl. Phys.
  B} {\bf 864} (2012) 840--868, [\href{http://arxiv.org/abs/1110.6400}{{\tt
  arXiv:1110.6400}}].

\bibitem{Katmadas:2015ima}
S.~Katmadas and A.~Tomasiello, {\it {AdS$_{4}$ black holes from M-theory}},
  {\em JHEP} {\bf 12} (2015) 111, [\href{http://arxiv.org/abs/1509.00474}{{\tt
  arXiv:1509.00474}}].

\bibitem{Freedman:2013oja}
D.~Z. Freedman and S.~S. Pufu, {\it {The holography of $F$-maximization}},
  {\em JHEP} {\bf 03} (2014) 135, [\href{http://arxiv.org/abs/1302.7310}{{\tt
  arXiv:1302.7310}}].

\bibitem{Fuji:2011km}
H.~Fuji, S.~Hirano, and S.~Moriyama, {\it {Summing Up All Genus Free Energy of
  ABJM Matrix Model}},  {\em JHEP} {\bf 08} (2011) 001,
  [\href{http://arxiv.org/abs/1106.4631}{{\tt arXiv:1106.4631}}].

\bibitem{Marino:2011eh}
M.~Mariño and P.~Putrov, {\it {ABJM theory as a Fermi gas}},  {\em J. Stat.
  Mech.} {\bf 1203} (2012) P03001, [\href{http://arxiv.org/abs/1110.4066}{{\tt
  arXiv:1110.4066}}].

\bibitem{Nosaka:2015iiw}
T.~Nosaka, {\it {Instanton effects in ABJM theory with general R-charge
  assignments}},  {\em JHEP} {\bf 03} (2016) 059,
  [\href{http://arxiv.org/abs/1512.02862}{{\tt arXiv:1512.02862}}].

\bibitem{Hatsuda:2016uqa}
Y.~Hatsuda, {\it {ABJM on ellipsoid and topological strings}},  {\em JHEP} {\bf
  07} (2016) 026, [\href{http://arxiv.org/abs/1601.02728}{{\tt
  arXiv:1601.02728}}].

\bibitem{Chester:2021gdw}
S.~M. Chester, R.~R. Kalloor, and A.~Sharon, {\it {Squashing, Mass, and
  Holography for 3d Sphere Free Energy}},  {\em JHEP} {\bf 04} (2021) 244,
  [\href{http://arxiv.org/abs/2102.05643}{{\tt arXiv:2102.05643}}].

\bibitem{Kim:2006qu}
N.~Kim and J.-D. Park, {\it {Comments on AdS(2) solutions of D=11
  supergravity}},  {\em JHEP} {\bf 09} (2006) 041,
  [\href{http://arxiv.org/abs/hep-th/0607093}{{\tt hep-th/0607093}}].

\bibitem{Gauntlett:2007ts}
J.~P. Gauntlett and N.~Kim, {\it {Geometries with Killing Spinors and
  Supersymmetric AdS Solutions}},  {\em Commun. Math. Phys.} {\bf 284} (2008)
  897--918, [\href{http://arxiv.org/abs/0710.2590}{{\tt arXiv:0710.2590}}].

\bibitem{Cacciatori:2009iz}
S.~L. Cacciatori and D.~Klemm, {\it {Supersymmetric AdS(4) black holes and
  attractors}},  {\em JHEP} {\bf 01} (2010) 085,
  [\href{http://arxiv.org/abs/0911.4926}{{\tt arXiv:0911.4926}}].

\bibitem{Halmagyi:2013sla}
N.~Halmagyi, M.~Petrini, and A.~Zaffaroni, {\it {BPS black holes in $AdS_{4}$
  from M-theory}},  {\em JHEP} {\bf 08} (2013) 124,
  [\href{http://arxiv.org/abs/1305.0730}{{\tt arXiv:1305.0730}}].

\bibitem{Hristov:2018spe}
K.~Hristov, S.~Katmadas, and C.~Toldo, {\it {Rotating attractors and BPS black
  holes in $AdS_4$}},  {\em JHEP} {\bf 01} (2019) 199,
  [\href{http://arxiv.org/abs/1811.00292}{{\tt arXiv:1811.00292}}].

\bibitem{Hristov:2019mqp}
K.~Hristov, S.~Katmadas, and C.~Toldo, {\it {Matter-coupled supersymmetric
  Kerr-Newman-AdS$_4$ black holes}},  {\em Phys. Rev. D} {\bf 100} (2019),
  no.~6 066016, [\href{http://arxiv.org/abs/1907.05192}{{\tt
  arXiv:1907.05192}}].

\bibitem{Couzens:2021cpk}
C.~Couzens, {\it {A tale of (M)2 twists}},  {\em JHEP} {\bf 03} (2022) 078,
  [\href{http://arxiv.org/abs/2112.04462}{{\tt arXiv:2112.04462}}].

\bibitem{Ferrero:2021etw}
P.~Ferrero, J.~P. Gauntlett, and J.~Sparks, {\it {Supersymmetric spindles}},
  {\em JHEP} {\bf 01} (2022) 102, [\href{http://arxiv.org/abs/2112.01543}{{\tt
  arXiv:2112.01543}}].

\bibitem{Hristov:2023rel}
K.~Hristov and M.~Suh, {\it {Spindle black holes in AdS$_{4} \times$
  SE$_{7}$}},  {\em JHEP} {\bf 10} (2023) 141,
  [\href{http://arxiv.org/abs/2307.10378}{{\tt arXiv:2307.10378}}].

\bibitem{Benini:2015eyy}
F.~Benini, K.~Hristov, and A.~Zaffaroni, {\it {Black hole microstates in
  AdS$_{4}$ from supersymmetric localization}},  {\em JHEP} {\bf 05} (2016)
  054, [\href{http://arxiv.org/abs/1511.04085}{{\tt arXiv:1511.04085}}].

\bibitem{Hosseini:2022vho}
S.~M. Hosseini and A.~Zaffaroni, {\it {The large N limit of topologically
  twisted indices: a direct approach}},  {\em JHEP} {\bf 12} (2022) 025,
  [\href{http://arxiv.org/abs/2209.09274}{{\tt arXiv:2209.09274}}].

\bibitem{Bobev:2023lkx}
N.~Bobev, J.~Hong, and V.~Reys, {\it {Large N partition functions of 3d
  holographic SCFTs}},  {\em JHEP} {\bf 08} (2023) 119,
  [\href{http://arxiv.org/abs/2304.01734}{{\tt arXiv:2304.01734}}].

\bibitem{Bobev:2024mqw}
N.~Bobev, S.~Choi, J.~Hong, and V.~Reys, {\it {Superconformal indices of 3d $
  \mathcal{N} $ = 2 SCFTs and holography}},  {\em JHEP} {\bf 10} (2024) 121,
  [\href{http://arxiv.org/abs/2407.13177}{{\tt arXiv:2407.13177}}].

\bibitem{Sethi:1996es}
S.~Sethi, C.~Vafa, and E.~Witten, {\it {Constraints on low dimensional string
  compactifications}},  {\em Nucl. Phys. B} {\bf 480} (1996) 213--224,
  [\href{http://arxiv.org/abs/hep-th/9606122}{{\tt hep-th/9606122}}].

\bibitem{Witten:1996md}
E.~Witten, {\it {On flux quantization in M theory and the effective action}},
  {\em J. Geom. Phys.} {\bf 22} (1997) 1--13,
  [\href{http://arxiv.org/abs/hep-th/9609122}{{\tt hep-th/9609122}}].

\bibitem{Lauria:2020rhc}
E.~Lauria and A.~Van~Proeyen, {\em {${\cal N}=2$ Supergravity in $D=4,5,6$
  Dimensions}}, vol.~966.
\newblock Springer Cham, 3, 2020.

\bibitem{Bergshoeff:1980is}
E.~Bergshoeff, M.~de~Roo, and B.~de~Wit, {\it {Extended Conformal
  Supergravity}},  {\em Nucl. Phys. B} {\bf 182} (1981) 173--204.

\bibitem{Denef:2000nb}
F.~Denef, {\it {Supergravity flows and D-brane stability}},  {\em JHEP} {\bf
  08} (2000) 050, [\href{http://arxiv.org/abs/hep-th/0005049}{{\tt
  hep-th/0005049}}].

\bibitem{deBoer:2006vg}
J.~de~Boer, M.~C.~N. Cheng, R.~Dijkgraaf, J.~Manschot, and E.~Verlinde, {\it {A
  Farey Tail for Attractor Black Holes}},  {\em JHEP} {\bf 11} (2006) 024,
  [\href{http://arxiv.org/abs/hep-th/0608059}{{\tt hep-th/0608059}}].

\bibitem{LopesCardoso:1998tkj}
G.~Lopes~Cardoso, B.~de~Wit, and T.~Mohaupt, {\it {Corrections to macroscopic
  supersymmetric black hole entropy}},  {\em Phys. Lett. B} {\bf 451} (1999)
  309--316, [\href{http://arxiv.org/abs/hep-th/9812082}{{\tt hep-th/9812082}}].

\bibitem{LopesCardoso:1999cv}
G.~Lopes~Cardoso, B.~de~Wit, and T.~Mohaupt, {\it {Deviations from the area law
  for supersymmetric black holes}},  {\em Fortsch. Phys.} {\bf 48} (2000)
  49--64, [\href{http://arxiv.org/abs/hep-th/9904005}{{\tt hep-th/9904005}}].

\bibitem{LopesCardoso:1999fsj}
G.~Lopes~Cardoso, B.~de~Wit, and T.~Mohaupt, {\it {Macroscopic entropy formulae
  and nonholomorphic corrections for supersymmetric black holes}},  {\em Nucl.
  Phys. B} {\bf 567} (2000) 87--110,
  [\href{http://arxiv.org/abs/hep-th/9906094}{{\tt hep-th/9906094}}].

\bibitem{BenettiGenolini:2024kyy}
P.~Benetti~Genolini, J.~P. Gauntlett, Y.~Jiao, A.~L\"uscher, and J.~Sparks,
  {\it {Localization and attraction}},  {\em JHEP} {\bf 05} (2024) 152,
  [\href{http://arxiv.org/abs/2401.10977}{{\tt arXiv:2401.10977}}].

\bibitem{Hristov:2022pmo}
K.~Hristov, {\it {The dark (BPS) side of thermodynamics in Minkowski$_{4}$}},
  {\em JHEP} {\bf 09} (2022) 204, [\href{http://arxiv.org/abs/2207.12437}{{\tt
  arXiv:2207.12437}}].

\bibitem{Nekrasov:2003rj}
N.~Nekrasov and A.~Okounkov, {\it {Seiberg-Witten theory and random
  partitions}},  {\em Prog. Math.} {\bf 244} (2006) 525--596,
  [\href{http://arxiv.org/abs/hep-th/0306238}{{\tt hep-th/0306238}}].

\bibitem{Alexandrov:2023wdj}
S.~Alexandrov, M.~Mari\~no, and B.~Pioline, {\it {Resurgence of Refined
  Topological Strings and Dual Partition Functions}},  {\em SIGMA} {\bf 20}
  (2024) 073, [\href{http://arxiv.org/abs/2311.17638}{{\tt arXiv:2311.17638}}].

\bibitem{Hori:2013ika}
K.~Hori and M.~Romo, {\it {Exact Results In Two-Dimensional (2,2)
  Supersymmetric Gauge Theories With Boundary}},
  \href{http://arxiv.org/abs/1308.2438}{{\tt arXiv:1308.2438}}.

\bibitem{Jockers:2012dk}
H.~Jockers, V.~Kumar, J.~M. Lapan, D.~R. Morrison, and M.~Romo, {\it
  {Two-Sphere Partition Functions and Gromov-Witten Invariants}},  {\em Commun.
  Math. Phys.} {\bf 325} (2014) 1139--1170,
  [\href{http://arxiv.org/abs/1208.6244}{{\tt arXiv:1208.6244}}].

\bibitem{Gomis:2012wy}
J.~Gomis and S.~Lee, {\it {Exact Kahler Potential from Gauge Theory and Mirror
  Symmetry}},  {\em JHEP} {\bf 04} (2013) 019,
  [\href{http://arxiv.org/abs/1210.6022}{{\tt arXiv:1210.6022}}].

\bibitem{Closset:2015rna}
C.~Closset, S.~Cremonesi, and D.~S. Park, {\it {The equivariant A-twist and
  gauged linear sigma models on the two-sphere}},  {\em JHEP} {\bf 06} (2015)
  076, [\href{http://arxiv.org/abs/1504.06308}{{\tt arXiv:1504.06308}}].

\bibitem{Benvenuti:2006qr}
S.~Benvenuti, B.~Feng, A.~Hanany, and Y.-H. He, {\it {Counting BPS Operators in
  Gauge Theories: Quivers, Syzygies and Plethystics}},  {\em JHEP} {\bf 11}
  (2007) 050, [\href{http://arxiv.org/abs/hep-th/0608050}{{\tt
  hep-th/0608050}}].

\bibitem{Jockers:2018sfl}
H.~Jockers and P.~Mayr, {\it {A 3d Gauge Theory/Quantum K-Theory
  Correspondence}},  {\em Adv. Theor. Math. Phys.} {\bf 24} (2020), no.~2
  327--457, [\href{http://arxiv.org/abs/1808.02040}{{\tt arXiv:1808.02040}}].

\bibitem{Bergman:2009zh}
O.~Bergman and S.~Hirano, {\it {Anomalous radius shift in AdS(4)/CFT(3)}},
  {\em JHEP} {\bf 07} (2009) 016, [\href{http://arxiv.org/abs/0902.1743}{{\tt
  arXiv:0902.1743}}].

\bibitem{Vafa:1995fj}
C.~Vafa and E.~Witten, {\it {A One loop test of string duality}},  {\em Nucl.
  Phys. B} {\bf 447} (1995) 261--270,
  [\href{http://arxiv.org/abs/hep-th/9505053}{{\tt hep-th/9505053}}].

\bibitem{Duff:1995wd}
M.~J. Duff, J.~T. Liu, and R.~Minasian, {\it {Eleven-dimensional origin of
  string/string duality: a one-loop test*}},  {\em Nucl. Phys. B} {\bf 452}
  (1995) 261--282, [\href{http://arxiv.org/abs/hep-th/9506126}{{\tt
  hep-th/9506126}}].

\bibitem{Aharony:2008gk}
O.~Aharony, O.~Bergman, and D.~L. Jafferis, {\it {Fractional M2-branes}},  {\em
  JHEP} {\bf 11} (2008) 043, [\href{http://arxiv.org/abs/0807.4924}{{\tt
  arXiv:0807.4924}}].

\end{thebibliography}\endgroup

\end{document}